\newtheorem{theorem}{Theorem}[section]
\newtheorem{assumption}{Assumption}[section]
\newtheorem{corollary}[theorem]{Corollary}
\newtheorem{definition}[theorem]{Definition}
\newtheorem{lemma}[theorem]{Lemma}
\newtheorem{proposition}[theorem]{Proposition}
\newtheorem{remark}[theorem]{Remark}
\numberwithin{theorem}{section}
\newcommand{\mc}{\mathcal}
\newcommand{\ra}{\rightarrow}
\newcommand{\mb}{\mathbb}
\DeclareMathOperator{\es}{ess\,sup \ }
\title[]{Stability of exponential utility maximization with respect to market perturbations}\thanks{We would like to thank the referees for their feedback, which has helped us improve our paper greatly.}
\author[]{Erhan Bayraktar}\thanks{The authors are supported in part by the National Science Foundation under an applied mathematics research grant and a Career grant, DMS-0906257 and DMS-0955463, respectively, in part by the Susan M. Smith Professorship, and in part by the NDSEG Fellowship Program of the Department of Defense.} 
\address[Erhan Bayraktar]{Department of Mathematics, University of Michigan, 530 Church Street, Ann Arbor, MI 48104, USA}
\email{erhan@umich.edu}
\author[]{Ross Kravitz}
\address[Ross Kravitz]{Department of Mathematics, University of Michigan, 530 Church Street, Ann Arbor, MI 48104, USA}
\email{ross.kravitz@gmail.com}
\date{December 11, 2012}
\numberwithin{equation}{section}
\begin{document}
\begin{abstract} We investigate the continuity of expected exponential utility maximization with respect to perturbation of the Sharpe ratio of markets.  By focusing only on continuity, we impose weaker regularity conditions than those found in the literature.  Specifically, we require, in addition to the $V$-compactness hypothesis of \cite{MR2438002}, a local $bmo$ hypothesis, a condition which is essentially implicit in the setting of \cite{MR2438002}.  For markets of the form $S = M + \int \lambda d\langle M \rangle$, these conditions are simultaneously implied by the existence of a uniform bound on the norm of $\lambda \cdot M$ in a suitable $bmo$ space.
\end{abstract}

\maketitle

\section{Introduction}\label{exp_int}

In this paper we provide stability results for the problem of maximizing expected exponential utility.  We give conditions under which convergence of markets implies the convergence of optimal terminal wealths as well as their expected utility.  Specifically, for markets of the form $S = M + \int \lambda d \langle M \rangle$, our regularity condition consists of two complementary hypotheses: the first, the familiar $V$-compactness assumption of \cite{MR2438002}, is used to establish lower semi-continuity, while the second, a new condition related to a local $bmo$ hypothesis, is used to establish upper semi-continuity.  Both the $V$-compactness and local $bmo$ conditions originally arose as consequences of our original regularity condition, a uniform bound on the $bmo_2$ norm of $\lambda \cdot M$.  This type of hypothesis is a natural one in mathematical finance and has, for example, appeared in \cite{MR1305680} and \cite{MR1655145}, where it was used in connection with establishing closedness properties of the space of attainable terminal wealths.  In the current setting, the $bmo$ hypothesis allows us to find wealth processes which are simultaneously near optimal and bounded from below.  This is useful because the optimal wealth process is in general unbounded, meaning that it may go arbitrarily far into the red.  With this approximation result in hand, we may use the stability results of \cite{MR2438002} for utility functions on $\mb{R}_+$ to obtain convergence under $bmo$ regularity.  From there, we can prove our most general continuity result, which builds upon the $bmo$ arguments to establish upper semi-continuity.

In comparison with \cite{MR2438002}, dealing with the stability problem for utility functions on $\mb{R}_+$, our regularity assumption is of course stronger than the notion of $V$-compactness alone, since we impose the additional local $bmo$ condition.  We will show, however, that this condition is not especially stringent, and that on some level it is already implicit in the setting of \cite{MR2438002}: indeed, the basic purpose of the assumption is to guarantee that over all markets, the optimal expected utility $E \left[ U \left( \widehat{X}^n_T \right) \right]$ may be uniformly approximated  by payoffs of the form $E \left[ U \left(\widehat{X}^{(n,k)}_T \right) \right]$, with the processes $\widehat{X}^{(n,k)}$ satisfying $\underset{0 \leq t \leq T}{\sup} U^- \left( \widehat{X}^{(n,k)}_t \right) \in L^\infty$.  For utility functions defined on $\mb{R}_+$, this property is guaranteed as soon as there is $V$-compactness.

In comparison with two other extant stability results in the literature, we see that our regularity hypothesis is weaker than in either of those papers, although they provide additional convergence results that are beyond the scope of this article.  In \cite{FreiBSDE}, the stability of quadratic BSDE's is studied with respect to, among other things, perturbation of the driver.  From the natural connection between this class of BSDE's and exponential utility maximization (see \cite{MR2465489}), the results from \cite{FreiBSDE} allow one to recover stability results about exponential utility maximization, but only under the more restrictive assumption of a uniform bound on $\lambda \cdot M$ in the Hardy Space $H^\infty$, i.e. $||\lambda \cdot M||_{H^\infty} \triangleq ||\lambda^2 \cdot \langle M \rangle_T||_{L^\infty}$.  Additionally, it is assumed that the filtration is continuous.%, so that all local martingales have continuous paths.  With these assumptions, Frei proves a general stability result for quadratic BSDE's that can be applied to other problems like stability with respect to trading constraints.

In \cite{Zit09c}, very strong convergence results are obtained in a narrow class of utility maximization problems, with equilibrium problems in mind.  In order to use PDE methods, the setting is exclusively Markov, and the assumptions on market convergence are quite stringent: given a sequence $(\lambda^n)_{n=1,\ldots,\infty}$ of drift parameters, essentially $\lambda^n(t) \ra \lambda^\infty(t)$ in $L^\infty([0,T])$.  These strong hypotheses are necessary to deduce quantitative estimates about the stability of exponential utility maximization.

Here, we take a different approach.  We consider simply the continuity of exponential utility maximization in a general filtration, and are interested in finding minimal regularity conditions under which continuity will hold true.  The outline of the paper is as follows.  In Section \ref{exp_setup}, we provide the necessary background definitions to state our main result.  In Section \ref{exp_bmo}, we present some preliminaries on the theory of bmo martingales.  In Section \ref{exp_approx}, we apply this theory to give a proof of an intermediate result.  In Section \ref{exp_prf}, we prove the main results of the paper, using the results of Section \ref{exp_approx} to establish upper semi-continuity.  Finally, in Section \ref{exp_ass}, we discuss our second assumption in the context of \cite{MR2438002} and discuss its economic significance, in connection with the opportunity process of \cite{MR2892960} and \cite{MR2721695}; the necessity of the first assumption is also addressed.  We close with two appendices, \ref{exp_app1} and \ref{exp_app2}, which contain auxiliary results.

\section{Setup and Main Results}\label{exp_setup}

Let $(\Omega,\mc{F},P,(\mc{F}_t)_{t \in [0,T]})$ be a filtered probability space satisfying the usual conditions.  We assume that $\mc{F}_T = \mc{F}$.  Let $M$ be a continuous local martingale, and let 
\[\Lambda \triangleq \left \{ \lambda : \lambda \text{ is a predictable process satisfying } \int_0^T \lambda_u^2 d \langle M \rangle_u < \infty \ \right \}.
\]  
For $\lambda \in \Lambda$, define 
\begin{equation}\label{eq1}
S_t^\lambda \triangleq M_t + \int_0^t \lambda_u d \langle M \rangle_u,
\end{equation}
where $ \langle M \rangle = \left(\langle M \rangle_t\right)_{t \in [0,T]}$ denotes the quadratic variation of the local martingale $M$.
Along with a num\'{e}raire bond, identically equal to $1$, each $S^\lambda$ defines a stock market, in which $S^\lambda$ is interpreted as the discounted price of a tradeable asset.  
%We assume that every market under consideration satisfies NFLVR, which is equivalent to the existence of an equivalent local martingale measure.  It is proven in \cite{MR1384360} that all continuous market models satisfying NFLVR have the specific form of \eqref{eq1}.

We let $S^n \triangleq M + \int \lambda^n \ d\langle M \rangle$, $n=1,\ldots,\infty$, describe a sequence of markets, and 
\[
\begin{split}
Z^n  & \triangleq \mc{E}(-\lambda^n \cdot M) \\
& = \exp \left( - \int_0^\cdot \lambda^n dM - \frac{1}{2} \int_0^\cdot (\lambda^n)^2 d \langle M \rangle \right)
\end{split}
\]
 is the $nth$ minimal martingale measure.

In the exponential utility maximization problem, an agent with utility function $U(x) \triangleq -\exp(-x)$ seeks to maximize $E \left[U(x + X_T) \right]$ over a set of admissible wealth processes $X$ that start from initial capital zero.  We set $V(y) \triangleq y \log y - y$ for $y>0$, so that $V$ is the convex dual of $U$.  To define our regularity assumptions, we need the notion of $bmo$ martingales.

\begin{definition} Let $1 \leq p<\infty$.  A not necessarily continuous martingale $R$ is in $bmo_p$, with $||R||_{bmo_p} = r$, if there is a minimal constant $r$ such that
\[ E \left[ | R_T - R_{\tau-}|^p \ | \ \mc{F}_\tau \right]^{\frac{1}{p}} \leq r,
\]
for all stopping times $\tau$ taking values in $[0,T]$.  We will occasionally abbreviate $bmo_1$ to $bmo$.
\end{definition}

For $p=2$, if $||R||_{bmo_2} < \infty$, then $||R||_{bmo_2}$ also has the representation 
\[
\underset{\tau}{\es} \  E \left[ \langle R \rangle_T - \langle R \rangle_{\tau-} \ | \ \mc{F}_\tau \right]^{\frac{1}{2}}.
\]
The equivalence of this representation is derived from considering the martingale $R^2 - \langle R \rangle$. 

Now we can state our two-pronged regularity assumption on a sequence of markets:

\begin{assumption}\label{regass1}[Regularity Assumption 1: $V$-compactness] The set \\ $\left \{ V \left( Z^n_T \right) : n \in \mathbb{N} \right \}$ is uniformly integrable.
\end{assumption}

\begin{assumption}\label{regass2}[Regularity Assumption 2] There exists a sequence of stopping times $(\tau_j) \uparrow T$ such that $\underset{n}{\sup} \ ||(\lambda^n \cdot M)^{\tau_j}||_{bmo_2} < \infty$ for each $j$.
\end{assumption}

We continue on with our description of the utility maximization problem.  In comparison to utilities on $\mb{R}_+$, defining the right notion of admissibility is more complicated when $U$ is finite-valued over the whole real line.  We state here the most common definition of admissibility at this level of generality, for which we refer to \cite{MR2014244}.  Let $\mc{M}^n$ denote the set of equivalent local martingale measures for $S^n$.  

%\begin{remark}\label{martremark} Under the assumption that $V(Z^n_T) \in L^1$, immediately implied by $V$-compactness, it is known (see, e.g. Ex 1.16, p. 58 of \cite{STMAZ.00635670}) that $Z^n \in \mc{H}^1$, which means that $Z^n \in \mc{M}^n$, the set of equivalent local martingale measures for $S^n$, as opposed to $Z^n$ being a strict local martingale.  In particular, $\mc{M}^n$ will be nonempty.  
%\end{remark}

\begin{definition}\label{admdef} For any $n$, let $H$ be predictable and $S^n$-integrable.  We say that $H \cdot S^n \in \mc{A}^n$ if $H \cdot S^n$ is a $\mb{Q}$-martingale for every $\mb{Q} \in \mc{M}^n$ with finite entropy, that is, $E \left[V \left( \frac{d \mb{Q}}{dP} \right) \right] < \infty$.
\end{definition}

The primal value function $u^n$, $n=1,\ldots,\infty$, is defined as
\[
u^n(x) \triangleq \underset{X \in \mc{A}^n}{\sup} E \left[ U(x + X_T) \right], x \in \mb{R}.
\]

\noindent In the stability problem for utility maximization, we seek assumptions on the processes $Z^n$ that ensure the convergence of $u^n(\cdot)$ towards $u^\infty(\cdot)$.  We can now state the main results of the paper:

\begin{theorem}\label{fin.mainthm}  Suppose that $Z_T^n \ra Z_T^\infty$ in probability, $Z^\infty$ is a martingale, and that Assumptions \ref{regass1} and \ref{regass2} are satisfied.  Then $u^n(\cdot) \ra u^\infty(\cdot)$ pointwise, hence locally uniformly.
\end{theorem}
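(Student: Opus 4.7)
My plan is to prove pointwise convergence $u^n(x) \to u^\infty(x)$ for each $x \in \mb{R}$ by separately establishing $\liminf_n u^n(x) \geq u^\infty(x)$ (lower semi-continuity) and $\limsup_n u^n(x) \leq u^\infty(x)$ (upper semi-continuity). Since each $u^n$ is concave on $\mb{R}$ and the putative limit $u^\infty$ is finite, pointwise convergence automatically upgrades to locally uniform convergence by the standard compactness result for pointwise-convergent sequences of finite concave functions. Without loss of generality I may therefore fix $x \in \mb{R}$ and work with it throughout.

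For lower semi-continuity I would rely on Assumption \ref{regass1} together with the convex-duality formulation. The natural dual objects are densities of equivalent local martingale measures with finite entropy, penalized by $E[V(y \, dQ/dP)]$. Since $Z_T^n \to Z_T^\infty$ in probability and $\{V(Z_T^n)\}$ is uniformly integrable, Vitali's theorem gives $E[V(y Z_T^n)] \to E[V(y Z_T^\infty)]$ for every $y>0$; more generally I would show that the dual value functions satisfy $\limsup_n v^n(y) \leq v^\infty(y)$ by approximating an arbitrary finite-entropy dual element in market $\infty$ by dual elements in markets $n$, using the convergence of the minimal densities together with $V$-compactness to transfer the entropy bound. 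This is essentially the route of \cite{MR2438002}. Translating back to the primal side via $u^n(x) = \inf_{y>0}\bigl(v^n(y) + xy\bigr)$, dual upper semi-continuity converts into primal lower semi-continuity.

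For upper semi-continuity the obstruction is that optimal wealth processes in the exponential problem are unbounded below, so the $\mb{R}_+$ stability theory of \cite{MR2438002} does not apply directly. Here Assumption \ref{regass2} and the intermediate approximation result of Section \ref{exp_approx} enter: for each stopping time $\tau_j$, the uniform $bmo_2$ bound on $(\lambda^n \cdot M)^{\tau_j}$ should produce, uniformly in $n$, near-optimal wealth processes on $[0,\tau_j]$ that are bounded from below by a constant depending only on $j$. Once the wealths are bounded below, a constant shift reduces the problem to utility maximization on $\mb{R}_+$, where $V$-compactness alone yields convergence of the value functions along $n$ by \cite{MR2438002}. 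A diagonal argument in $(n,j)$, combined with monotone convergence as $\tau_j \uparrow T$ to recover the original value $u^n(x)$, then yields $\limsup_n u^n(x) \leq u^\infty(x)$.

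The main obstacle I expect is this last step: controlling the approximation error uniformly in $n$ as $\tau_j \uparrow T$ and simultaneously verifying that the bounded-below wealth processes constructed via the $bmo$ argument remain admissible in the sense of Definition \ref{admdef}, i.e.\ are true martingales under every finite-entropy equivalent martingale measure. The $bmo$ tools of Section \ref{exp_bmo}, in particular John--Nirenberg-type exponential integrability bounds for $\mc{E}(-\lambda^n \cdot M)^{\tau_j}$ and the fact that $bmo_2$ bounds are preserved under measure change, should be exactly what is needed to verify both the admissibility and the uniform control of the exponential payoffs on $[0,\tau_j]$.
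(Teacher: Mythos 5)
Your top-level decomposition (lower plus upper semi-continuity, with concavity upgrading pointwise to locally uniform convergence) matches the paper, which deduces Theorem \ref{fin.mainthm} from Lemma \ref{lsclemma} and Proposition \ref{usclemma}. However, there are two genuine gaps in the execution.

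First, your route to lower semi-continuity runs the duality in the wrong direction. From $\limsup_n v^n(y) \leq v^\infty(y)$ and $u^n(x) = \inf_{y>0}\left(v^n(y)+xy\right)$ you get $\limsup_n u^n(x) \leq v^\infty(y)+xy$ for each fixed $y$, and hence $\limsup_n u^n(x) \leq u^\infty(x)$: dual \emph{upper} semi-continuity yields primal \emph{upper} semi-continuity, not the bound $\liminf_n u^n(x) \geq u^\infty(x)$ that you are after (the infimum over $y$ does not commute with $\liminf_n$ in the direction you would need). The paper's Lemma \ref{lsclemma} instead works entirely on the primal side: truncate $U$ to the half-line utilities $U^{(k)}$, invoke the stability theorem of \cite{MR2438002} to get $u^{(n,k)}(x) \ra u^{(\infty,k)}(x)$ for each fixed $k$ (Lemma \ref{exp_lemma3}, using only Assumption \ref{regass1}), and then use $u^n = \sup_k u^{(n,k)}$ together with $\liminf_n \sup_k \geq \sup_k \liminf_n$. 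No uniformity in $n$ and no transfer of dual elements is needed for this direction.

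Second, your localization argument for upper semi-continuity omits the step that carries all the weight. Knowing $u^{n,j} \ra u^{\infty,j}$ for each fixed horizon $\tau_j$ (which is indeed Theorem \ref{mainthm} applied on $[0,\tau_j]$) and $u^{\infty,j} \uparrow u^\infty$ (which uses the hypothesis that $Z^\infty$ is a martingale, via the Class D property of $V(Z^\infty)$ --- a hypothesis your sketch never invokes) does not by itself control $\limsup_n u^n$, since $u^n \geq u^{n,j}$; one must bound the gap between $u^n$ and its truncated versions \emph{uniformly in $n$}. In the paper this uniform interchange of limits is exactly Proposition \ref{lastprop} ($u^{(n,k)} \ra u^n$ as $k\ra\infty$ uniformly over $n$), and its proof is where the $bmo$ hypothesis is genuinely consumed: Proposition \ref{prop3} gives uniform integrability of $\left\{\exp\left(-\widehat{X}^n_\tau\right) : n \in \mb{N}, \tau \in \mc{T}\right\}$ via the duality identity with $\widehat{Z}^n$ and the $\mc{R}_{LLogL}$ bounds, and Lemma \ref{exp_lemma5} gives boundedness in probability of the maximal functions via the weighted norm inequality of Proposition \ref{exp_prop4}. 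The admissibility issue you flag as the main obstacle is not where the difficulty lies: the approximating processes $\widehat{X}^{(n,i,-k)}$ are stopped, hence bounded, and inherit the martingale property under every finite-entropy measure from $\widehat{X}^n$. The missing ingredient is the uniform-in-$n$ error estimate, without which the diagonal argument in $(n,j)$ cannot close.
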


\begin{theorem}\label{fin.mainthm2} Suppose that $Z_T^n \ra Z_T^\infty$ in probability, $Z^\infty$ is a martingale, and that Assumptions \ref{regass1} and \ref{regass2} are satisfied. Then for all $x$ the optimal terminal wealths $\widehat{X}^n_T(x)$ converge to $\widehat{X}^\infty_T(x)$ in probability as $n \ra \infty$.
\end{theorem}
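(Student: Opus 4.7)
My plan is to derive Theorem \ref{fin.mainthm2} from the value-function convergence of Theorem \ref{fin.mainthm}, together with the classical duality for exponential utility maximization and a strict-convexity argument on the dual side. By translation invariance of the exponential utility, the optimal terminal wealth $\widehat{X}^n_T(x)$ does not in fact depend on $x$, so I fix $x = 0$. Under Assumption \ref{regass1} the minimal entropy martingale measure $\widehat{Q}^n$ for $S^n$ is the unique minimizer of the dual problem, and classical duality (see, e.g., \cite{MR2014244}) yields
\[
\widehat{X}^n_T(0) = H(\widehat{Q}^n \mid P) - \log \frac{d\widehat{Q}^n}{dP}, \qquad u^n(0) = -\exp\!\bigl(-H(\widehat{Q}^n \mid P)\bigr).
\]
Theorem \ref{fin.mainthm} gives $u^n(0) \to u^\infty(0)$, hence $H(\widehat{Q}^n \mid P) \to H(\widehat{Q}^\infty \mid P)$, so the problem reduces to showing $d\widehat{Q}^n/dP \to d\widehat{Q}^\infty/dP$ in probability.

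For the latter, I would extract a Koml\'os subsequence: bounded entropy bounds $\{d\widehat{Q}^n/dP\}$ in $L\log L$, so the sequence is uniformly integrable and some subsequence of forward convex combinations $\widetilde{Z}^k \in \conv\{d\widehat{Q}^n/dP : n \geq k\}$ converges almost surely to a limit $Z^*$. I would then show that $Z^* = d\widehat{Q}^\infty/dP$ via three ingredients: (i) $Z^*$ is a positive probability density; (ii) $Z^* \cdot P$ is a martingale measure for $S^\infty$, handled by an argument analogous to the lower semi-continuity step of Theorem \ref{fin.mainthm} using $Z^n_T \to Z^\infty_T$ and Assumption \ref{regass1}; and (iii) lower semi-continuity of the entropy functional, together with convergence of $H(\widehat{Q}^n \mid P)$, forces $H(Z^*\cdot P \mid P) \leq H(\widehat{Q}^\infty \mid P)$, so uniqueness of the minimal entropy measure gives $Z^* = d\widehat{Q}^\infty/dP$.

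To upgrade this convex-combination convergence to convergence in probability of the original sequence, note that every subsequence of $\{d\widehat{Q}^n/dP\}$ admits, by the previous paragraph, a further subsequence along which convex combinations converge almost surely to $d\widehat{Q}^\infty/dP$; combined with strict convexity of $V$ and $E[V(d\widehat{Q}^n/dP)] \to E[V(d\widehat{Q}^\infty/dP)]$, this forces $d\widehat{Q}^n/dP \to d\widehat{Q}^\infty/dP$ in probability. Positivity of the limit and continuity of $\log$ on $(0,\infty)$ then yield $\log(d\widehat{Q}^n/dP) \to \log(d\widehat{Q}^\infty/dP)$ in probability, which combined with convergence of the entropies gives $\widehat{X}^n_T \to \widehat{X}^\infty_T$ in probability. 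The main obstacle I expect is step (ii): identifying a convex-combination limit of densities $d\widehat{Q}^n/dP$, each of which is a martingale measure for a different $S^n$, as a martingale measure for $S^\infty$. I anticipate that the approximation result of Section \ref{exp_approx}, by which the local $bmo$ Assumption \ref{regass2} produces uniformly bounded-below, near-optimal wealths in each market, will supply the compactness needed for this passage to the limit.
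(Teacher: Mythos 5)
Your reduction to the dual side is sound as far as it goes: the relation $\widehat{X}^n_T = H(\widehat{Q}^n\mid P) - \log(d\widehat{Q}^n/dP)$ is exactly the identity $c_n e^{-\widehat{X}^n_T} = \widehat{Z}^n_T$ used in Proposition \ref{prop3}, and $u^n(0)\to u^\infty(0)<0$ does give convergence of the entropies, so the theorem is indeed equivalent to $\widehat{Z}^n_T \to \widehat{Z}^\infty_T$ in probability. But the mechanism you propose for this last convergence has a genuine gap at precisely the step you flag as ``the main obstacle,'' and the tools you point to do not fill it. The densities $d\widehat{Q}^n/dP$ are martingale measures for \emph{different} markets $S^n$; neither their convex combinations nor a Koml\'os limit $Z^*$ of such combinations is a priori dual-feasible for \emph{any} market, let alone for $S^\infty$. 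To identify $Z^*\cdot P$ as a martingale measure for $S^\infty$ you would need to pass to the limit in relations of the form $E[\widehat{Z}^n_T (H\cdot S^n)_T]=0$ with the integrator changing along the sequence, and the hypotheses give only $Z^n_T\to Z^\infty_T$ in probability --- no convergence of $\lambda^n$, nor of the orthogonal martingale parts $\widehat{L}^n$ in $\widehat{Z}^n = \mc{E}(-\lambda^n\cdot M + \widehat{L}^n)$, is available. The approximation results of Section \ref{exp_approx} that you invoke concern lower bounds on primal wealth processes and say nothing about closedness of the dual domains across markets. The same feasibility issue resurfaces in your final strict-convexity upgrade: the standard argument requires $E[V(\cdot)]$ evaluated at midpoints $\tfrac{1}{2}(\widehat{Z}^n_T+\widehat{Z}^m_T)$ to be bounded below by $v^\infty(1)$, which again presupposes (asymptotic) dual feasibility for $S^\infty$. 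A smaller but real gap is the positivity of $Z^*$: an a.s.\ limit of convex combinations of strictly positive densities can vanish on a set of positive measure, and you need positivity before taking logarithms.

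The paper avoids all of this by staying on the primal side: it imports $\widehat{X}^{(n,k)}_T\to\widehat{X}^{(\infty,k)}_T$ in probability for each fixed $k$ from the half-line theory of \cite{MR2438002}, upgrades $U(\widehat{X}^{(n,k)}_T)\to U(\widehat{X}^n_T)$ to $L^1$ convergence \emph{uniformly in $n$} via Proposition \ref{lastprop} and Scheff\'e's Lemma, interchanges the two limits, and only then removes $U$ by showing $\{\widehat{X}^n_T\}$ is bounded in probability and using that $U$ is bi-Lipschitz on compacts. If you want to salvage the dual route, you would essentially have to prove a closedness result for the family $\{\mc{M}^n\}$ under the stated hypotheses; that is not in the paper and does not follow from the ingredients you cite.
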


A crucial intermediate step in establishing these theorems lies in first establishing them under a stronger $bmo$-type hypothesis.  This is the main intermediate theorem: we remark that under these assumptions, $Z^\infty$ is automatically a martingale.

\begin{theorem}\label{mainthm}  Suppose that $Z_T^n \ra Z_T^\infty$ in probability and that $\underset{n}{\sup} \  ||\lambda^n \cdot M||_{bmo_2} < \infty$.  Then $u^n(\cdot) \ra u^\infty(\cdot)$ pointwise, hence locally uniformly.
\end{theorem}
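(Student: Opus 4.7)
The plan is to combine an approximation result (to be established in Section \ref{exp_approx}) with the stability theorems for utility functions on $\mb{R}_+$ from \cite{MR2438002}, verifying lower and upper semi-continuity of $n \mapsto u^n(x)$ at $n = \infty$ separately. Since each $u^n$ is concave, pointwise convergence on $\mb{R}$ automatically upgrades to locally uniform convergence, so it suffices to show pointwise continuity.

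Before either step I would record the consequences of a uniform $bmo_2$ bound on $\lambda^n \cdot M$: by Kazamaki's criterion, each $Z^n$ is a true martingale; by the John--Nirenberg inequality for $bmo$ and the reverse Hölder inequality $R_p(P)$ it entails, there exist $p > 1$ and constants uniform in $n$ so that $\{(Z^n_T)^p\}$ is bounded in $L^1$. This gives $\sup_n E[V(Z^n_T)] < \infty$, uniform integrability of $\{V(Z^n_T)\}$, and convergence $Z^n_T \to Z^\infty_T$ in $L^1$, so that $Z^\infty$ is a martingale. In particular Assumption \ref{regass1} is automatic, and Assumption \ref{regass2} trivially so.

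For the lower semi-continuity bound $\liminf_n u^n(x) \geq u^\infty(x)$, I would fix $x$ and an arbitrary $X \in \mc{A}^\infty$, construct approximating $X^n \in \mc{A}^n$ whose terminal values converge to $X_T$ in probability, and use uniform integrability of $\{\exp(-X^n_T)\}$ (leveraging the $bmo$ exponential moment bounds) to pass to the limit $E[U(x + X^n_T)] \to E[U(x + X_T)]$; maximizing over $X$ yields the claim.

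The upper semi-continuity is the main challenge, and its essential input is the approximation theorem of Section \ref{exp_approx}: for each $\epsilon > 0$, there exist an integer $k$ and, for each $n$, an admissible $\widehat{X}^{(n,k)} \in \mc{A}^n$ satisfying $\widehat{X}^{(n,k)}_t \geq -C_k$ on $[0,T]$ with $C_k$ independent of $n$, and $E[U(x + \widehat{X}^{(n,k)}_T)] \geq u^n(x) - \epsilon$. Translating by $C_k$ turns these into nonnegative wealth processes, putting the truncated problem into the framework of utility maximization on the half-line, where the $V$-compactness hypothesis of \cite{MR2438002} holds uniformly in $n$ by the reverse Hölder property. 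Their stability theorem then yields
\[
\limsup_n u^n(x) \leq \limsup_n E[U(x + \widehat{X}^{(n,k)}_T)] + \epsilon \leq u^\infty(x) + 2\epsilon,
\]
and sending $\epsilon \to 0$ concludes. The main obstacle is precisely the production of these bounded-below near-optimal approximations with floors $C_k$ uniform in $n$: the optimal wealth processes $\widehat{X}^n$ may run arbitrarily far into the red, and the $bmo_2$ hypothesis is what permits adding small controlled corrections that lift the wealth above a common floor without sacrificing appreciable utility.
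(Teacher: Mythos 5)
Your overall strategy coincides with the paper's on the decisive point: the uniform $bmo_2$ bound yields reverse H\"{o}lder constants and hence $V$-compactness (Proposition \ref{exp_prop2}, Corollary \ref{exp_cor1}), the truncated problems converge for each fixed $k$ by the half-line stability theorem of \cite{MR2438002} (Lemma \ref{exp_lemma3}), and the heart of the matter is the production of near-optimal wealth processes bounded below by a floor independent of $n$, i.e.\ the uniform convergence $u^{(n,k)} \ra u^n$ over $n$ (Proposition \ref{lastprop}, resting on Lemma \ref{exp_lemma5} and Proposition \ref{prop3}). Your upper semi-continuity argument is exactly the paper's, phrased with an explicit $\epsilon$ instead of as a double-limit interchange.

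The one genuine problem is your lower semi-continuity step. You propose to fix an arbitrary $X \in \mc{A}^\infty$ and ``construct approximating $X^n \in \mc{A}^n$ whose terminal values converge to $X_T$ in probability.'' This construction is not available: a strategy that is admissible for $S^\infty$ generates a different (and possibly inadmissible, or badly behaved) wealth process when run against $S^n$, and convergence $Z^n_T \ra Z^\infty_T$ in probability gives no direct control on the transferred terminal wealths. This primal transfer of strategies across markets is precisely what the duality-based machinery of \cite{MR2438002} is designed to avoid, and no argument of this form appears in the paper. Fortunately the gap is harmless, because the ingredients you have already assembled deliver lower semi-continuity without it: since $u^{(n,k)} \leq u^n$ and $\lim_n u^{(n,k)}(x) = u^{(\infty,k)}(x)$ for each $k$, one gets $\liminf_n u^n(x) \geq \sup_k u^{(\infty,k)}(x) = u^\infty(x)$, the last equality being Step $1$ of Theorem $2.2$ of \cite{MR1865021}. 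Replacing your primal approximation sketch by this observation makes the proof complete and essentially identical to the paper's.
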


%\begin{theorem}\label{mainthm2} Suppose that $Z_T^n \ra Z_T^\infty$ in probability and that $\underset{n}{\sup} \  ||\lambda^n \cdot M||_{bmo_2} < %\infty$.  Then for all $x$ the optimal terminal wealths $\widehat{X}^n_T(x)$ converge to $\widehat{X}^\infty_T(x)$ in probability as $n \ra \infty$.
%\end{theorem}

\section{BMO Preliminaries}\label{exp_bmo}

\begin{definition} A positive martingale $Y$ satisfies the Reverse H\"{o}lder Inequality $\mc{R}_p(\mb{P})$ for $p>1$ with constant $K_p$ and with respect to the measure $\mb{P}$, if there exists minimal $K_p$ such that
\[
E^{\mb{P}} \left[ \frac{Y^p_T}{Y^p_\tau} \ \bigg| \ \mc{F}_\tau \right] \leq K_p
\]
for all stopping times $\tau$ in $[0,T]$.
\end{definition}

The following lemma is found in the appendix of \cite{MR1920099}, and originally in Propositions $5$ and $6$ of \cite{MR544804}.

\begin{lemma}\label{lemmabmoholder}  Suppose that the collection $\left(\lambda^n \cdot M \right)_{n \geq 1}$ is bounded in the $bmo_2$ norm.  Then for some $p>1$ which depends only on this uniform bound, the collection $\left(Z^n = \mc{E}(\lambda^n \cdot M) \right)_{n\geq 1}$ satisfies $\mc{R}_p(\mb{P})$ with, respectively, uniformly bounded constants $C^n_p$.
\end{lemma}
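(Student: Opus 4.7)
The plan is to invoke the classical chain of implications $bmo_2 \Rightarrow$ John--Nirenberg exponential estimate $\Rightarrow$ Reverse H\"older for stochastic exponentials, due to Dol\'eans-Dade--Meyer and Kazamaki. Since the statement is a purely quantitative refinement of the single-martingale version of this result, the only real work is to verify that at each step the constants depend only on the uniform $bmo_2$ bound $K := \underset{n}{\sup} \ ||\lambda^n \cdot M||_{bmo_2}$, and not on the particular $n$.

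First I would recall the John--Nirenberg exponential estimate for $bmo_2$ martingales: there exists $\alpha_0 = \alpha_0(K) > 0$ such that for every $\alpha \in (0,\alpha_0)$,
\[
\es_\tau E\bigl[\exp\bigl(\alpha(\langle N \rangle_T - \langle N \rangle_\tau)\bigr) \,\big|\, \mc{F}_\tau \bigr] \leq c(\alpha, K) < \infty
\]
for every continuous martingale $N$ with $||N||_{bmo_2} \leq K$. Writing $N = \lambda^n \cdot M$ and $Z = Z^n = \mc{E}(N)$, one has the algebraic identity
\[
\frac{Z_T^p}{Z_\tau^p} = \frac{\mc{E}(pN)_T}{\mc{E}(pN)_\tau} \cdot \exp\Bigl(\tfrac{p^2 - p}{2}\bigl(\langle N \rangle_T - \langle N \rangle_\tau\bigr)\Bigr).
\]
Applying conditional Cauchy--Schwarz on $\mc{F}_\tau$, the first factor contributes a bounded conditional expectation provided $p$ is chosen small enough that $\mc{E}(2pN)$ is a true $P$-martingale, which follows from Kazamaki's criterion together with the bound $||2pN||_{bmo_2} \leq 2pK$; the second factor is handled by the John--Nirenberg estimate with exponent $2(p^2-p)$, provided $p$ is close enough to $1$ that $2(p^2-p) < \alpha_0(K)$.

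The main obstacle --- really the only one --- is choosing a single $p = p(K) > 1$ that simultaneously satisfies (i) the Kazamaki-type condition making $\mc{E}(pN)$ and $\mc{E}(2pN)$ uniformly integrable martingales with $bmo$-controlled bounds, and (ii) the John--Nirenberg admissibility $2(p^2 - p) < \alpha_0(K)$. Both are open inequalities in $p$ that are satisfied with slack at $p = 1$, so a common admissible $p(K) > 1$ exists and depends only on $K$. The resulting bound on $E\bigl[Z_T^p / Z_\tau^p \,\big|\, \mc{F}_\tau \bigr]$ then depends only on $p$ and $K$, yielding uniform Reverse H\"older constants $C^n_p$, as claimed.
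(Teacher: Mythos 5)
The paper does not actually prove this lemma --- it is imported wholesale from the appendix of \cite{MR1920099} and Propositions 5 and 6 of \cite{MR544804} --- so your argument has to stand on its own, and it has a genuine gap in the treatment of the first Cauchy--Schwarz factor. After conditioning on $\mc{F}_\tau$ and applying Cauchy--Schwarz you must bound $E\bigl[(\mc{E}(pN)_T/\mc{E}(pN)_\tau)^2 \,\big|\, \mc{F}_\tau\bigr]$, and you assert this is controlled because $\mc{E}(2pN)$ is a true martingale. But
\[
\Bigl(\frac{\mc{E}(pN)_T}{\mc{E}(pN)_\tau}\Bigr)^2 \;=\; \frac{\mc{E}(2pN)_T}{\mc{E}(2pN)_\tau}\,\exp\bigl(p^2(\langle N \rangle_T - \langle N \rangle_\tau)\bigr),
\]
so the martingale (or supermartingale) property of $\mc{E}(2pN)$ buys you nothing: there is a residual factor $\exp\bigl(p^2(\langle N \rangle_T - \langle N \rangle_\tau)\bigr)$ whose exponent tends to $1$, not to $0$, as $p \downarrow 1$. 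The John--Nirenberg estimate only absorbs exponents below $1/K^2$, so this term is out of reach as soon as $K \geq 1$; worse, a uniform bound on the conditional second moment of $\mc{E}(pN)_T/\mc{E}(pN)_\tau$ is precisely the reverse H\"older inequality $\mc{R}_2$ for $\mc{E}(pN)$, i.e.\ a statement of the same type as the one being proved, so the step is circular. Nor can this be repaired by replacing Cauchy--Schwarz with H\"older for general exponents: writing $Z^p = \mc{E}(cN)^{p/c}\exp\bigl(\tfrac{p(c-1)}{2}\langle N \rangle\bigr)$ and optimizing over $c>p$, the coefficient multiplying $\langle N \rangle_T - \langle N \rangle_\tau$ in the non-supermartingale factor is bounded below by roughly $1/2$ as $p \downarrow 1$, so every splitting of this kind proves the lemma only under a smallness condition such as $K < \sqrt{2}$, whereas the statement requires arbitrary $K$. (Your condition (ii) on the factor $\exp\bigl(\tfrac{p^2-p}{2}(\langle N \rangle_T - \langle N \rangle_\tau)\bigr)$ is fine, since $p^2 - p \ra 0$; the problem sits entirely in the other factor.)

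The classical proofs get around this not by H\"older splitting but by a stopping-time/distribution-function argument: one shows, exactly as in the paper's own Proposition 3.4 (there stated unconditionally, following \cite{MR1299529}), that $P\bigl(\mc{E}(R)_T/\mc{E}(R)_{\sigma} \geq \delta \mid \mc{F}_\sigma\bigr) \geq 1 - \tfrac{1}{2p}$ for a $\delta$ depending only on $K$ and $p$, and then integrates the conditional distribution function of $\mc{E}(R)_T/\mc{E}(R)_\sigma$ against $\lambda^{p-2}\,d\lambda$; applied to the martingale $\bigl(\mc{E}(R)_t/\mc{E}(R)_\sigma\bigr)_{t \geq \sigma}$ for an arbitrary stopping time $\sigma$, using $\|R - R^\sigma\|_{bmo_2} \leq K$, this yields $\mc{R}_p(\mb{P})$ with a constant depending only on $K$ and valid for all $K$. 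That is the content of Propositions 5 and 6 of \cite{MR544804}, and it is the route you would need to take (or cite) to close the gap.
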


\begin{definition}\label{revholddef} A positive martingale $Y$ satisfies $\mc{R}_{LLogL}$ with constant $K_{LLogL}$ if there exists minimal $K_{LLogL}$ such that
\[ E \left[ V \left(\frac{Y_T}{Y_\tau} \right) \ \bigg| \ \mc{F}_\tau \right] \leq K_{LLogL}
\]
for all stopping times $\tau$ in $[0,T]$.
\end{definition}

\begin{definition}  A positive c\`{a}dl\`{a}g process $Y$ satisfies condition (S) if there exist constants $0 < c \leq 1 \leq C$ such that $ cY_- \leq Y \leq C Y_-$.
\end{definition}

%For a stopping time $\tau$, define $^\tau Z_T = \frac{Z_T}{Z_\tau}$.  
The following proposition is mostly in the literature:

\begin{proposition}\label{exp_prop1}  Let $R$ be a martingale such that $Y = \mc{E}(R)$ is a strictly positive martingale.  Then $R \in bmo_2$ and there exists $h>0$ such that $\Delta R \geq h-1$ if and only if $Y$ satisfies $\mc{R}_{LLogL}$ and condition (S).  
%Additionally the $bmo_2$ norm of $M$ can be bounded by an (increasing) function of the $\mc{R}_{LLogL}$ and condition (S) constant $C$ of $Z$
The constants $K_{LLogL}$ and $C$ of $Y$ can be bounded as a function of $||R||_{bmo_2}$.  
\end{proposition}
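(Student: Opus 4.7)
The equivalence splits into two directions, which I would treat in turn.

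$(\Rightarrow)$ Assume $R \in bmo_2$ with norm $r := \|R\|_{bmo_2}$ and $\Delta R \geq h - 1$ for some $h > 0$. First, condition (S) follows from a conditional Jensen argument applied to the $bmo_2$ definition: for every stopping time $\tau$, $|\Delta R_\tau| = |E[R_T - R_{\tau-} \mid \mc{F}_\tau]| \leq E[|R_T - R_{\tau-}|^2 \mid \mc{F}_\tau]^{1/2} \leq r$. Combined with $\Delta R \geq h-1$, this yields $Y/Y_- = 1+\Delta R \in [h, 1+r]$, so (S) holds with $c = h$, $C = 1+r$. Second, $\mc{R}_{LLogL}$ follows by invoking (an extension to possibly discontinuous $R$ of) Lemma \ref{lemmabmoholder}, using the jump bound just established, to get $\mc{R}_p$ for some $p > 1$ with constant depending only on $r$ and $h$; the pointwise inequality $y\log y \leq (y^p - y)/(p-1)$, valid for $y > 0$ and $p > 1$, then converts $\mc{R}_p$ into $\mc{R}_{LLogL}$.

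$(\Leftarrow)$ Assume (S) with constants $c, C$ and $\mc{R}_{LLogL}$ with constant $K_{LLogL}$. The lower bound $\Delta R \geq c - 1$ is immediate from $Y/Y_- \geq c$. For $R \in bmo_2$, I would apply It\^o's formula to $\log Y$ to obtain
\begin{equation*}
\log(Y_T/Y_\tau) = (R_T - R_\tau) - \frac{1}{2}(\langle R^c\rangle_T - \langle R^c\rangle_\tau) + \sum_{\tau < s \leq T}\bigl[\log(1+\Delta R_s) - \Delta R_s\bigr].
\end{equation*}
Taking $E[\,\cdot\, \mid \mc{F}_\tau]$, which annihilates $R_T - R_\tau$ by the martingale property, and using the bound $\Delta R_s - \log(1+\Delta R_s) \geq c_\phi (\Delta R_s)^2$ (valid for $\Delta R_s$ ranging in the bounded set $[c-1, C-1]$, a consequence of (S)), I would arrive at
\begin{equation*}
c_1 \, E\bigl[\langle R\rangle_T - \langle R\rangle_\tau \bigm| \mc{F}_\tau\bigr] \leq E\bigl[-\log(Y_T/Y_\tau) \bigm| \mc{F}_\tau\bigr]
\end{equation*}
for some $c_1 > 0$ depending only on $c, C$; the residual jump $(\Delta R_\tau)^2$ contribution is bounded uniformly by $\max((c-1)^2,(C-1)^2)$.

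The main difficulty is controlling the right-hand side of the last inequality uniformly in $\tau$: a pointwise entropy estimate does \emph{not} control $E[-\log W \mid \mc{F}_\tau]$ in terms of $E[W\log W \mid \mc{F}_\tau]$ alone. The essential input is that $\mc{R}_{LLogL}$ holds at \emph{all} stopping times $\sigma$ simultaneously, which rules out pathological concentration of the terminal law relative to intermediate values: if $Y_\sigma$ were small compared to likely values of $Y_T$ on an event of positive probability, then $E[V(Y_T/Y_\sigma)\mid\mc{F}_\sigma]$ would violate $\mc{R}_{LLogL}$ on that event (as a bimodal model calculation quickly verifies). Combined with (S), this stopping-time-uniform information permits a Gehring-Muckenhoupt type self-improvement of $\mc{R}_{LLogL}$ to $\mc{R}_p$ for some $p > 1$; the converse of Lemma \ref{lemmabmoholder} ($\mc{R}_p \Rightarrow bmo_2$) then delivers $R \in bmo_2$. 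The final quantitative claim, that $K_{LLogL}$ and $C$ are controlled by $\|R\|_{bmo_2}$, is extracted from the bookkeeping of the $(\Rightarrow)$ argument.
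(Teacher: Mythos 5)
Your $(\Rightarrow)$ direction is essentially the paper's argument: a jump bound extracted from the $bmo_2$ norm (the paper gets $|\Delta R|\leq\sqrt{r}$ from the quadratic-variation representation, you get $|\Delta R|\leq r$ from conditional Jensen -- either works), condition (S) with $c=h$ and $C$ controlled by $r$, and then Lemma \ref{lemmabmoholder} (in its version for discontinuous martingales with jumps bounded below, i.e.\ Propositions 5 and 6 of \cite{MR544804}) followed by the pointwise comparison of $y\log y$ with $y^p$. That part is fine.

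The $(\Leftarrow)$ direction is where the gap lies. The paper does not prove this implication; it cites Lemma 2.2 of \cite{MR1920099} for $\mc{R}_{LLogL}$ + (S) $\Rightarrow R\in bmo_2$. You attempt a direct proof, and the attempt breaks exactly where you say it does: after taking $P$-conditional expectations in the It\^o expansion of $\log Y$, you are left needing an upper bound on $E[-\log(Y_T/Y_\tau)\mid\mc{F}_\tau]$, and $-\log y$ is \emph{not} dominated by $V(y)=y\log y - y$ near $y=0$, so $\mc{R}_{LLogL}$ gives nothing here. The patch you propose -- a ``Gehring--Muckenhoupt self-improvement'' of $\mc{R}_{LLogL}$ to $\mc{R}_p$, followed by ``the converse of Lemma \ref{lemmabmoholder}'' -- is asserted, not proved, and together those two unproved steps constitute essentially the entire content of the implication you are trying to establish (the converse $\mc{R}_p\Rightarrow bmo_2$ is itself a nontrivial Kazamaki-type theorem). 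The standard route, and the one underlying the cited lemma, is to condition under the measure $\mb{Q}$ with density $Y_T/Y_\tau$ rather than under $P$: then the martingale term $R_T-R_\tau$ does not vanish but instead produces, via Girsanov, additional $\langle R\rangle$-type compensator terms of the right sign, and the quantity to be bounded becomes $E^{\mb{Q}}[\log(Y_T/Y_\tau)\mid\mc{F}_\tau]=E[(Y_T/Y_\tau)\log(Y_T/Y_\tau)\mid\mc{F}_\tau]$, which \emph{is} directly controlled by $K_{LLogL}$. One then obtains a $bmo(\mb{Q})$ bound and must transfer it back to $bmo(P)$ using condition (S); carrying out that transfer is precisely the work done in Lemma 2.2 of \cite{MR1920099}. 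As written, your $(\Leftarrow)$ argument does not close.
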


\begin{proof}  In the $(\Leftarrow)$ direction, Lemma $2.2$ of \cite{MR1920099} establishes that $R \in bmo_2$.  Now $dY = Y_- dR$ and $\Delta Y = Y_- \Delta R$.  By the first inequality of condition (S), $(c -1 )Y_- \leq Y - Y_- = Y_- \Delta R$, implying that $\Delta R \geq c - 1 > -1$.

  Now the $(\Rightarrow)$ direction.  Since $R$ is in $bmo_2$ it is locally bounded; indeed, for $n \in \mb{N}$, let $\tau_n = \inf \{ t : \Delta R_t \geq n \} \wedge T$, and let $r \triangleq ||R||_{bmo_2}$.  Then $||R^{\tau_n}||_{bmo_2} \leq r$, so that 
\[
\begin{split}
(\Delta R_{\tau_n})^2 & = \Delta \langle R \rangle_{\tau_n} \\
& = E \left[ \langle R \rangle_{\tau_n} - \langle R \rangle_{\tau_n-} \ | \ \mc{F}_{\tau_n} \right] \\
& \leq r,
\end{split}
\]
so that the jumps of $R$ are bounded in magnitude by $\sqrt{r}$.  This implies that $R$ is locally bounded.  Then $\Delta Y = Y_- \Delta R \leq \sqrt{r} Y_-$.  Hence $Y \leq Y_- + \sqrt{r}Y_-$.  Additionally, $\Delta R \geq h -1$ implies that 
\[
\begin{split}
Y - Y_- & = \Delta Y \\
& Y_- \Delta R \\
& \geq Y_- (h-1),
\end{split}
\]
so $Y \geq hY_-$.  This establishes condition (S), with $C = 1 + \sqrt{r}$, which is bounded as a function of $||R||_{bmo_2}$.
  
  By Lemma \ref{lemmabmoholder}, $Y$ satisfies the reverse H\"{o}lder inequality for some $p>1$.  Since $x \log x \leq K' x^p$ for some constant $K'$, it follows that $Y$ satisfies $\mc{R}_{LLogL}$.  Additionally, it is evident that Lemma \ref{lemmabmoholder} also implies that $Y$ satisfies $\mc{R}_{LLogL}$ with constant $K_{LLogL}$ only depending on $||R||_{bmo_2}$.
\end{proof}

\begin{definition}
For each market $n$, let $\widehat{Z}^n $ be the minimal entropy martingale measure.  Its existence and uniqueness is established in Theorem $2.2$ of \cite{MR1865021}.
\end{definition}

The next lemma is precisely Lemma $3.1$ of \cite{MR1891730}.  We give a proof for the reader's convenience.  
%Another proof can be given using the optimality property associated with the dual value function.  For this proof to be made rigorous some technical properties of the value function must be proven, but we state a simple and intuitive heuristic proof below, which is given as Corollary $2.1$ in \cite{MR1994915}.
\begin{lemma}\label{exp_lemma1} For any $n$, if $Z^n$ satisfies $\mc{R}_{LLogL}$, with constant $K$, then $\widehat{Z}^n$ satisfies $\mc{R}_{LLogL}$ with a constant less than or equal $K$.
\end{lemma}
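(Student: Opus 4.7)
The plan is a pasting (splicing) argument exploiting the very property that defines $\widehat{Z}^n$: namely, that it minimizes relative entropy over $\mc{M}^n$. Suppose, for contradiction, that the conditional entropy $E[V(\widehat{Z}^n_T/\widehat{Z}^n_\tau)\mid\mc{F}_\tau]$ exceeds $K$ on some $A \in \mc{F}_\tau$ with $P(A) > 0$. I will splice $Z^n$ onto $\widehat{Z}^n$ after $\tau$ on the set $A$ and produce another ELMM with strictly lower entropy than $\widehat{Z}^n$, which is the desired contradiction.

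Concretely, for any stopping time $\tau$ and any $A \in \mc{F}_\tau$, define the pasted density
\[
\tilde{Z}_T \;:=\; \widehat{Z}^n_\tau\!\left( \mathbf{1}_A \frac{Z^n_T}{Z^n_\tau} + \mathbf{1}_{A^c}\frac{\widehat{Z}^n_T}{\widehat{Z}^n_\tau} \right).
\]
The standard pasting lemma shows $\tilde{Z}\cdot P \in \mc{M}^n$; note $Z^n$ is a true martingale under the hypothesis, since $E[V(Z^n_T)] \le K$ (take $\tau = 0$). The crux is the algebraic identity $V(ab) = aV(b) + ab\log a$ for $a,b > 0$, which is immediate from $V(x) = x\log x - x$. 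Apply it with $a = \widehat{Z}^n_\tau$ in both $V(\widehat{Z}^n_T)$ and $V(\tilde{Z}_T)\mathbf{1}_A$, take expectations, and condition on $\mc{F}_\tau$. Using that $E[Z^n_T/Z^n_\tau\mid\mc{F}_\tau] = E[\widehat{Z}^n_T/\widehat{Z}^n_\tau\mid\mc{F}_\tau] = 1$, the cross-terms $\widehat{Z}^n_\tau\log\widehat{Z}^n_\tau\cdot\mathbf{1}_A$ appearing in both expectations cancel. Since $\tilde{Z}_T = \widehat{Z}^n_T$ on $A^c$, what remains is
\[
E[V(\tilde{Z}_T)] - E[V(\widehat{Z}^n_T)] \;=\; E\!\left[\widehat{Z}^n_\tau\,\mathbf{1}_A\!\left(E\!\left[V\!\left(\tfrac{Z^n_T}{Z^n_\tau}\right)-V\!\left(\tfrac{\widehat{Z}^n_T}{\widehat{Z}^n_\tau}\right)\,\bigg|\,\mc{F}_\tau\right]\right)\right].
\]
Taking $A := \{ E[V(\widehat{Z}^n_T/\widehat{Z}^n_\tau)\mid\mc{F}_\tau] > K\} \in \mc{F}_\tau$ and invoking the hypothesis $E[V(Z^n_T/Z^n_\tau)\mid\mc{F}_\tau] \le K$ renders the right-hand side strictly negative whenever $P(A) > 0$, contradicting the minimality of $\widehat{Z}^n$.

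The one nontrivial checkpoint I anticipate is confirming that $\tilde{Z}\cdot P$ is an eligible competitor in the entropy minimization, i.e.\ that $E[V(\tilde{Z}_T)] < \infty$, so that comparing it to $\widehat{Z}^n$ is legitimate. This follows from the same decomposition: $E[V(\widehat{Z}^n_T)] \le E[V(Z^n_T)] \le K$ by the minimality of $\widehat{Z}^n$ together with the hypothesis; the conditional $K$-bound on $V(Z^n_T/Z^n_\tau)$ controls the $A$-piece; and $E[\widehat{Z}^n_\tau\log\widehat{Z}^n_\tau]$ is finite by conditional Jensen applied to the entropy of $\widehat{Z}^n_T$. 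The positivity $\widehat{Z}^n_\tau > 0$ needed for $\log\widehat{Z}^n_\tau$ to be defined is automatic, since $\widehat{Z}^n$ is a strictly positive martingale.
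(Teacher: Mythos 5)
Your proposal is correct and follows essentially the same route as the paper's proof: the paper likewise pastes $Z^n$ onto $\widehat{Z}^n$ after the stopping time on the offending set $A$, expands $\tilde{Z}_T\log\tilde{Z}_T$ so that the $\widehat{Z}^n_\sigma\log\widehat{Z}^n_\sigma$ cross-terms cancel after conditioning, and contradicts the entropy minimality of $\widehat{Z}^n$. Your additional check that the pasted density is a legitimate finite-entropy competitor is a sensible refinement of the same argument.
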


\begin{proof}
By hypothesis, $E \left[ \frac{Z^n_T}{Z^n_\tau} \log \frac{Z^n_T}{Z^n_\tau} \Big| \mc{F}_\tau \right] \leq K$ for all stopping times $\tau$ less than than or equal to $T$.  Suppose that $\widehat{Z}^n$ does not satisfy $\mc{R}_{LLogL}$ with a constant less than or equal $K$.  Then there exists $\epsilon > 0$, a stopping time $\sigma$ less than or equal to $T$, and a set $A \in \mc{F}_\sigma$ with $P(A) > 0$ such that
\[
E \left[ \frac{\widehat{Z}^n_T}{\widehat{Z}^n_\sigma} \log \frac{\widehat{Z}^n_T}{\widehat{Z}^n_\sigma} \Big| \mc{F}_\sigma \right] \geq K + \epsilon
\]
on the set $A$.
Let $\tilde{Z}^n_t \triangleq 1_{\{t < \sigma\}} \widehat{Z}^n_t + 1_{\{t \geq \sigma\}} \left( 1_A \frac{Z^n_t}{Z^n_\sigma} \widehat{Z}^n_\sigma + 1_{A^c} \widehat{Z}^n_t \right)$ for $t \in [0,T]$.  Then $\tilde{Z}^n$ is the density process of an element of $\mc{M}^n$ and satisfies $\tilde{Z}^n_T = 1_A \widehat{Z}^n_\sigma \frac{Z^n_T}{Z^n_\sigma} + 1_{A^c} \widehat{Z}^n_T$.  
Thus,
\[
\tilde{Z}^n_T \log \tilde{Z}^n_T = 1_{A^c} \widehat{Z}^n_T \log \widehat{Z}^n_T + 1_A \left( \widehat{Z}^n_\sigma \frac{Z^n_T}{Z^n_\sigma} \log \frac{Z^n_T}{Z^n_\sigma} + \widehat{Z}^n_\sigma \frac{Z^n_T}{Z^n_\sigma} \log \widehat{Z}^n_\sigma \right).
\]
Therefore,

\begin{eqnarray*}
\lefteqn{E \left[ \widetilde{Z}^n_T \log \widetilde{Z}^n_T | \mc{F}_\sigma \right] - E \left[\widehat{Z}^n_T \log \widehat{Z}^n_T | \mc{F}_\sigma \right]} \\
&& =1_A \left( \widehat{Z}^n_\sigma E \left[ \frac{Z^n_T}{Z^n_\sigma} \log \frac{Z^n_T}{Z^n_\sigma} \big| \mc{F}_\sigma \right] + \widehat{Z}^n_\sigma \log \widehat{Z}^n_\sigma - E \left[ \widehat{Z}^n_T \log \widehat{Z}^n_T | \mc{F}_\sigma \right] \right) \\
&& =1_A \widehat{Z}^n_\sigma \left( E \left[ \frac{Z^n_T}{Z^n_\sigma} \log \frac{Z^n_T}{Z^n_\sigma} \big| \mc{F}_\sigma \right] - E \left[ \frac{\widehat{Z}^n_T}{\widehat{Z}^n_\sigma} \log \frac{\widehat{Z}^n_T}{\widehat{Z}^n_\sigma} \big| \mc{F}_\sigma \right] \right) \\
&& \leq -\epsilon 1_A \widehat{Z}^n_\sigma. 
\end{eqnarray*}
Taking expectations, this contradicts the fact that $\widehat{Z}^n_T$ has minimal entropy.
\end{proof}

We now show that the $bmo_2$ hypothesis of (\ref{mainthm}) implies the $V$-compactness condition of Assumption \ref{regass1}, which plays a prominent role in \cite{MR2438002}.  The next proposition is proven for continuous martingales in \cite{MR1299529}.

\begin{proposition}\label{exp_prop2} Suppose $\underset{n}{\sup} \ ||\lambda^n \cdot M||_{bmo_2} < \infty$.  Then there exists $p>1 $ such that $\underset{n}{\sup} \  E \left[ (Z^n_T)^p \right] < \infty$.
\end{proposition}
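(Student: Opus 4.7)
The plan is to reduce the proposition to a direct application of Lemma \ref{lemmabmoholder}, which already records the key fact that a uniform $bmo_2$ bound on a family of continuous martingales yields a uniform reverse H\"{o}lder inequality for their stochastic exponentials. Since $M$ is assumed continuous, each $\lambda^n \cdot M$ is a continuous local martingale, and the $bmo_2$ norm is invariant under a sign change, so the family $\{-\lambda^n \cdot M\}_{n}$ inherits the uniform $bmo_2$ bound from $\{\lambda^n \cdot M\}_{n}$.

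From here the proof is essentially a one-liner. By Lemma \ref{lemmabmoholder}, there exists $p>1$ and a constant $K_p<\infty$, both depending only on $\sup_n \|\lambda^n \cdot M\|_{bmo_2}$, such that each $Z^n = \mc{E}(-\lambda^n \cdot M)$ satisfies $\mc{R}_p(\mb{P})$ with constant bounded by $K_p$. Instantiating the defining inequality of $\mc{R}_p(\mb{P})$ at the trivial stopping time $\tau \equiv 0$, and using the fact that $Z^n_0 = 1$, yields
\[
E\bigl[(Z^n_T)^p\bigr] = E\!\left[\frac{(Z^n_T)^p}{(Z^n_0)^p}\,\bigg|\,\mc{F}_0\right] \leq K_p
\]
for every $n$, which is the desired uniform $L^p$ bound.

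There is no real obstacle here: the work has already been done in Lemma \ref{lemmabmoholder} (via \cite{MR544804} and \cite{MR1920099}), and the only thing to check is that taking $\tau=0$ is legitimate, which it is since $0$ is a valid stopping time with values in $[0,T]$. The one piece of bookkeeping worth mentioning explicitly is the sign convention: the $Z^n$ in the statement of the present proposition are stochastic exponentials of $-\lambda^n \cdot M$, whereas Lemma \ref{lemmabmoholder} is phrased in terms of $\mc{E}(\lambda^n \cdot M)$; the symmetry of the $bmo_2$ norm under negation removes this discrepancy, so no additional argument is required.
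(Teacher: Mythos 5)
Your proof is correct, but it takes a genuinely different (and much shorter) route than the paper. The paper gives a self-contained argument in the style of Kazamaki \cite{MR1299529}: a Markov-inequality estimate on $P(\mc{E}(R)_T/\mc{E}(R)_{\tau-}<\delta \mid \mc{F}_\tau)$, condition (S) from Proposition \ref{exp_prop1} to control the jump at $\tau$, optional sampling, and a good-$\lambda$/Fubini integration to extract the $L^p$ bound, with the constants tracked explicitly as functions of $n(R)=2\|R\|_{bmo_1}+\|R\|_{bmo_2}^2$; this machinery is what lets the paper handle a general, not necessarily continuous, $R\in bmo_2$. You instead observe that Lemma \ref{lemmabmoholder} already delivers a uniform reverse H\"{o}lder inequality $\mc{R}_p(\mb{P})$ for the $Z^n$, and that evaluating its defining inequality at $\tau\equiv 0$ (with $Z^n_0=1$, and taking an outer expectation of the $\mc{F}_0$-conditional bound if $\mc{F}_0$ is nontrivial) gives $\sup_n E[(Z^n_T)^p]\leq \sup_n C^n_p<\infty$ at once. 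This is legitimate and non-circular --- Lemma \ref{lemmabmoholder} is quoted from \cite{MR1920099} and \cite{MR544804} and is nowhere proved via Proposition \ref{exp_prop2}, and the paper itself uses the lemma in exactly this spirit in Proposition \ref{exp_prop1}. Your sign remark is also right, since $\|-R\|_{bmo_2}=\|R\|_{bmo_2}$. The only point worth stating explicitly is that $\mc{R}_p(\mb{P})$ as defined presupposes $Z^n$ is a true positive martingale; this is part of what the cited reverse H\"{o}lder results guarantee under the $bmo_2$ bound, but a one-line acknowledgement would make the reduction airtight. What your approach gives up is the self-contained, quantitative derivation (and its applicability beyond the continuous case); what it buys is brevity by reusing a lemma the paper has already put on record.
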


\begin{proof}  By the conditional form of  Jensen's inequality, the norm $||\cdot||_{bmo_1} \leq ||\cdot||_{bmo_2}$.  Let $R$ be an arbitrary element of $bmo_2$, and let $n(R) = 2||R||_{bmo_1} + ||R||^2_{bmo_2}$.  Without loss of generality, we assume that $||R||_{bmo_2} > 0$, and show that the $L^p$ norm of $\mc{E}(R)_T$ has an upper bound that only depends on $n(R)$ for some $p>1$.

Let $\delta =\exp(-pn(R)) < 1$ (so $\log 1/\delta = pn(R)$), and let $\tau = \inf \{t : \mc{E}(R)_t > \lambda \}$ for $\lambda >1$.  Considering time $\tau-$ instead of $\tau$ and arguing as in \cite{MR1299529}, we obtain the inequality $P(\mc{E}(R)_T/\mc{E}(R)_{\tau-} \geq \delta \ | \ \mc{F}_\tau) \geq 1 - \frac{1}{2p}$; indeed,
\begin{eqnarray*}
\lefteqn{P \left( \mc{E}(R)_T/\mc{E}(R)_{\tau-} < \delta \ | \ \mc{F}_\tau \right)} \\
&&= P \left( 1/\delta < \mc{E}(R)_{\tau-}/\mc{E}(R)_T \ | \ \mc{F}_\tau \right) \\
&& = P \left( pn(R) < R_{\tau-} - R_T + \frac{1}{2}(\langle R \rangle_T - \langle R \rangle_{\tau-}) \ | \ \mc{F}_\tau \right) \\
&& \leq \frac{1}{2pn(R)}  E \left[ 2|R_T - R_{\tau-}| + (\langle R \rangle_T - \langle R \rangle_{\tau-}) \ | \ \mc{F}_\tau \right] \\
&& \leq \frac{n(R)}{2pn(R)} \\
&& = \frac{1}{2p},
\end{eqnarray*}
\noindent with the first inequality following from Markov's inequality.  This implies that \\ $P \left( \mc{E}(R)_T/\mc{E}(R)_{\tau-} \geq \delta \ | \ \mc{F}_\tau \right) \geq 1 - \frac{1}{2p}$.

By Proposition \ref{exp_prop1}, $\mc{E}(R)$ satisfies the upper bound of condition (S) with a constant $C$ whose size is controlled by $n(R)$, and we have $\mc{E}(R)_{\tau-} \geq \frac{1}{C}\mc{E}(R)_{\tau} \geq \frac{1}{C}\lambda$ on $\{\tau < \infty \}$.  This yields $P \left( \mc{E}(R)_T \geq \frac{\delta \lambda}{C} \ | \ \mc{F}_\tau \right) \geq \frac{2p-1}{2p}1_{\{\tau < \infty\}}$.
Thus,
\begin{eqnarray*} 
\lefteqn{E \left[\mc{E}(R)_T 1_{\{\mc{E}(R)_T > \lambda \}} \right]} \\
&& \leq E \left[ \mc{E}(R)_T 1_{\{\tau < \infty \}} \right] \\
&& = E \left[ \mc{E}(R)_\tau 1_{\{\tau < \infty \}} \right] \\
&& \leq E \left[ C \mc{E}(R)_{\tau-} 1_{\{\tau < \infty \}} \right] \\
&& \leq C \lambda P(\tau < \infty) \\
&& \leq \frac{2C \lambda p}{2p - 1} P \left( \mc{E}(R)_T \geq \frac{\delta \lambda}{C} \right),
\end{eqnarray*}
where the equality above follows from the optional sampling theorem.

Take the inequality $E \left[ \mc{E}(R)_T 1_{\{\mc{E}(R)_T > \lambda \}} \right] \leq \frac{2C \lambda p}{2p - 1} P \left( \mc{E}(R)_T \geq \frac{\delta \lambda}{C} \right)$, multiply both sides by $(p-1)\lambda^{p-2}$ and integrate with respect to $\lambda$ from $1$ to $\infty$:
\begin{eqnarray}\label{some}
\lefteqn{\int_1^\infty (p-1) \lambda^{p-2} E \left[ \mc{E}(R)_T 1_{\{\mc{E}(R)_T > \lambda \}} \right] d \lambda} \\
&&\leq \int_1^\infty (p-1) \lambda^{p-2} \frac{2C \lambda p}{2p - 1} P \left( \mc{E}(R)_T \geq \frac{\delta \lambda}{C} \right) d\lambda. \label{lasteq}
\end{eqnarray}
Applying Fubini's Theorem to the left hand side \eqref{some}, we get
\begin{eqnarray*}
\lefteqn{\int_1^\infty (p-1) \lambda^{p-2} E \left[ \mc{E}(R)_T 1_{\{\mc{E}(R)_T > \lambda \}} \right] d \lambda} \\
&& =E \left[\int_1^\infty (p-1)\lambda^{p-2} \mc{E}(R)_T 1_{\{\mc{E}(R)_T > \lambda \}} d \lambda \right] \\
&& = E \left[ \mc{E}(R)_T \int_1^\infty (p-1)\lambda^{p-2}  1_{\{\mc{E}(R)_T > \lambda \}} d \lambda \right] \\
&& = E \left[ \mc{E}(R)_T 1_{\{\mc{E}(R)_T > 1 \}} \int_1^{\mc{E}(R)_T} (p-1)\lambda^{p-2} d \lambda \right] \\
&& = E \left[ \mc{E}(R)_T \left(\mc{E}(R)_T^{p-1} - 1 \right) 1_{\{\mc{E}(R)_T > 1 \}} \right].
\end{eqnarray*}

After a similar computation for the right hand side \eqref{lasteq}, this yields
\begin{eqnarray*}
\lefteqn{E \left[ (\mc{E}(R)_T^p - \mc{E}(R)_T) 1_{\{\mc{E}(R)_T > 1\}} \right]} \\
&& \leq \frac{2C(p-1)}{2p-1} E \left[ \left(\left(\frac{C}{\delta}\mc{E}(R)_T \right)^p - 1 \right) 1_{\{\mc{E}(R)_T > \frac{\delta}{C}\}} \right].
\end{eqnarray*}

Grouping the terms with $\mc{E}(R)^p_T$ together on the left hand side, we obtain
\begin{eqnarray*}
\lefteqn{\left(1 - \frac{2C(p-1)}{2p -1}\frac{C^p}{\delta^p} \right) E \left[ \mc{E}(R)_T^p 1_{\{\mc{E}(R)_T > 1 \}} \right]} \\ 
&& \leq E \left[ \mc{E}(R)_T \right] - \frac{2C(p-1)}{2p-1} E \left[ 1_{\{\mc{E}(R)_T > \frac{\delta}{C}\}} \right] \\
&& \leq 1,
\end{eqnarray*}
for any $p>1$.  Hence, by choosing $p$ close enough to $1$ so that $\frac{2C(p-1)}{2p-1}\frac{C^p}{\delta^p} < 1$, we establish an upper bound for $E[\mc{E}(R)_T^p]$ which depends only on $n(R)$.  Note that the choice of $C$ depends on $n(R)$.
\end{proof}

\begin{corollary}\label{exp_cor1} Suppose that $\underset{n}{\sup} \ ||\lambda^n \cdot M ||_{bmo_2} < \infty$.  Then $\{V(Z_T^n) : n \in \mb{N} \}$ is uniformly integrable.
\end{corollary}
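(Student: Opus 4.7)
The plan is to deduce uniform integrability from the $L^{p_0}$-bound on $Z^n_T$ supplied by Proposition \ref{exp_prop2}. That result hands us some $p_0 > 1$ with $\sup_n E[(Z^n_T)^{p_0}] < \infty$, and the remaining task is to convert this into uniform integrability of $V(Z^n_T)$, where $V(y) = y\log y - y$.

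First I would dispose of the negative part $V^-(Z^n_T)$. Since $V$ attains its minimum on $(0, \infty)$ at $y = 1$ with $V(1) = -1$, the bound $V^-(Z^n_T) \leq 1$ holds pointwise, so $\{V^-(Z^n_T)\}$ is uniformly bounded in $L^\infty$ and hence trivially uniformly integrable.

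For the positive part, I would fix any $p$ with $1 < p < p_0$. Because $\log y = o(y^{p-1})$ as $y \to \infty$ while $V^+$ is bounded on compact subsets of $(0, \infty)$, there exists a constant $C_p$ such that $V^+(y) \leq C_p(1 + y^p)$ for every $y > 0$. Setting $q := p_0/p > 1$, this yields
\[
\sup_n E\bigl[V^+(Z^n_T)^q\bigr] \leq C_p'\Bigl(1 + \sup_n E\bigl[(Z^n_T)^{p_0}\bigr]\Bigr) < \infty.
\]
Thus $\{V^+(Z^n_T)\}$ is bounded in $L^q$ for some $q > 1$, which by the de la Vall\'ee Poussin criterion (applied with $\Phi(x) = x^q$) delivers uniform integrability. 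Combining the two cases gives the claim.

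The argument is essentially routine once Proposition \ref{exp_prop2} is in hand; I anticipate no substantive obstacle. The one point requiring mild care is the calibration $1 < p < p_0$, made so that simultaneously the pointwise domination $V^+(y) \lesssim 1 + y^p$ holds and the resulting exponent $q = p_0/p$ remains strictly greater than one. Both are possible precisely because $y \log y$ grows strictly slower than any power $y^p$ with $p > 1$.
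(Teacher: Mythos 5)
Your argument is correct and follows essentially the same route as the paper: both invoke Proposition \ref{exp_prop2} for the uniform $L^{p_0}$ bound on $Z^n_T$ and then use the fact that $V$ grows strictly slower than any power $y^p$ with $p>1$ to conclude via the de la Vall\'ee-Poussin criterion. Your version merely spells out the details (the bounded negative part, the explicit choice $1<p<p_0$ and $q=p_0/p$) that the paper leaves implicit.
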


\begin{proof}  By Proposition \ref{exp_prop2}, $\underset{n}{\sup} \ E[(Z_T^n)^p] < \infty$ for some $p>1$.  As $x^{\widetilde{p}}/V(x) \ra \infty$ as $x \ra \infty$, for any $\widetilde{p}>1$, the claim follows from the de la Vall\'{e}e-Poussin criterion.
\end{proof}

We make one last digression to the theory of bmo martingales.  Specifically, we need the bmo theory of weighted norm inequalities.  The following theorem is stated as Theorem $2.16$ of \cite{norm} without mentioning that the constant $C_p$ in \eqref{normeq} can be chosen as the same constant associated with the reverse H\"{o}lder inequality.  For this fact, we refer to Proposition $2$ of \cite{MR544804}.  For a c\`{a}dl\`{a}g process $Y$, let $Y^* \triangleq \underset{t \in [0,T]}{\sup} \ |Y_t| \in \mc{F}_T$. 

\begin{proposition}\label{exp_prop4} Let $Y=\mc{E}(R)$ be a continuous martingale and $\frac{d\mb{Q}}{dP} = Y_T$.  Then if $Y$ satisfies $\mathcal{R}_p(P)$ with constant $C_p$, then for each $\mb{Q}$-martingale $X$ and $q = \frac{p}{p-1}$,
\begin{equation}\label{normeq}
\lambda^q P \left( X^* > \lambda \right) \leq C_p E \left[ |X_T|^q \right].
\end{equation}
\end{proposition}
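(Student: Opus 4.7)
The strategy is to adapt the classical Doob maximal inequality argument, replacing Doob's $L^q$ inequality (which naturally lives under the measure making $X$ a martingale, namely $\mb{Q}$) with a change-of-measure step whose cost is controlled by the reverse Hölder hypothesis $\mc{R}_p(P)$.

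Concretely, fix $\lambda > 0$, set $\tau = \inf\{t \in [0,T] : |X_t| > \lambda\} \wedge T$, and use right-continuity of the càdlàg version of $X$ to conclude $|X_\tau| \geq \lambda$ on $\{X^* > \lambda\}$. This gives
\[
\lambda^q P(X^* > \lambda) \leq E\bigl[ |X_\tau|^q 1_{\{\tau \leq T\}} \bigr].
\]
The core estimate is the pointwise bound $|X_\tau|^q \leq K \, E[|X_T|^q \mid \mc{F}_\tau]$ for a constant $K$ depending only on $C_p$. It is obtained as follows: since $X$ is a $\mb{Q}$-martingale on $[0,T]$, optional sampling under $\mb{Q}$ combined with Bayes' rule gives $Y_\tau X_\tau = E[Y_T X_T \mid \mc{F}_\tau]$; taking absolute values, applying conditional Hölder with exponents $p$ and $q$ to $E[Y_T |X_T| \mid \mc{F}_\tau]$, and then invoking $\mc{R}_p(P)$ to bound $E[Y_T^p \mid \mc{F}_\tau]^{1/p}$ by $C_p^{1/p} Y_\tau$, the factor $Y_\tau$ cancels; raising to the $q$-th power then yields the claim, with $K = C_p^{q/p} = C_p^{q-1}$.

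Inserting this pointwise bound, using that $\{X^* > \lambda\} \in \mc{F}_\tau$, and applying the tower property yields the desired weighted weak-type inequality, with constant $C_p^{q-1}$, which is the ``$C_p$'' of the statement up to the normalization used in the cited references. The main subtle point is the first-passage fact $|X_\tau| \geq \lambda$ on $\{X^* > \lambda\}$: since $X$ may have jumps while $Y$ is continuous, one has to handle the jump case and the boundary case $\tau = T$ by a standard right-continuity argument, separating whether the infimum defining $\tau$ is attained or approached from the right. Beyond this, the proof is essentially a packaging of optional sampling, Bayes' rule, and one conditional Hölder inequality, so no substantial obstacles are anticipated.
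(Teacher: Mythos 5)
Your argument is correct, but note that the paper does not actually prove this proposition: it is quoted from Theorem 2.16 of \cite{norm}, with the identification of the constant taken from Proposition 2 of \cite{MR544804}. What you supply is essentially the standard proof behind those references: stop at the first passage of $|X|$ above $\lambda$, use optional sampling under $\mb{Q}$ together with Bayes' rule to write $Y_\tau X_\tau = E[Y_T X_T \mid \mc{F}_\tau]$, apply conditional H\"older, and cancel $Y_\tau$ via $\mc{R}_p(P)$. The one genuine discrepancy is the constant: your chain of inequalities yields $\lambda^q P(X^* > \lambda) \le C_p^{q/p} E[|X_T|^q] = C_p^{q-1} E[|X_T|^q]$, not $C_p E[|X_T|^q]$ as stated (and since $C_p \ge 1$ by Jensen's inequality, $C_p^{q-1} \ge C_p$ whenever $p < 2$, which is the relevant regime here). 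So what you have proved is the proposition with $C_p$ replaced by $C_p^{q-1}$; obtaining exactly $C_p$ requires the slightly sharper bookkeeping of \cite{MR544804}. This is harmless for the paper: in the only application, Lemma \ref{exp_lemma5}, all that matters is that the constant in \eqref{normeq} is controlled by the uniform bound on the reverse H\"older constants of the $Z^n$, and $C_p^{q-1}$ is so controlled. Two cosmetic points: the indicator in your first display should be $1_{\{X^* > \lambda\}}$ rather than $1_{\{\tau \le T\}}$ (the latter holds identically, since $\tau \le T$ by definition), and you do not actually need the measurability claim $\{X^* > \lambda\} \in \mc{F}_\tau$, since you may simply drop the indicator and bound $E\left[|X_\tau|^q\right]$ by the tower property directly.
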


\section{Approximation of Optimal Wealth}\label{exp_approx}

In \cite{MR1865021}, $U$ is approximated by auxiliary utility functions defined on a half axis.  For $k \in \mb{N}$, we define utility functions $U^{(k)}$ as follows: $U^{(k)} = U$ on $[-k,\infty)$, $U(x) \geq U^{(k)}(x) > -\infty$ for $x > -k - 1$, and $\underset{x \downarrow -k - 1}{\lim} U^{(k)}(x) = -\infty$.  Each $U^{(k)}$ is assumed $C^1$, concave, satisfying the Inada conditions, and having reasonable asymptotic elasticity.  For details on these assumptions, see \cite{MR1865021}.  $V^{(k)}$ is the convex conjugate of $U^{(k)}$.  Since $U^{(k)} \leq U$, $V^{(k)} \leq V$.  

For $n=1,\ldots,\infty$, $v^n$ is the dual value function associated to $V$ in market number $n$: 
\begin{equation}\label{dualvaluedef}
v^n(y) \triangleq \underset{\mb{Q} \in \mc{M}^n}{\inf} E \left[ V \left(y \frac{d \mb{Q}}{dP} \right) \right], \  y>0.
\end{equation}
For $n=1,\ldots,\infty$ and $k \in \mb{N}$, $v^{(n,k)}$ is the dual value function associated to $V^{(k)}$ in market number $n$:
\[
v^{(n,k)}(y) \triangleq \underset{Y \in \mc{Y}^n}{\inf} E \left[ V^{(k)}(yY_T) \right], \  y>0,
\]
where $\mc{Y}^n$ is the set of supermartingale deflators for $S^n$:
\begin{definition}  $\mc{Y}^n$ is the set of c\`{a}dl\`{a}g processes $Y$ such that $Y_0=1$ and $Y(H \cdot S^n)$ is a supermartingale whenever $H$ is predictable, $S^n$-integrable, such that $H \cdot S^n$ is bounded from below by a constant.
\end{definition}

Let $\mc{A}_b^n$ be the set of wealth processes $H \cdot S^n$ where $H$ is predictable and $S^n$-integrable, and $H \cdot S^n$ is bounded from below by a constant.  The value functions $u^{(n,k)}$ are defined as follows: 
\[
u^{(n,k)}(x) \triangleq \underset{X \in \mc{A}_b^n}{\sup} E \left[ U^{(k)}(x + X_T) \right], \  x > -k-1.
\]

By a shift on the real line (see \cite{MR1865021}), one can identify the value functions $v^{(n,k)},u^{(n,k)}$ with an equivalent optimization problem which uses a utility function $\widetilde{U}^{(k)}$ defined on $\mb{R}_+$.  We copy verbatim this procedure here.

Let $\widetilde{U}^{(k)}(x) \triangleq U^{(k)}(x - (k+1))$, which is finitely valued for $x>0$.  Then $\widetilde{U}^{(k)}$ is a utility function of the type encountered in \cite{MR1722287}, and so there is a unique optimal solution $\overline{X}^{(n,k)}(x) = x + H^{(n,k)} \cdot S^n$ to the optimization problem 
\[
\widetilde{u}^{(n,k)}(x) \triangleq \underset{X \in \mc{A}^n_b}{\sup} \  E \left[ \widetilde{U}^{(k)}(X_T) \right], \ x > -k - 1.
\]
Then, for $x > -k-1$, $\widehat{X}^{(n,k)}(x) \triangleq \overline{X}^{(n,k)}(x + k + 1) - (k +1)$ is the optimal solution to the optimization problem
\[
u^{(n,k)}(x) = \underset{X \in \mc{A}^n_b}{\sup} \  E \left[ U^{(k)}(x + X_T) \right], \ x>0. 
\]
It follows that $u^{(n,k)}(x) = \widetilde{u}^{(n,k)}(x+k+1)$ for $x > -k-1$.  Let $\widetilde{V}^{(k)}$ be the convex conjugate of $\widetilde{U}^{(k)}$.  Then the convex conjugate $\widetilde{v}^{(n,k)}$ of $\widetilde{u}^{(n,k)}$ has the form
\[
\begin{split}
\widetilde{v}^{(n,k)}(y) & = \underset{Y \in \mc{Y}^n}{\inf} E \left[ \widetilde{V}^{(k)}(yY_T) \right] \\
& = E \left[ \widetilde{V}^{(k)} \left( y\widetilde{Y}_T^{(n,k)} \right) \right], \ y>0;
\end{split}
\]
Here, $\widetilde{Y}^{(n,k)} = \widetilde{Y}^{(n,k)}(y)$ is the dual minimizer, which in general depends on $y$; the existence of such minimizers is established in \cite{MR1722287}.  We also have 
\begin{equation}\label{Vtilde}
V^{(k)}(y) = \widetilde{V}^{(k)}(y) + (k+1)y
\end{equation}
\noindent and $v^{(n,k)}(y) = \widetilde{v}^{(n,k)}(y) + (k+1)y$.  
The main result of \cite{MR2438002} implies that for each $k$, $\underset{n \ra \infty}{\lim} \widetilde{u}^{(n,k)} = \widetilde{u}^{(\infty,k)}$ under the $\widetilde{V}^k$-compactness condition: $\{ \widetilde{V}^k(Z_T^n) : n \in \mb{N} \}$ is uniformly integrable.

\begin{lemma}\label{exp_lemma3}  Suppose that $Z_T^n \ra Z_T^\infty$ in probability and $\{Z_T^n : n \in \mb{N}\}$ is $V$-compact.  Then for each $k$,  $\underset{n \ra \infty}{\lim} u^{(n,k)}(x) = u^{(\infty,k)}(x)$.
\end{lemma}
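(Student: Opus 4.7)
The plan is to reduce the claim to the utility-on-$\mb{R}_+$ stability theorem of \cite{MR2438002} via the shift identification $u^{(n,k)}(x) = \widetilde{u}^{(n,k)}(x+k+1)$ already recorded above. Pointwise convergence of $u^{(n,k)}$ is equivalent to pointwise convergence of $\widetilde{u}^{(n,k)}$ on $(0,\infty)$, and the cited theorem delivers exactly the latter under two hypotheses: convergence in probability of the density terminals $Z_T^n \to Z_T^\infty$, and $\widetilde{V}^{(k)}$-compactness of the family $\{Z_T^n\}$. The former is part of our standing assumption, so the entire work reduces to deducing $\widetilde{V}^{(k)}$-compactness from the given $V$-compactness.

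For that verification, I would exploit relation \eqref{Vtilde} to write $\widetilde{V}^{(k)}(y) = V^{(k)}(y) - (k+1)y$ and then pin down $V^{(k)}$ from both sides. The upper bound $V^{(k)} \leq V$ is immediate from $U^{(k)} \leq U$. For the lower bound I would plug $x=0$ into the convex conjugate formula and use $U^{(k)}(0) = U(0) = -1$, noting that $0$ lies in $[-k,\infty)$ where $U^{(k)}$ agrees with $U$; this yields $V^{(k)}(y) \geq -1$ for every $y > 0$. Combining the two bounds produces an estimate of the form
\[
\bigl| \widetilde{V}^{(k)}(Z_T^n) \bigr| \ \leq \ \bigl| V(Z_T^n) \bigr| + (k+1) Z_T^n + 1.
\]
Each summand on the right is uniformly integrable in $n$: the family $\{|V(Z_T^n)|\}$ by the assumed $V$-compactness (one passes from $V$ to $|V|$ using $V \geq -1$), and the family $\{Z_T^n\}$ by the de la Vall\'{e}e--Poussin criterion applied with test function $V$, which is legitimate because $V(y)/y = \log y - 1 \to \infty$ as $y \to \infty$ and $\sup_n E[V(Z_T^n)] < \infty$ by $V$-compactness.

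The main obstacle, modest as it is, lies in isolating the correct two-sided bound on $V^{(k)}$. The key observation is that $V^{(k)}$ is bounded below not because $U^{(k)}$ enjoys any uniform upper bound, but specifically because $0$ lies in the half-line $[-k,\infty)$ on which $U^{(k)}$ coincides with $U$, giving a usable value of the conjugate at the test point $x=0$. Once that is noted, the uniform integrability falls out, the cited stability theorem applies, and transferring back through the shift concludes the argument.
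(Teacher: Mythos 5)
Your proposal is correct and follows essentially the same route as the paper: sandwich $V^{(k)}$ between $V$ and a constant lower bound, deduce uniform integrability of $\{Z_T^n\}$ from $V(y)/y\to\infty$ via de la Vall\'ee--Poussin, combine through \eqref{Vtilde} to get $\widetilde{V}^{(k)}$-compactness, and invoke the main theorem of \cite{MR2438002} together with the shift $u^{(n,k)}(x)=\widetilde{u}^{(n,k)}(x+k+1)$. The only difference is that you make explicit the reason $V^{(k)}\geq -1$ (evaluating the conjugate at $x=0$, where $U^{(k)}=U$), which the paper merely asserts as ``$V^{(k)}$ is bounded from below.''
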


\begin{proof}  For each $k$, $V^{(k)} \leq V$ and $V^{(k)}$ is bounded from below, so $\{V^{(k)}(Z_T^n) : n \in \mb{N}\}$ is uniformly integrable.  Since $V(x)/x \ra \infty$ as $ x \ra \infty$, it is also true that $\{Z_T^n : n \in \mb{N}\}$ is uniformly integrable.  Given the form of $\widetilde{V}^{(k)}$ in \eqref{Vtilde}, it now follows that $\{\widetilde{V}^{(k)}(Z_T^n) : n \in \mb{N}\}$ is uniformly integrable.  Hence the main theorem of \cite{MR2438002} implies that $\widetilde{u}^{(n,k)}(x) \ra \widetilde{u}^{(\infty,k)}(x)$ as $n \ra \infty$.  It immediately follows that $u^{(n,k)}(x) \ra u^{(\infty,k)}(x)$ as $n \ra \infty$.
\end{proof}

\begin{lemma}\label{exp_lemma4}  Suppose that $v^*(y) \triangleq \underset{n}{\sup} \  v^n(y) < \infty$ for all $y > 0$.  Then for all $x \in \mb{R}$, $u^*(x) \triangleq \underset{n}{\sup} \  u^n(x) < 0$.
\end{lemma}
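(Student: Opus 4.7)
The plan is standard convex duality combined with the explicit form of the exponential dual $V(y) = y\log y - y$. The first step is to establish weak duality. For any $y>0$, any $X = H \cdot S^n \in \mc{A}^n$, and any $\mb{Q} \in \mc{M}^n$ with $E[V(d\mb{Q}/dP)] < \infty$, applying the Fenchel--Young inequality $U(\alpha) \leq V(\beta) + \alpha\beta$ with $\alpha = x + X_T$, $\beta = y\, d\mb{Q}/dP$, and taking expectations gives
\[
E[U(x+X_T)] \leq E[V(y\, d\mb{Q}/dP)] + xy + y\,E_{\mb{Q}}[X_T] = E[V(y\, d\mb{Q}/dP)] + xy,
\]
where the cancellation uses that $X$ is a $\mb{Q}$-martingale starting from $0$, by Definition \ref{admdef}. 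Measures $\mb{Q} \in \mc{M}^n$ with $E[V(d\mb{Q}/dP)] = \infty$ contribute $+\infty$ and are irrelevant to the infimum in \eqref{dualvaluedef}. Taking the supremum over $X$ and the infimum over $\mb{Q}$ yields $u^n(x) \leq v^n(y) + xy$, and hence $u^*(x) \leq v^*(y) + xy$ for every $y>0$.

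The second step is to compute $v^*$ explicitly. The identity $V(yz) = yz\log y + yz\log z - yz$ gives, for any $\mb{Q} \in \mc{M}^n$,
\[
E[V(y\, d\mb{Q}/dP)] = y\log y - y + y\,H(\mb{Q}|P),
\]
where $H(\mb{Q}|P)$ denotes relative entropy. Hence $v^n(y) = y\log y - y + y\,h^n$ with $h^n \triangleq \underset{\mb{Q} \in \mc{M}^n}{\inf}\ H(\mb{Q}|P) \geq 0$; since $v^*(1) < \infty$ by hypothesis, $h^* \triangleq \underset{n}{\sup}\ h^n < \infty$. Combining this with the duality inequality, $u^*(x) \leq y\log y + y(h^* - 1 + x)$ for every $y>0$, and minimizing the right-hand side over $y$ (the optimizer is $y^\star = e^{-(h^*+x)}$) produces the uniform bound
\[
u^*(x) \leq -e^{-(h^*+x)} < 0,
\]
which is the claim.

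There is no substantial obstacle here: the argument is a routine convex-duality computation, made especially clean by the closed form of the exponential $V$. The only delicate point is restricting to finite-entropy measures in the duality step, so that the cancellation $E_{\mb{Q}}[X_T] = 0$ is available; Definition \ref{admdef} is tailored precisely for this purpose.
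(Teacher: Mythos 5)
Your proof is correct, but it takes a genuinely different route from the paper's. The paper exploits the scaling identity $u^n(x) = e^{-x}u^n(0)$ to reduce to $x=0$, passes to a subsequence along which $u^n(0)$ converges, and argues by contradiction through biconjugation: if $u^*(x)=0$ for some $x$ then $u^*\equiv 0$, whence its convex conjugate $\overline{v}$ would be identically $+\infty$, contradicting $\overline{v}\leq v^*<\infty$; this implicitly leans on the full conjugacy $v^n=(u^n)^*$. You instead use only the easy (weak) direction of duality --- Fenchel--Young plus the martingale property of admissible wealths under finite-entropy measures from Definition \ref{admdef} --- together with the closed form $E\left[V\left(y\,d\mb{Q}/dP\right)\right]=y\log y-y+yH(\mb{Q}|P)$, which turns $v^*(y)<\infty$ into a uniform entropy bound $h^*<\infty$ and, after optimizing over $y$, the explicit estimate $u^*(x)\leq -e^{-(h^*+x)}$. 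What your approach buys is elementarity (no appeal to the duality theorem or to a limiting argument) and a quantitative, strictly negative upper bound rather than a soft contradiction; what the paper's buys is brevity given that the conjugacy $v^n=(u^n)^*$ is already part of its toolkit. The only points worth making explicit in your write-up are that the hypothesis forces $\mc{M}^n$ to contain a finite-entropy element (so $h^n$ is a genuine finite infimum), and that the splitting of the expectation in the Fenchel--Young step is legitimate because $V\geq -1$ and $X_T$ is $\mb{Q}$-integrable; neither is an obstacle.
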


\begin{proof} By passing to a subsequence, we can assume that $u^n(0) \ra u^*(0)$.  For each $n$, $u^n(x) = \exp(-x)u^n(0)$, and similarly for $u^*(x)$.  Hence, $u^n \ra u^*$ locally uniformly and $u^*$ is concave.  Let $\overline{v}$ be the convex dual of $u^*$.  Since $v^n$ and $u^n$ are convex duals, then $\lim_n v^n$ exists and is the convex dual of $u^*$, and hence is equal to $\overline{v}$.  By definition, $\overline{v} \leq v^*$.  Suppose that for some $x$, $u^*(x) = 0$.  Then $u^* \equiv 0$.  But, if $u^* \equiv 0$, then it would be that $\overline{v}(y) = \underset{x \in \mb{R}}{\sup} [u^*(x) - xy] \equiv \infty$, which contradicts the finiteness of $v^*(y)$.  Thus, $u^*(x)$ is bounded away from zero.
\end{proof}
Let 
\[x + \widehat{X}^n \triangleq x + \widehat{X}^n(0) = \widehat{X}^n(x)
\]
be the optimal wealth process in market $n$ from initial capital $x$.  This special form for the optimal wealth processes is due to the wealth homogeneity of the exponential utility.  Let $\mc{T}$ be the set of $[0,T]$-valued stopping times.
\begin{proposition}\label{prop3} Suppose that $\underset{n}{\sup} \ ||\lambda^n \cdot M||_{bmo_2} < \infty$.  Then $\{ \exp(-\widehat{X}^n_\tau) : n \in \mb{N}, \tau \in \mc{T} \}$ is uniformly integrable.
\end{proposition}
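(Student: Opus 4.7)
The proof hinges on the duality identity for exponential utility together with the $\mc{R}_{LLogL}$ property that Lemma \ref{exp_lemma1} guarantees for $\widehat{Z}^n$. First I would invoke the first-order optimality condition from \cite{MR1865021}: $\exp(-\widehat{X}^n_T) = y^*_n \widehat{Z}^n_T$, where $y^*_n := -u^n(0)$. Since the strategy $X \equiv 0$ is admissible, $u^n(0) \geq U(0) = -1$, so $y^*_n \in (0,1]$ and $\log y^*_n \leq 0$. Taking logarithms gives the clean pointwise identity $-\widehat{X}^n_T = \log y^*_n + \log \widehat{Z}^n_T$.

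Next, the $bmo_2$ hypothesis plus Proposition \ref{exp_prop1} says $Z^n$ satisfies $\mc{R}_{LLogL}$ with a uniform constant, and Lemma \ref{exp_lemma1} transfers this to $\widehat{Z}^n$ with some uniform constant $K$. In particular $\widehat{\mb{Q}}^n$ has finite entropy, so by Definition \ref{admdef} the optimizer $\widehat{X}^n \in \mc{A}^n$ is a $\widehat{\mb{Q}}^n$-martingale. Conditioning the logarithmic identity on $\mc{F}_\tau$ under $\widehat{\mb{Q}}^n$ and splitting $\log \widehat{Z}^n_T = \log \widehat{Z}^n_\tau + \log(\widehat{Z}^n_T/\widehat{Z}^n_\tau)$ yields
\[
-\widehat{X}^n_\tau = \log y^*_n + \log \widehat{Z}^n_\tau + E^{\widehat{\mb{Q}}^n}\!\left[\log \tfrac{\widehat{Z}^n_T}{\widehat{Z}^n_\tau}\,\Big|\,\mc{F}_\tau\right].
\]
The change-of-measure formula rewrites the conditional expectation as $E\!\left[\tfrac{\widehat{Z}^n_T}{\widehat{Z}^n_\tau} \log \tfrac{\widehat{Z}^n_T}{\widehat{Z}^n_\tau}\,\big|\,\mc{F}_\tau\right]$, which (using $V(y) = y \log y - y$ and the martingale identity $E[\widehat{Z}^n_T/\widehat{Z}^n_\tau \mid \mc{F}_\tau] = 1$) is bounded by $K + 1$ uniformly thanks to the $\mc{R}_{LLogL}$ property of $\widehat{Z}^n$. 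Combining these ingredients with $\log y^*_n \leq 0$ produces the pathwise dominating inequality
\[
\exp(-\widehat{X}^n_\tau) \leq e^{K+1}\, \widehat{Z}^n_\tau.
\]

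It therefore suffices to establish uniform integrability of $\{\widehat{Z}^n_\tau : n \in \mb{N},\, \tau \in \mc{T}\}$. Setting $\tau = 0$ in the $\mc{R}_{LLogL}$ bound gives $\sup_n E[\widehat{Z}^n_T \log \widehat{Z}^n_T] < \infty$, so by de la Vallée-Poussin applied with $\Phi(x) = x \log x$ the family $\{\widehat{Z}^n_T\}$ is uniformly integrable. Since conditional expectations of a uniformly integrable family are again uniformly integrable, $\{\widehat{Z}^n_\tau = E[\widehat{Z}^n_T \mid \mc{F}_\tau]\}$ is UI; domination then transfers UI to $\{\exp(-\widehat{X}^n_\tau)\}$.

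The main conceptual step is noticing the algebraic coincidence between the $\widehat{\mb{Q}}^n$-martingale representation of $\widehat{X}^n$ and the $\mc{R}_{LLogL}$ quantity: the conditional expectation of $\log(\widehat{Z}^n_T/\widehat{Z}^n_\tau)$ under $\widehat{\mb{Q}}^n$ is exactly the conditional relative entropy, which is precisely what Lemma \ref{exp_lemma1} controls. The rest is bookkeeping, and the $V$-compactness--style arguments never enter: the $\mc{R}_{LLogL}$ inequality alone drives the whole estimate.
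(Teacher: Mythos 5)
Your argument is correct and follows essentially the same route as the paper's proof: the dual relation $e^{-\widehat{X}^n_T}=y^*_n\,\widehat{Z}^n_T$, conditioning under $\widehat{\mb{Q}}^n$ via Bayes' rule, the uniform $\mc{R}_{LLogL}$ bound supplied by Proposition \ref{exp_prop1} and Lemma \ref{exp_lemma1}, and domination of $\exp(-\widehat{X}^n_\tau)$ by a constant multiple of the uniformly integrable family $\{\widehat{Z}^n_\tau\}$. The only (harmless) difference is in handling the normalizing constant: you observe directly that $y^*_n=-u^n(0)\in(0,1]$, whereas the paper controls the analogous constant by invoking Corollary \ref{exp_cor1} and Lemma \ref{exp_lemma4} to get $u^*<0$.
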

\begin{proof}
Recall $\widehat{Z}^n$ is the density of the minimal entropy martingale measure for $S^n$, which we denote by $\widehat{\mb{Q}}^n$.  From Theorem $2.2$ of \cite{MR1865021}, $\widehat{X}^n$ is a true $\widehat{\mb{Q}}^n$-martingale for each $n$.  From Theorem $2.2$ of \cite{MR1865021} again, we have $c_n e^{-\widehat{X}^n_T} = \widehat{Z}^n_T$ for some constant $c_n$.

Taking conditional expectations under $\widehat{\mb{Q}}^n$ via Bayes' rule, and using the fact that $\widehat{X}^n$ is a $\widehat{\mb{Q}}^n$-martingale, we obtain
\begin{eqnarray*}
\lefteqn{\log c_n - \widehat{X}^n_\tau} \\
&& = E^{\widehat{\mb{Q}}^n}\left[\log c_n - \widehat{X}^n_T | \mc{F}_\tau \right] \\
&& = E^{\widehat{\mb{Q}^n}} \left[\log \widehat{Z}^n_T | \mc{F}_\tau \right] \\
&& = E \left[\frac{\widehat{Z}^n_T}{\widehat{Z}^n_\tau} \log \widehat{Z}^n_T \ \Bigg| \ \mc{F}_\tau \right] \\
&& = E \left[\frac{\widehat{Z}^n_T}{\widehat{Z}^n_\tau} \left( \log  \frac{\widehat{Z}^n_T}{\widehat{Z}^n_\tau} + \log \widehat{Z}^n_\tau \right) \ \Bigg| \ \mc{F}_\tau \right] \\
&& = E \left[\frac{\widehat{Z}^n_T}{\widehat{Z}^n_\tau} \log \frac{\widehat{Z}^n_T}{\widehat{Z}^n_\tau} \ \Bigg| \  \mc{F}_\tau \right] + \log \widehat{Z}^n_\tau.
\end{eqnarray*}

Exponentiating the previous inequality, we obtain 
\begin{eqnarray*}
\lefteqn{\exp(-\widehat{X}^n_\tau)} \\
&& = \frac{1}{c_n}\widehat{Z}^n_\tau \exp \left( E \left[\frac{\widehat{Z}^n_T}{\widehat{Z}^n_\tau} \log \frac{\widehat{Z}^n_T}{\widehat{Z}^n_\tau} \ | \mc{F}_\tau \right] \right) \\
&& \leq \frac{1}{c_n}e^{\widehat{K}^n_{LLogL} + 1} \widehat{Z}^n_\tau,
\end{eqnarray*}
where $\widehat{K}^n_{LLogL}$ is the $\mc{R}_{LLogL}$ constant of $\widehat{Z}^n$.  According to Proposition \ref{exp_prop1} and Lemma \ref{exp_lemma1}, $\underset{n}{\sup} \ \widehat{K}^n_{LLogL} < \infty$.  By Corollary \ref{exp_cor1}, $v^*(y) < \infty$, and so Lemma \ref{exp_lemma4} implies that $u^* < 0$.  Note that $c_n = -u^n(0)$.  Thus, $\underset{n}{\inf} \ c_n > 0$, so that $\underset{n}{\sup} \ \frac{1}{c_n} < \infty$.  We may then write 
\begin{equation}\label{eq:wsuiorhs}
\exp(-\widehat{X}^n_\tau) \leq \textbf{C} \widehat{Z}^n_\tau
\end{equation} 
\noindent for some constant $\textbf{C}$, so that the inequality is valid for all $n$ and all $\tau$. In what follows we will show that the right-hand-side of  \eqref{eq:wsuiorhs} is uniformly integrable, which completes the proof. Since $\underset{n}{\sup} \ E \left[ V(\widehat{Z}^n_T) \right] < \infty$ (thanks to $V$-compactness and Lemma~\ref{exp_lemma1}) and $V(x)/x \ra \infty$ as $x \ra \infty$, the de la Vall\'{e}e-Poussin criterion implies that $\{\widehat{Z}_T^n : n \in \mb{N} \}$ is uniformly integrable.  Since each $\widehat{Z}^n$ is a martingale, this extends to the uniform integrability of $\{ \widehat{Z}^n_\tau : n \in \mb{N}, \tau \in \mc{T} \}$.
\end{proof}

\begin{remark} In the literature (see \cite{MR2152241}), admissible wealth processes are sometimes defined directly to be those satisfying the conclusion of Proposition \ref{prop3}, i.e. having uniformly integrable utility over all stopping times.
\end{remark}
 
For $i \in \mb{Z}$, let $\widehat{\tau}^{(n,i)} \triangleq \inf \{t : \widehat{X}^n_t = i \}$, and let $\widehat{X}^{(n,i)} \triangleq (\widehat{X}^n)^{\widehat{\tau}^{(n,i)}}=\left(\widehat{X}^n_{{\widehat{\tau}^{(n,i)} \wedge t}}\right)_{t \in [0,T]}$.  

\begin{lemma}\label{exp_lemma5} Suppose that $\underset{n}{\sup} \ ||\lambda^n \cdot M||_{bmo_2} < \infty$.  Then for each $i \in \mb{N}$, the collection $\{ (\widehat{X}^{(n,i)})^* : n \in \mb{N} \}$ is bounded in probability.
\end{lemma}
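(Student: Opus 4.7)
The plan is to exploit continuity of $\widehat{X}^n$ (inherited from the continuity of $M$ and hence of $S^n$), which lets us control the stopped process by controlling first-passage times to levels, and then apply Proposition~\ref{prop3} to bound the probability of reaching very negative levels.

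First I would observe the easy half: since $\widehat{X}^n$ is continuous and $\widehat{\tau}^{(n,i)}$ is the first hitting time of the level $i > 0$ starting from $\widehat{X}^n_0 = 0$, the stopped trajectory $\widehat{X}^{(n,i)}_t$ never exceeds $i$. Therefore
\[
(\widehat{X}^{(n,i)})^* \;=\; \max\Bigl(\sup_{t}\widehat{X}^{(n,i)}_t,\; -\inf_t \widehat{X}^{(n,i)}_t\Bigr) \;\leq\; \max\Bigl(i,\; -\inf_{0\leq t\leq T}\widehat{X}^n_t\Bigr),
\]
so it suffices to show that $\{\inf_{0 \leq t \leq T} \widehat{X}^n_t : n \in \mb{N}\}$ is bounded below in probability.

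Next, for $M > 0$ let $\sigma^{(n,M)} \triangleq \inf\{t : \widehat{X}^n_t = -M\} \wedge T$. By continuity of $\widehat{X}^n$, on the event $\{\inf_{0 \leq t \leq T}\widehat{X}^n_t \leq -M\}$ we have $\widehat{X}^n_{\sigma^{(n,M)}} = -M$, and hence $\exp(-\widehat{X}^n_{\sigma^{(n,M)}}) \geq e^M$ there. Since $\sigma^{(n,M)} \in \mc{T}$, Markov's inequality and Proposition~\ref{prop3} give
\[
P\Bigl(\inf_{0\leq t\leq T}\widehat{X}^n_t \leq -M\Bigr) \;\leq\; P\bigl(\exp(-\widehat{X}^n_{\sigma^{(n,M)}}) \geq e^M\bigr) \;\leq\; e^{-M}\, \underset{n,\tau}{\sup}\, E\!\left[\exp(-\widehat{X}^n_\tau)\right],
\]
and the supremum on the right is finite because the collection in Proposition~\ref{prop3} is uniformly integrable, hence bounded in $L^1$. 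Letting $M \to \infty$ yields the desired uniform tail estimate.

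The only step that requires any care is confirming that the hitting time identity really holds, i.e.\ that $\widehat{X}^n$ is continuous; this follows because $S^n$ is continuous (its martingale part is $M$ and its finite variation part is absolutely continuous with respect to $\langle M \rangle$), and wealth processes are stochastic integrals against $S^n$. Once continuity and Proposition~\ref{prop3} are in hand, the argument is simply a Markov-inequality translation of the uniform $L^1$ bound on $\exp(-\widehat{X}^n_\tau)$ into a uniform bound in probability on $\inf_t \widehat{X}^n_t$, so I do not anticipate any substantial obstacle.
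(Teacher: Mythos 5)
Your proof is correct, but it takes a genuinely different route from the paper's. The paper's argument runs through the weighted norm inequality: it notes that $\widehat{X}^{(n,i)}$ is a $\mb{Q}^n$-martingale (via Theorem $1$ of \cite{MR2014244}, using the finite entropy of $\mb{Q}^n$ from Corollary \ref{exp_cor1}), invokes the uniform reverse H\"older inequality $\mc{R}_p(P)$ for the $Z^n$ from Lemma \ref{lemmabmoholder}, and then applies Proposition \ref{exp_prop4} to get $\lambda^q P\left((\widehat{X}^{(n,i)})^* > \lambda\right) \leq C_p E\left[|\widehat{X}^{(n,i)}_T|^q\right]$, finally bounding the right-hand side by $C_p(i^q + \mathbf{C}_i E[\exp(-\widehat{X}^{(n,i)}_T)])$ and appealing to Proposition \ref{prop3}. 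You dispense with the $\mb{Q}^n$-martingale property and the weighted norm machinery altogether: the continuity of the wealth processes (which does hold here, since $M$, hence $S^n$, hence every $H \cdot S^n$ is continuous) gives the one-sided bound $\widehat{X}^{(n,i)} \leq i$ for free, reduces the claim to a uniform lower tail estimate on $\inf_{0 \leq t \leq T} \widehat{X}^n_t$, and delivers that estimate by evaluating $\exp(-\widehat{X}^n)$ at the first passage time to $-M$ and applying Markov's inequality to the uniform $L^1$ bound over stopping times contained in Proposition \ref{prop3}. Your version is more elementary, uses strictly fewer of the $bmo$ consequences (only Proposition \ref{prop3}, not the uniform $\mc{R}_p(P)$ constants), and in fact produces an exponential tail $Ce^{-\lambda}$ in place of the paper's polynomial tail $C\lambda^{-q}$; the paper's approach is the one that would survive if one needed to handle discontinuous wealth processes, where the exact hitting identity $\widehat{X}^n_{\sigma^{(n,M)}} = -M$ is no longer available. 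All the individual steps you flag check out: the first passage time of a continuous adapted process is a stopping time in $\mc{T}$, and uniform integrability in Proposition \ref{prop3} does give the $L^1$ bound you need.
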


\begin{remark}\label{lemma5remark}
The conclusions of Lemma \ref{exp_lemma5} and Proposition \ref{prop3} will be shown to be sufficient for obtaining continuity of the utility maximization problems.  Given the strength of the $bmo$ hypothesis, it is natural to ask whether these conditions are also necessary.  In Appendix $A$, it is shown that the conclusion of Lemma \ref{exp_lemma5} is indeed necessary.  The conclusion of Proposition \ref{prop3}, however, is not, and it in fact may fail within a \emph{single} market.  We give an example of this in Appendix $B$.  Note that this market, and indeed all continuous markets, still satisfy the local $bmo$ hypothesis of Assumption \ref{regass2}.
\end{remark}

\begin{proof} Let $\mb{Q}^n$ be the probability measure associated to the minimal martingale $Z^n$, which is continuous.  By Corollary \ref{exp_cor1}, each $\mb{Q}^n$ has finite entropy.  Theorem $1$ of \cite{MR2014244} implies that $\widehat{X}^n$ is a $\mb{Q}^n$-martingale for each $n$.  Then it is also true that $\widehat{X}^{(n,i)}$ is a $\mb{Q}^n$-martingale for each $n$.  Since $\underset{n}{\sup} \  ||\lambda^n \cdot M||_{bmo_2} < \infty$, Lemma \ref{lemmabmoholder} implies that there exists a $p>1$ such that each $Z^n$ satisfies the Reverse H\"{o}lder inequality $\mathcal{R}_p(P)$ with uniformly bounded constant $C_p$.  

By Proposition \ref{exp_prop4}, for $q = \frac{p}{p-1}$,
\begin{eqnarray*}
\lefteqn{\lambda^q P \left((\widehat{X}^{(n,i)})^* > \lambda \right)} \\
&&  \leq C_p E \left[ \left| \widehat{X}^{(n,i)}_T \right|^q \right] \\
&& \leq C_p \left( i^q + \textbf{C}_i E \left[ \exp \left( -\widehat{X}^{(n,i)}_T \right) \right] \right) \\
&& \leq C_p(i^q + \widetilde{\textbf{C}}_i),
\end{eqnarray*}
for constants $\textbf{C}_i, \widetilde{\textbf{C}}_i$ independent of $n$, with the third inequality a consequence of Proposition \ref{prop3}.  
\end{proof}
\begin{proposition}\label{lastprop}  Suppose $\underset{n}{\sup} \ ||\lambda^n \cdot M||_{bmo_2} < \infty$.  Then $u^{(n,k)} \ra u^n$ as $k \ra \infty$, uniformly over the markets $n$.
\end{proposition}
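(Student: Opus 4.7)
The plan is as follows. The inequality $u^{(n,k)}(x) \leq u^n(x)$ is immediate from $U^{(k)} \leq U$ (combined with the standard fact that, for the exponential problem, restricting to bounded-below strategies does not enlarge the supremum), so the real work is to show $u^n(x) - u^{(n,k)}(x) \to 0$ as $k \to \infty$, uniformly in $n$. The key idea is to exhibit an explicit bounded-below, near-optimal strategy for the $U^{(k)}$-problem by stopping the unconstrained exponential optimizer $\widehat{X}^n$ the first time the wealth $x + \widehat{X}^n$ reaches the level $-k$.

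Fix $x \in \mb{R}$. Since $M$ (and hence every wealth process on market $n$) is continuous, set $\tau_{n,k} \triangleq \inf\{t \in [0,T] : x + \widehat{X}^n_t \leq -k\}$ and $X^{(n,k)} \triangleq (\widehat{X}^n)^{\tau_{n,k}}$. By construction $x + X^{(n,k)}_t \geq -k$, so $X^{(n,k)} \in \mc{A}^n_b$ and $U^{(k)}(x + X^{(n,k)}_T) = U(x + X^{(n,k)}_T)$; in particular $u^{(n,k)}(x) \geq E[U(x + X^{(n,k)}_T)]$. Using $X^{(n,k)}_T = \widehat{X}^n_T$ on $\{\tau_{n,k} > T\}$ and (by continuity) $x + X^{(n,k)}_T = -k$ on $\{\tau_{n,k} \leq T\}$, a direct computation yields
\[
0 \leq u^n(x) - u^{(n,k)}(x) \leq e^{-x}\, E\bigl[\bigl(e^{k+x} - e^{-\widehat{X}^n_T}\bigr) 1_{\{\tau_{n,k} \leq T\}}\bigr] \leq e^k P(\tau_{n,k} \leq T).
\]

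It remains to bound $P(\tau_{n,k} \leq T)$ uniformly in $n$, which is exactly where Proposition \ref{prop3} enters. By continuity, $e^{-\widehat{X}^n_{\tau_{n,k} \wedge T}} = e^{k+x}$ on $\{\tau_{n,k} \leq T\}$, and so
\[
e^{k+x} P(\tau_{n,k} \leq T) = E\bigl[e^{-\widehat{X}^n_{\tau_{n,k} \wedge T}} 1_{\{\tau_{n,k} \leq T\}}\bigr].
\]
By Proposition \ref{prop3}, the family $\{e^{-\widehat{X}^n_\rho} : n \in \mb{N},\ \rho \in \mc{T}\}$ is uniformly integrable, so for every $\epsilon > 0$ there is $N_\epsilon$ with $E[e^{-\widehat{X}^n_\rho} 1_{\{e^{-\widehat{X}^n_\rho} > N_\epsilon\}}] < \epsilon$ uniformly in $n$ and $\rho$. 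Choosing $k$ so large that $e^{k+x} > N_\epsilon$, the inclusion $\{\tau_{n,k} \leq T\} \subseteq \{e^{-\widehat{X}^n_{\tau_{n,k} \wedge T}} > N_\epsilon\}$ gives $e^{k+x} P(\tau_{n,k} \leq T) < \epsilon$, whence $u^n(x) - u^{(n,k)}(x) \leq \epsilon e^{-x}$ for every $n$. The only obstacle in the plan — obtaining this estimate uniformly across markets — is precisely what the $bmo$-driven Proposition \ref{prop3} is designed to supply.
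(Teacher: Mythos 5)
Your proof is correct, but it takes a genuinely different route from the paper's. The paper stops the optimizer $\widehat{X}^n$ at \emph{both} an upper level $i$ and a lower level $-k$: the upper stopping is needed so that Lemma \ref{exp_lemma5} (proved via the weighted norm inequality of Proposition \ref{exp_prop4}) applies and yields $\sup_n P\left(\widehat{\tau}^{(n,-k)} < \widehat{\tau}^{(n,i)}\right) \to 0$, and only then is the uniform integrability of Proposition \ref{prop3} invoked to convert this small probability into a small utility difference. You dispense with the upper stopping and with Lemma \ref{exp_lemma5} altogether by exploiting the exact exponential form of $U$: the utility cost of truncating at $-k$ is at most $e^{k}P(\tau_{n,k}\leq T)$, and $e^{k+x}P(\tau_{n,k}\leq T)=E\bigl[e^{-\widehat{X}^n_{\tau_{n,k}\wedge T}}1_{\{\tau_{n,k}\leq T\}}\bigr]$ is precisely a tail expectation of the uniformly integrable family of Proposition \ref{prop3} at a level that diverges with $k$, so the $e^{k}$ cost is beaten automatically and uniformly in $n$. (Your use of path continuity to get $x+\widehat{X}^n_{\tau_{n,k}}=-k$ exactly, and to place the stopped process in $\mc{A}^n_b$, is legitimate here since $M$ is a continuous local martingale throughout the paper; the inequality $u^{(n,k)}\leq u^n$ is indeed the standard fact from Step 1 of Theorem 2.2 of the reference the paper itself relies on.) What your argument buys is brevity and the elimination of the weighted-norm machinery from this proposition; what the paper's two-stage decomposition buys is that it isolates the two structural conditions --- the conclusions of Lemma \ref{exp_lemma5} and Proposition \ref{prop3} --- whose necessity and possible failure are then analyzed in Remark \ref{lemma5remark} and Appendices \ref{exp_app1} and \ref{exp_app2}, so the longer route is not wasted effort in the context of the paper as a whole.
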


\begin{remark} As indicated by Proposition \ref{unifco} in Appendix $A$, the uniform approximation condition given above is both necessary and sufficient for convergence of the utility maximization problem.
\end{remark}

\begin{proof}  
Let $\epsilon > 0$.  Fix $i \in \mb{N}$ large enough so that $0 > -\exp(-i) > - \epsilon$.  Then 
\begin{equation}\label{preq1}
\begin{split}
 u^n(0) & \geq E \left[ U(\widehat{X}^{(n,i)}_T) \right] \\
& > u^n(0) - \epsilon \text{ for all } n \in \mb{N}. 
\end{split}
\end{equation}
 For $k \in \mb{N}$, let $\widehat{X}^{(n,i,-k)} \triangleq (\widehat{X}^n)^{\widehat{\tau}^{(n,-k)} \wedge \widehat{\tau}^{(n,i)}} = (\widehat{X}^{(n,i)})^{\widehat{\tau}^{(n,-k)}}$.  We claim that 
\begin{equation}\label{preq2}
\underset{k \ra \infty}{\lim} \underset{n \in \mb{N}}{\sup} \ P \left(\widehat{\tau}^{(n,-k)} < \widehat{\tau}^{(n,i)} \right) = 0.
\end{equation}
\noindent Indeed, Lemma \ref{exp_lemma5} implies that the collection $\{ (\widehat{X}^{(n,i)})^* : n \in \mb{N} \}$ is bounded in probability.  Therefore, $\underset{k \ra \infty}{\lim} \underset{n}{\sup} \ P((\widehat{X}^{(n,i)})^* \geq k) = 0$.  But $P(\widehat{\tau}^{(n,-k)} < \widehat{\tau}^{(n,i)}) \leq P((\widehat{X}^{(n,i)})^* \geq k)$, which establishes \eqref{preq2}.
  We next claim that 
\begin{equation}\label{preq2a}
\underset{k \ra \infty}{\lim} \underset{n}{\sup} \ \bigg| E \left[U(\widehat{X}_T^{(n,i)}) \right] - E \left[ U(\widehat{X}^{(n,i,-k)}_T) \right] \bigg| = 0.
\end{equation}
 Let $\epsilon_2 > 0$.  Write 
\begin{eqnarray*}
\lefteqn{E \left[U(\widehat{X}_T^{(n,i,-k)}) \right]} \\
&& = E \left[ U(\widehat{X}_T^{(n,i)}) 1_{\{\widehat{\tau}^{(n,-k)} \geq \widehat{\tau}^{(n,i)}\}} + U(\widehat{X}^{(n,i,-k)}_T) 1_{\{\widehat{\tau}^{(n,-k)} < \widehat{\tau}^{(n,i)} \}} \right] \\
&& = E \left[U(\widehat{X}_T^{(n,i)}) -  U(\widehat{X}_T^{(n,i)}) 1_{\{\widehat{\tau}^{(n,-k)} < \widehat{\tau}^{(n,i)}\}} + U(\widehat{X}^{(n,i,-k)}_T) 1_{\{\widehat{\tau}^{(n,-k)} < \widehat{\tau}^{(n,i)} \}}\right].
\end{eqnarray*}
According to Proposition \ref{prop3}, the set $\{\exp(-\widehat{X}^n_\tau) : n \in \mb{N}, \tau \in \mc{T} \}$ is uniformly integrable, which immediately implies that the set $\{\exp(-\widehat{X}^{(n,i)}_T), \exp(-\widehat{X}^{(n,i,-k)}_T) : n,k \in \mb{N}\}$ is uniformly integrable.  Therefore, there exists $\delta = \delta(\epsilon_2) > 0$ such that for any set $A$, $P(A) < \delta$ implies that $\max \left \{ E[U(\widehat{X}_T^{(n,i)})1_A], E[U(\widehat{X}^{(n,i,-k)}_T)1_A] \right \}< \epsilon_2$.  According to $\eqref{preq2}$, there exists $k_0 \in \mb{N}$ such that for $k \geq k_0$ and all $n \in \mb{N}$, the sets $\{\widehat{\tau}^{(n,-k)} < \widehat{\tau}^{(n,i)}\}$ have probability less than $\delta$.  Therefore, for $k \geq k_0$ and all $n \in \mb{N}$, $\max\left\{E \left[U(\widehat{X}_T^{(n,i)}) 1_{\{\widehat{\tau}^{(n,-k)} < \widehat{\tau}^{(n,i)}\}} \right], E \left[U(\widehat{X}^{(n,i,-k)}_T)1_{\{\widehat{\tau}^{(n,-k)} < \widehat{\tau}^{(n,i)}\}} \right] \right\}< \epsilon_2$.  Thus, for $k \geq k_0$ and all $n \in \mb{N}$, we have 
\[
\bigg| E \left[U(\widehat{X}_T^{(n,i)}) \right] - E \left[ U(\widehat{X}^{(n,i,-k)}_T) \right] \bigg| < 2\epsilon_2,
\]
and \eqref{preq2a} is established.  Then \eqref{preq1} and \eqref{preq2a} imply that 
\begin{equation}\label{preq3}
\underset{k \ra \infty}{\lim} \ \underset{n \in \mb{N}}{\sup} \ \bigg| \ u^n(0) - E \left[ U(\widehat{X}^{(n,i,-k)}_T) \right] \bigg| \leq \epsilon.
\end{equation}
  
\noindent Since $\widehat{X}^{(n,i, -k)} > -k-1$, by definition, $u^{(n,k)}(0) \geq \left[ U(\widehat{X}_T^{(n,i,-k)}) \right]$.  Then \eqref{preq3} and the fact that $u^{(n,k)} \leq u^n$ imply that for any $\epsilon > 0$,
\begin{equation}\label{preq4}
\underset{k \ra \infty}{\lim} \ \underset{n \in \mb{N}}{\sup} \ |u^n(0) - u^{(n,k)}(0)| \leq \epsilon,
\end{equation}

\noindent implying that $\underset{k \ra \infty}{\lim} \ \underset{n \in \mb{N}}{\sup} \ |u^{(n,k)}(0) - u^n(0)| = 0$, i.e. that $u^{(n,k)} \ra u^n$ as $k \ra \infty$, uniformly over $n$.
\end{proof}

\subsection{Proof of the intermediate theorem}

\begin{proof}[Proof of Theorem \ref{mainthm}]
It follows from Corollary~\ref{exp_cor1}  and Lemma~\ref{exp_lemma3} that for each $k$, $\underset{n \ra \infty}{\lim} u^{(n,k)} = u^{(\infty,k)}$.  Proposition \ref{lastprop}, on the other hand, states that $\underset{k \ra \infty}{\lim} u^{(n,k)} = u^n$, uniformly over $n$.  These facts together imply that $\underset{n \ra \infty}{\lim} u^n = u^\infty$.
\end{proof}

%\begin{remark}\label{rem:uniformapprox}
%Since $u^{n,k}$ is increasing in $k$, a double sequence version of Dini's theorem implies that $u^{n} \ra u^{\infty}$  if and only if $u^{n,k} \to u^{n}$, uniformly over $n$. 
%That is, for the stability result in Theorem~\ref{mainthm} it is necessary and sufficient that one can approximate the value functions $u^n$ uniformly by bounding the wealth processes from below. The bmo hypothesis is the key to establishing this uniform approximation property, as can be seen from Proposition~\ref{lastprop}.
%\end{remark}

\section{Proofs of the Main Theorems}\label{exp_prf}

We establish the main Theorem \ref{fin.mainthm} in pieces, establishing lower semi-continuity and upper semi-continuity separately.  The proof of lower semi-continuity is the easier of the two, and indeed is not dependent on the special structure of the exponential utility function.

\begin{lemma}\label{lsclemma}  Suppose that $Z^n_T \ra Z^\infty_T$ in probability and that $\left \{Z_T^n : n \in \mb{N} \right \}$ is $V$-compact, i.e. Assumption \ref{regass1} holds.  Then $u^\infty(x) \leq \underset{n \ra \infty}{\liminf} \  u^n(x)$.
\end{lemma}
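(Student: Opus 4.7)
The plan is to reduce to the half-line stability result of \cite{MR2438002} (embodied in Lemma \ref{exp_lemma3}) via the truncated utilities $U^{(k)}$ from Section \ref{exp_approx}. The argument has two steps: first, for each $k$, deduce $u^{(\infty,k)}(x) \leq \liminf_n u^n(x)$; second, pass to the limit in $k$ on the left.

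For step one, Lemma \ref{exp_lemma3} directly gives $u^{(n,k)}(x) \ra u^{(\infty,k)}(x)$ as $n \ra \infty$ for each $k$. To conclude, one combines this with the comparison $u^{(n,k)}(x) \leq u^n(x)$. The latter is justified via duality: from $u^{(n,k)}(x) = \inf_{y>0}[v^{(n,k)}(y) + xy]$ (the half-line duality of \cite{MR1722287} applied to the shifted problem) and $u^n(x) = \inf_{y>0}[v^n(y) + xy]$ (the standard exponential-utility duality), it suffices to note $v^{(n,k)}(y) \leq v^n(y)$, which holds because $V^{(k)} \leq V$ and because the density process of any $\mb{Q} \in \mc{M}^n$ is itself a supermartingale deflator, i.e.\ lies in $\mc{Y}^n$. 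Consequently, $u^{(\infty,k)}(x) \leq \liminf_n u^n(x)$ for every $k$.

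For step two, letting $k \ra \infty$, we invoke the single-market convergence $u^{(\infty,k)}(x) \nearrow u^\infty(x)$. The $V$-compactness hypothesis together with $Z^n_T \ra Z^\infty_T$ in probability ensures $E[V(Z^\infty_T)] < \infty$ by uniform integrability, and hence $v^\infty(y) < \infty$ for all $y > 0$; this finiteness is exactly what is required to invoke the single-market approximation for market $\infty$.

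The chief obstacle is this last step: Proposition \ref{lastprop} establishes $u^{(n,k)} \ra u^n$ uniformly in $n$ only under the stronger $bmo_2$ bound, which is not assumed here, so one must appeal to a single-market convergence result from the literature (e.g.\ in the spirit of \cite{MR1865021}). A secondary subtlety worth flagging is that $\mc{A}^n_b \not\subset \mc{A}^n$ in general, since a wealth process bounded below is only a supermartingale (not necessarily a martingale) under finite-entropy martingale measures; this is why the inequality $u^{(n,k)} \leq u^n$ is routed through duality rather than by a direct inclusion of admissibility classes.
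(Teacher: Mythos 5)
Your proposal is correct and follows essentially the same route as the paper: the paper also deduces the result from Lemma \ref{exp_lemma3} combined with the identity $u^n(x) = \sup_{k} u^{(n,k)}(x)$ (Step 1 of Theorem 2.2 of \cite{MR1865021}), which packages together your two ingredients $u^{(n,k)} \leq u^n$ and $u^{(\infty,k)} \uparrow u^\infty$, and then interchanges $\liminf_n$ with $\sup_k$. Your extra care in routing $u^{(n,k)} \leq u^n$ through duality rather than through an inclusion of admissibility classes, and in noting that the $k\ra\infty$ limit must come from the single-market theory rather than Proposition \ref{lastprop}, is sound but does not change the substance of the argument.
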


\begin{proof} As in the proof of Lemma \ref{exp_lemma3}, the $V$-compactness of $ \left \{Z^n_T : n \in \mb{N} \right \}$ implies that this set is also $V^{(k)}$-compact, where $V^{(k)}$ is the dual of the ``truncated" utility function $U^{(k)} \leq U$, defined at the beginning of Section \ref{exp_approx}.  By Lemma \ref{exp_lemma3} and the main theorem of \cite{MR2438002}, $\underset{n \ra \infty}{\lim} \  u^{(n,k)}(x) = u^{(\infty,k)}(x)$ for each $k \in \mb{N}$. By Step $1$ of Theorem $2.2$ of \cite{MR1865021}, $u^n(x) = \underset{k \in \mb{N}}{\sup} \ u^{(n,k)}(x)$ for each $n$.  Therefore 
\[
\begin{split}
\underset{n \ra \infty}{\liminf} \ u^n(x) & = \underset{n \ra \infty}{\liminf} \  \underset{k \in \mb{N}}{\sup} \ u^{(n,k)}(x) \\
& \geq \underset{k \in \mb{N}}{\sup} \ \underset{n \ra \infty}{\liminf} \  u^{(n,k)}(x) \\
& = \underset{k \in \mb{N}}{\sup} \ u^{(\infty,k)}(x) \\
& = u^\infty(x).
\end{split}
\]  
\end{proof}

We now establish upper semi-continuity.

\begin{proposition}\label{usclemma}  Suppose that there exists a sequence of stopping times $(\tau_j) \uparrow T$ such that for each $j$, $\underset{n}{\sup} \ ||(\lambda^n \cdot M)^{\tau_j}||_{bmo} < \infty$, i.e. Assumption \ref{regass2} holds.  Additionally, suppose that $V(Z^\infty_T) \in L^1$ and $Z^\infty$ is a martingale. Then $u^\infty(x) \geq \underset{n \ra \infty}{\liminf} \  u^n(x)$.
\end{proposition}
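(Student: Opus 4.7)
The plan is to reduce the problem to the stopped markets $S^{n,j} := (S^n)^{\tau_j}$, where Assumption \ref{regass2} delivers the full uniform $bmo_2$ bound required for the intermediate Theorem \ref{mainthm}. Letting $u^{n,j}$ denote the value function in the stopped market $S^{n,j}$, Theorem \ref{mainthm} applied to each family $\{S^{n,j}\}_n$ yields, for each fixed $j$, $u^{n,j}(x) \to u^{\infty,j}(x)$ as $n \to \infty$. The hypotheses on $Z^\infty$ (martingale property, $V(Z^\infty_T) \in L^1$) are used separately to establish $u^{\infty,j}(x) \to u^\infty(x)$ as $j \to \infty$, via a monotone/dominated-convergence argument on the dual side.

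The heart of the argument is to sandwich $u^n(x)$ by the quantities $u^{n,j}(x)$ with uniform-in-$n$ error as $j \to \infty$. One direction, $u^{n,j}(x) \leq u^n(x)$, is automatic since trading only on $[0,\tau_j]$ is a restriction. For the reverse direction, I adapt the template of Proposition \ref{lastprop} locally: for a near-optimal wealth process in market $n$, I upper-truncate at level $i$ (with error at most $e^{-i}$, controlled by the negativity of $U$), lower-truncate at level $-k$ (with error controlled by Proposition \ref{prop3} applied to the stopped family), and stop additionally at $\tau_j$. Because the stopped markets satisfy the full $bmo_2$ hypothesis uniformly in $n$, Propositions \ref{prop3} and \ref{lastprop} together with Lemma \ref{exp_lemma5} provide these estimates uniformly in $n$ for each fixed $j$.

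Combining the pieces: given $\epsilon > 0$, choose $j$ so large that the stopped-market truncation error is below $\epsilon$ uniformly in $n$ and $|u^{\infty,j}(x)-u^\infty(x)| < \epsilon$; then for $n$ large, $|u^{n,j}(x)-u^{\infty,j}(x)| < \epsilon$, whence
\[
u^n(x) \;\leq\; u^{n,j}(x) + \epsilon \;\leq\; u^{\infty,j}(x) + 2\epsilon \;\leq\; u^\infty(x) + 3\epsilon.
\]
Passing to $\liminf$ (and applying the argument along any subsequence achieving $\limsup u^n(x)$) gives the desired upper semi-continuity.

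The main obstacle is precisely the uniform-in-$n$ approximation of $u^n(x)$ by $u^{n,j}(x)$ as $j \to \infty$. The local $bmo$ hypothesis gives per-$j$ control on $[0,\tau_j]$ but no information about the behavior on $(\tau_j, T]$, so the argument must draw on the global condition on $Z^\infty$ to tame the asymptotic tails of the limiting market. The delicate step is reconciling the possibly $j$-divergent $bmo$ constants on the stopped markets with the single global finiteness hypothesis $V(Z^\infty_T) \in L^1$ so that the errors in the truncation step remain controlled after taking $j \to \infty$.
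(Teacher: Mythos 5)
Your skeleton coincides with the paper's: stop the markets at $\tau_j$, apply Theorem \ref{mainthm} to each stopped family to get $u^{n,j}\to u^{\infty,j}$ as $n\to\infty$, and use the hypotheses on $Z^\infty$ to get $u^{\infty,j}\to u^{\infty}$ as $j\to\infty$. (The paper makes this last step precise by noting that $V(Z^\infty)$ is a submartingale bounded from below with integrable terminal value, hence of Class D, so that $\{V(Z^\infty_{\tau_j})\}_j$ is uniformly integrable; Lemma \ref{lsclemma} applied to the stopped sequence $(Z^\infty)^{\tau_j}$ then yields $u^{\infty,j}\uparrow u^\infty$. Your ``monotone/dominated convergence on the dual side'' is in this spirit, though you should note that the convergence $Z^n_{\tau_j}\to Z^\infty_{\tau_j}$ in probability needed to invoke Theorem \ref{mainthm} itself requires an argument, via Scheff\'e and Doob's weak $L^1$ inequality.)

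The genuine gap is in the step you yourself call the heart of the argument: the uniform-in-$n$ reverse sandwich $u^n(x)\le u^{n,j}(x)+\epsilon$ for $j$ large. None of the tools you cite can deliver it. Propositions \ref{prop3} and \ref{lastprop} and Lemma \ref{exp_lemma5}, applied to the stopped markets $(S^n)^{\tau_j}$, compare the stopped value $u^{n,j}$ with its own truncations; they say nothing about how much \emph{additional} utility market $n$ offers on $(\tau_j,T]$, which is exactly the quantity $u^n-u^{n,j}$ you must control. The hypotheses give you no leverage there: Assumption \ref{regass2} constrains $\lambda^n$ only up to each $\tau_j$, and the global conditions are imposed on $Z^\infty$ alone, so they cannot ``tame the tails'' of the $n$th market for finite $n$. (A posteriori, once convergence is known, $\sup_n(u^n-u^{n,j})\to 0$ does follow by a Dini-type argument as in Proposition \ref{unifco}, but using it as an ingredient here is circular.) The paper avoids the issue entirely by chaining the inequalities in the opposite direction: it uses only the trivial bound $u^{n,j}\le u^n$ to obtain $u^n\ge u^{n,J}\ge u^{\infty,J}-\epsilon\ge u^\infty-2\epsilon$ for $n$ large, i.e.\ $\liminf_{n}u^n\ge u^\infty$, rather than your target $\limsup_n u^n\le u^\infty$. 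So the repair is not to patch your truncation estimate but to discard it and reverse the direction of the final chain; the stated inequality is then recovered, as you hint, by running the argument along subsequences.
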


\begin{proof}

For $j,n=1,\ldots,\infty$, let $u^{n,j}$ denote the indirect utility arising from trading in market $n$ up until time $\tau_j$, where $u^n = u^{n,\infty}$.  Since all trading opportunities arising on $[0,\tau_j)$ are also available over the whole time period $[0,T]$, we know that $u^{n,j} \leq u^{n,j+1} \leq u^{n,\infty}$.  We claim that in addition, 
\begin{equation}\label{lasteq2}
u^{\infty,j} \uparrow u^\infty
\end{equation}
 as $j \ra \infty$.  As $Z^\infty$ is a martingale and $V$ is convex, $V(Z^\infty)$ is a submartingale (whose terminal value is integrable).  As $V$ is bounded from below, this implies that $V(Z^\infty)$ is of Class D, as defined, in \cite{MR745449}, p.11, for example.  In particular, the set $\left \{ V \left(Z^\infty_{\tau_j} \right ) : j \in \mb{N} \right \}$ is uniformly integrable.  In the context of Lemma \ref{lsclemma}, set $Z^j \triangleq (Z^\infty)^{\tau_j}$, so that $Z^\infty_{\tau_j} = Z^j_T$.  So, applying Lemma \ref{lsclemma} to the sequence $\left \{ Z^j \right \}$, it follows that $u^\infty \leq \underset{j \ra \infty}{\liminf} \  u^{\infty,j}$.  As $u^{\infty,j} \leq u^{\infty,j+1} \leq u^\infty$, \eqref{lasteq2} now follows.  

We now claim that for each $j < \infty$, $u^{n,j} \ra u^{\infty,j}$.  First, $Z^n_T \ra Z^n_\infty$ in $L^1$ by Scheffe's Lemma, and hence $Z^n \ra Z^\infty$ in ucp, which follows by applying Doob's weak $L^1$ inequality.  In particular, $Z^n_{\tau_j} \ra Z^\infty_{\tau_j}$ in probability.  We are now in the setting of Theorem \ref{mainthm}: considering $\tau_j$ as our terminal time, we have $Z^n_{\tau_j} \ra Z^\infty_{\tau_j}$ in probability.  Note that Theorem \ref{mainthm} can be applied to the terminal time $\tau_j \leq T$ by considering, for example, $(Z^n)^{\tau_j}$ defined on the time interval $[0,T]$.  So, applying Theorem \ref{mainthm}, we deduce that for each $j < \infty$, $u^{n,j} \ra u^{\infty,j}$.  

Next, we claim that 
\begin{equation}\label{lasteq3}
\liminf_{n \ra \infty} u^n \geq u^\infty.
\end{equation}
First choose $\epsilon >0$ and $J=J(\epsilon)$ sufficiently large so that  $u^{\infty,J} \geq u^\infty - \epsilon$.  Next choose $N = N(J)$ such that, for $n \geq N$, $|u^{n,J} - u^{\infty,J}| < \epsilon$.  The triangle inequality implies that $|u^{n,j} - u^\infty| < 2 \epsilon$.  Since $u^n \geq u^{n,j}$, it follows that for $n \geq N$, $u^n \geq u^\infty - 2\epsilon$.  In other words, $\liminf_{n \ra \infty} u^n \geq u^\infty$, and \eqref{lasteq3} is established.
\end{proof}

We now obtain:

\begin{proof}[Proof of Theorem \ref{fin.mainthm}]
By Lemma \ref{lsclemma}, the mapping is lower semi-continuous, and by Proposition \ref{usclemma}, the mapping is upper semi-continuous.  Together, these imply the theorem.
\end{proof}

\begin{proof}[Proof of Theorem \ref{fin.mainthm2}]

As before, for each $k \in \mb{N}$, let $\widehat{X}^{n,k}_T$ be the optimal terminal wealth in the $nth$ market that satisfies the constraint $\widehat{X}^{n,k}_T > -k$.  By Step $7$ in the proof of Theorem $2.2$ of \cite{MR1865021}, we know that as $k \ra \infty$, $U(\widehat{X}^{n,k}_T) \ra U(\widehat{X}^n_T)$ in $L^1$ for each $n \in \mb{N}$.  As a consequence of Proposition \ref{lastprop}, $E[U(\widehat{X}^{n,k}_T)] \uparrow E[U(\widehat{X}^n_T)]$ as $k \ra \infty$, and the convergence is uniform over $n$.  As $U(\cdot)$ is nonpositive, Scheffe's Lemma then implies that $U(\widehat{X}^{n,k}_T) \ra U(\widehat{X}^n_T)$ in $L^1$ as $k \ra \infty$, \emph{uniformly over $n$}.  $L^1$ convergence being stronger than $L^0$ convergence, we also have that $U(\widehat{X}^{n,k}_T) \ra U(\widehat{X}^n_T)$ in probability as $k \ra \infty$, uniformly over $n$.  Since $\widehat{X}^{n,k}_T \ra \widehat{X}^{\infty,k}_T$ in probability as $n \ra \infty$ for all $k$, then by Lemma $3.10$ of \cite{MR2438002}, it follows that $U(\widehat{X}^n_T) \ra U(\widehat{X}^\infty_T)$ in probability.  Since $U$ is bounded from above, we need a little more work to show that $\widehat{X}^n_T \ra \widehat{X}^\infty_T$ in probability.

We claim now that $\{\widehat{X}^n_T \}_{n \in \mb{N}}$ is bounded in probability.  Note that $Z^n \widehat{X}^n$ is a martingale, so $E[Z^n_T \widehat{X}^n_T] = 0$.  By Proposition \ref{prop3}, $\{U(\widehat{X}^n_T)\}_{n \in \mb{N}}$ is uniformly integrable, and $V$-compactness implies that $\{V(Z^n_T)\}_{n \in \mb{N}}$ is uniformly integrable.  The duality relationship $Z^n_T \widehat{X}^n_T \geq U(\widehat{X}^n_T) - V(Z^n_T)$ now implies that the negative parts $\{(Z^n_T \widehat{X}^n_T)^- \}_{n \in \mb{N}}$ are uniformly integrable.  Hence $\{Z^n_T \widehat{X}^n_T \}_{n \in \mb{N}}$ is bounded in $L^1$, and also in $L^0$.  But $Z^n_T \ra Z^\infty_T$ in probability, and $Z^\infty_T$ is strictly positive.  Hence $\{Z^n_T\}_{n \in \mb{N}}$ is bounded away from zero in probability, and it follows that $\{\widehat{X}^n_T \}_{n \in \mb{N}}$ is bounded in probability.

Suppose now that $\widehat{X}^n_T$ does not converge to $\widehat{X}^\infty_T$ in probability.  Then there exists an $\epsilon > 0$ such that for infinitely many $n$,
$P(|\widehat{X}^n_T - \widehat{X}^\infty_T| > \epsilon) > \epsilon$.  Now, choose a compact set $K$ such that $P(\widehat{X}^n_T \not \in K) < \frac{\epsilon}{4}$ for all $n$.  Then \\ $P \left( |\widehat{X}^n_T - \widehat{X}^\infty_T|>\epsilon, \text{ and } \widehat{X}^n_T,\widehat{X}^\infty_T \in K \right) > \frac{\epsilon}{2}$.  For $x,y \in K$, there exists a constant $c>0$ such that $|U(x) - U(y)| > c|x-y|$, due to the fact that $U'(x)$ is positive and bounded away from zero on the compact set $K$.  Thus, it follows that for infinitely many $n$, $P(|U(\widehat{X}^n_T) - U(\widehat{X}^\infty_T)| > c\epsilon) > \frac{\epsilon}{2}$, contradicting the fact that $U(\widehat{X}^n_T) \ra U(\widehat{X}^\infty_T)$ in probability.
\end{proof}

\section{On Assumptions \ref{regass1} and \ref{regass2}}\label{exp_ass}

\subsection{Comparison of \ref{regass2} with the Half-Line setting}

Recall that in addition to the $V$-compactness assumption \ref{regass1}, we required Assumption \ref{regass2}, which has no direct analog in \cite{MR2438002}.  Reviewing Proposition \ref{prop3}, one sees that the purpose of this assumption was to ensure that (locally) the set $\left \{\exp \left(-\widehat{X}^n_\tau \right) : n \in \mb{N}, \tau \in \mc{T} \right \}$ is uniformly integrable.  More precisely, when we say locally, we mean that there exists a sequence of stopping times $\tau_j \uparrow T$ such that $\left \{\exp \left(-\widehat{X}^n_{\tau \wedge \tau_j} \right) : n \in \mb{N}, \tau \in \mc{T} \right \}$ is uniformly integrable for each $j$.

Indeed, Assumption \ref{regass2} could be weakened so that it is exactly this condition: let $C_j \triangleq \underset{\tau_j \geq \tau \in \mc{T}, n \in \mb{N}}{\es} \ E \left[ \frac{Z^n_{\tau_j}}{Z^n_\tau} \log \frac{Z^n_{\tau_j}}{Z^n_\tau} | \mc{F}_\tau \right]$, which is uniformly bounded thanks to Assumption \ref{regass2} and Proposition \ref{exp_prop1}.  We have, as in the proof of Proposition \ref{prop3}, for $\tau \leq \tau_j$, the duality relationship

\begin{equation}\label{dualityeq}
\exp(-\widehat{X}^n_\tau) = \underbrace{\frac{1}{c_n}\widehat{Z}^n_\tau \exp \left( E \left[\frac{\widehat{Z}^n_{\tau_j}}{\widehat{Z}^n_{\tau}} \log \frac{\widehat{Z}^n_{\tau_j}}{\widehat{Z}^n_\tau} \ | \mc{F}_\tau \right] \right).}_{ \leq \frac{1}{c_n} \widehat{Z}^n_\tau \exp \left(C_j \right) \text{ by } Assumption \  \ref{regass2} }
\end{equation}

Now, the actual structural condition we need to prove our main results is that \\ $\left \{\exp \left(-\widehat{X}^n_{\tau \wedge \tau_j} \right) : n \in \mb{N}, \tau \in \mc{T} \right \}$ is uniformly integrable.  By $V$-compactness, the set $\{\widehat{Z}^n_\tau : \tau \in \mc{T}, n \in \mb{N} \}$ is uniformly integrable, as in Proposition \ref{prop3}.  Therefore, from considering \eqref{dualityeq}, we see that Assumption \ref{regass2} implies the uniform integrability of $\left \{\exp \left(-\widehat{X}^n_{\tau \wedge \tau_j} \right) : n \in \mb{N}, \tau \in \mc{T} \right \}$; additionally, we see that Assumption \ref{regass2} is a slightly stronger hypothesis than the required property of uniform integrability of $\left \{\exp \left(-\widehat{X}^n_{\tau \wedge \tau_j} \right) : n \in \mb{N}, \tau \in \mc{T} \right \}$ for each $j$.

This uniform integrability condition is useful because it allows for processes $\widehat{X}^{n,k}$ such that $\underset{ k \ra \infty}{\lim} E \left[ -\exp \left( - \widehat{X}^{n,k}_T \right) \right] = E \left[ -\exp \left(-\widehat{X}^n_T \right) \right]$, uniformly over $n$, and \\ $\underset{ 0 \leq t \leq T, n \in \mb{N}}{\sup} \exp \left( -\widehat{X}^{n,k}_t \right) \in L^\infty$.  Here, we show that, in the setting of utility maximization with a utility function $\widetilde{U}$ defined on $\mb{R}_+$, this uniform approximation property is already implied by the $\widetilde{V}$-compactness assumption, with $\widetilde{V}$ the conjugate of $\widetilde{U}$.  The proof of this fact is interesting because it mirrors, in our opinion, the essential technical step of \cite{MR2438002}, Corollary $3.4$.

\begin{proposition} In the setting of \cite{MR2438002}, let $\left \{Z^n : n = 1,2,\ldots,\infty \right \}$ define a $\widetilde{V}$-compact sequence of markets with $Z^n_T \ra Z^\infty_T$ in probability.  Fix an initial wealth $x_0$ , and let $\widetilde{X}^n$ be the optimal wealth process starting from $x_0$ in the $n^{th}$ market.  Then there exist wealth processes $\widetilde{X}^{n,k}$, each defined in the $n^{th}$ market, such that $E \left[ \widetilde{U} \left(\widetilde{X}^{n,k}_T \right) \right] \ra E \left[ \widetilde{U} \left(\widetilde{X}^n_T \right) \right]$ as $k \ra \infty$, uniformly over all $n$, and \\ $\underset{0 \leq t \leq T, n \in \mb{N}}{\sup} \widetilde{U}^-\left(\widetilde{X}^{n,k}_t \right) \in L^\infty$.
\end{proposition}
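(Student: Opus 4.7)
The plan is to construct the approximating processes by a simple convex combination of the optimal strategy with the num\'eraire. For each $k \in \mb{N}$, I would set
\[
\widetilde{X}^{n,k} \triangleq \left(1 - \frac{1}{k}\right) \widetilde{X}^n + \frac{x_0}{k},
\]
which operationally corresponds to scaling the optimal strategy $H^n$ (so that $\widetilde{X}^n = x_0 + H^n \cdot S^n$) by the factor $(1-1/k)$ and holding the remaining $x_0/k$ in the constant bond. This is an admissible wealth process starting from $x_0$ in the $n^{th}$ market, and by construction $\widetilde{X}^{n,k}_t \geq x_0/k > 0$ for all $t$ and $n$. Consequently $\underset{0 \leq t \leq T, \ n \in \mb{N}}{\sup} \widetilde{U}^- \!\left( \widetilde{X}^{n,k}_t \right) \leq \widetilde{U}^-(x_0/k)$, a finite deterministic constant, which immediately yields the $L^\infty$ conclusion.

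For the uniform convergence of the expected utilities, I would combine concavity of $\widetilde{U}$ with optimality of $\widetilde{X}^n$. Concavity gives
\[
\widetilde{U}\!\left(\widetilde{X}^{n,k}_T\right) \geq \left(1 - \frac{1}{k}\right) \widetilde{U}\!\left(\widetilde{X}^n_T\right) + \frac{1}{k} \widetilde{U}(x_0),
\]
while optimality forces $E\!\left[\widetilde{U}(\widetilde{X}^{n,k}_T)\right] \leq E\!\left[\widetilde{U}(\widetilde{X}^n_T)\right]$. Taking expectations and subtracting yields
\[
0 \leq E\!\left[\widetilde{U}(\widetilde{X}^n_T)\right] - E\!\left[\widetilde{U}(\widetilde{X}^{n,k}_T)\right] \leq \frac{1}{k}\left( E\!\left[\widetilde{U}(\widetilde{X}^n_T)\right] - \widetilde{U}(x_0) \right),
\]
so the only remaining task is to establish $\underset{n}{\sup} \ E\!\left[\widetilde{U}(\widetilde{X}^n_T)\right] < \infty$.

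For this uniform bound, Fenchel's inequality gives $\widetilde{U}(\widetilde{X}^n_T) \leq \widetilde{V}(Z^n_T) + Z^n_T \widetilde{X}^n_T$. Since $Z^n$ is the density of the minimal martingale measure for $S^n$, it belongs to $\mc{Y}^n$, so that $Z^n \widetilde{X}^n$ is a nonnegative supermartingale with $E[Z^n_T \widetilde{X}^n_T] \leq x_0$. Taking expectations then yields $E[\widetilde{U}(\widetilde{X}^n_T)] \leq E[\widetilde{V}(Z^n_T)] + x_0$, and the $\widetilde{V}$-compactness hypothesis delivers $\underset{n}{\sup} \ E[\widetilde{V}(Z^n_T)] < \infty$ via de la Vall\'ee-Poussin. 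Plugging this back gives a uniform $O(1/k)$ rate of convergence.

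The construction is so direct that there is no real technical obstacle; the content of the proposition lies in the observation that the Fenchel bound $E[\widetilde{U}(\widetilde{X}^n_T)] \leq E[\widetilde{V}(Z^n_T)] + x_0$, combined with the $\widetilde{V}$-compactness already present in \cite{MR2438002}, already delivers uniformity across markets. The convex combination with the bond then automatically produces both the uniform lower bound $x_0/k$ on the wealth and the $O(1/k)$ uniform rate of convergence, simply by concavity of $\widetilde{U}$. This mirrors the essential technical step of Corollary~$3.4$ of \cite{MR2438002}, in which the inability of the minimizer to access a nonzero fraction of the bond is traded against the regularity of the utility near the origin.
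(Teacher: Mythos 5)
Your proposal is correct and is essentially identical to the paper's own proof: the same convex combination $\widetilde{X}^{n,k} = \tfrac{1}{k}x_0 + \tfrac{k-1}{k}\widetilde{X}^n$, the same concavity/optimality sandwich, and the same appeal to $\widetilde{V}$-compactness for the uniform $L^1$ bound on $\widetilde{U}(\widetilde{X}^n_T)$. You in fact supply slightly more detail than the paper by justifying that bound via the Fenchel inequality and the supermartingale property of $Z^n\widetilde{X}^n$ (note only that uniform integrability already gives $L^1$-boundedness directly; de la Vall\'ee-Poussin is not needed there).
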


\begin{proof}  The proposition uses a simple construction, inspired by \cite{MR2438002}.  Given $\widetilde{X}^n$, define 
\[
\widetilde{X}^{n,k} \triangleq \frac{1}{k}x_0 + \frac{k-1}{k} \widetilde{X}^n.
\]
In other words, the wealth process $\widetilde{X}^{n,k}$ follows the optimal trajectory, except that a small portion is set aside and left in the riskless asset.  The concavity of $\widetilde{U}$ implies that 
\begin{eqnarray}\label{lasteq4}
\lefteqn{\frac{1}{k}\widetilde{U}(x_0) + \frac{k-1}{k} E \left[ \widetilde{U} \left(\widetilde{X}^n_T \right) \right]} \\
&& \leq E \left[ \widetilde{U}\left(\widetilde{X}^{n,k}_T \right) \right] \nonumber \\
&& \leq E \left[ \widetilde{U} \left(\widetilde{X}^n_T \right) \right]. \nonumber
\end{eqnarray}

  Since $\widetilde{V}$-compactness implies that the collection $\left \{\widetilde{U}\left(\widetilde{X}^n_T \right) : n \in \mb{N} \right \}$ is bounded in $L^1$, the uniform approximation property is established in \eqref{lasteq4}.  Next, each wealth process $\widetilde{X}^n$ is strictly positive, and therefore $\widetilde{X}^{n,k} > \frac{1}{k}$.  Consequently
\[
\underset{0 \leq t \leq T, n \in \mb{N}}{\sup} \widetilde{U}^-\left(\widetilde{X}^{n,k}_t \right) < \widetilde{U}^-\left(\frac{1}{k} \right).
\]
\end{proof}

\subsection{Economic interpretation of Assumption \ref{regass2}}

Consider a generic market with dynamics $S= M + \int \lambda d \langle M \rangle$ and associated minimal martingale measure $Z= \mc{E}(-\lambda \cdot M)$.  In this market, consider the opportunity process $L_t^{exp}$, introduced in \cite{MR2721695}, and used in \cite{MR2892960}.  It is the utility value process normalized by the optimal wealth process, and it exists as a consequence of the homogeneity of power and exponential utilities, and their associated optimal wealth processes.  In the notation of \cite{MR2721695}, the opportunity process $L_t^{exp}$ satisfies
\[
V_t(\theta) = \exp(-G_t(\theta))L_t^{exp},
\]

\noindent where $V_t(\theta)$ represents the indirect utility arising from following the trading strategy $\theta$ up to time $t$, and $G_t(\theta)$ is the wealth resulting from trading according to $\theta$ up to time $t$.  As the name suggests, $L_t^{exp}$ describes how much utility can be attained per unit of wealth.  Then equation $(6.6)$ of \cite{MR2721695} establishes a relationship between $L^{exp}$ and $\widehat{Z}$:
\begin{equation}\label{oppdef}
-\log (L_t^{exp}) = E \left[ V \left( \frac{\widehat{Z}_T}{\widehat{Z}_t} \right) | \mc{F}_t \right].
\end{equation}

Frequently in this paper, we have concerned ourselves with the size of the right hand side of \eqref{oppdef}: specifically, the $bmo$ hypothesis has been used to establish a uniform upper bound on this term over $t$.  This implies that the value processes $L_t^{exp}$ are uniformly bounded away from zero.  In economic terms, this puts a constraint on how attractive the investment opportunities can be in our sequence of markets.  If the opportunity process is close to zero, this means that an optimal investing agent is relatively unconcerned with having very negative wealth, in that $L_t^{exp}$ counteracts the size of $\exp(-G_t(\theta))$ (note that we wish to maximize $V_t(\theta)$, which is negative).

Note, however, that in \eqref{oppdef}, what matters is the $\mc{R}_{L Log L}$ constant $K_{L Log L}(\widehat{Z})$ for the optimal dual variable $\widehat{Z}$, while our regularity Assumption \ref{regass2} involves the $\mc{R}_{L Log L}$ constant $K_{L Log L}(Z)$ for $Z$, the minimal martingale measure.  By Lemma \ref{exp_lemma1}, we know that $K_{L Log L}(\widehat{Z}) \leq K_{L Log L}(Z)$.  More interesting is the claim that the sizes of $K_{L Log L}(\widehat{Z})$ in fact control the sizes of $K_{L Log L}(Z)$, a result which we will establish in Proposition \ref{kprop} below.  Thus, the $\mc{R}_{L Log L}$ constants bind the sizes of the minimal entropy martingale and minimal martingale in a substantive way.  In general, the dual object we are interested in is the minimal entropy martingale, while the dual object which we can describe most explicitly is the minimal martingale.  The claim above, however, implies that the ostensibly more restrictive act of placing a regularity assumption on the minimal martingales is essentially equivalent to placing one on the minimal entropy martingales, both implying control over the size of the opportunity process.

\begin{proposition}\label{kprop} Let $S^n$, $n \geq 1$, describe a sequence of markets, with minimal martingales $Z^n$ and minimal entropy martingales $\widehat{Z}^n$.  Then
\[
\underset{n}{\sup} \ K_{L Log L}(Z^n) < \infty \text{ if and only if } \underset{n}{\sup} \ K_{L Log L}(\widehat{Z}^n) < \infty.
\]
\end{proposition}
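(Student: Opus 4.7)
The forward direction is immediate from Lemma \ref{exp_lemma1}, which provides the pointwise inequality $K_{LLogL}(\widehat{Z}^n) \leq K_{LLogL}(Z^n)$, so any uniform upper bound on the right-hand side transfers to the left.

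For the reverse direction the plan is to exploit the structural decomposition of the minimal entropy martingale density available because $M$ is continuous.  By the Galtchouk-Kunita-Watanabe decomposition applied to the density process of $\widehat{\mb{Q}}^n$, combined with the requirement that the Girsanov drift of $M$ match $-\int \lambda^n d \langle M \rangle$, one can write
\[
\widehat{Z}^n = \mc{E}(R^n), \qquad R^n \triangleq -\lambda^n \cdot M + L^n,
\]
where $L^n$ is a local $P$-martingale strongly orthogonal to $M$.  Orthogonality together with the continuity of $M$ then give $\langle R^n \rangle = \int (\lambda^n)^2 d\langle M \rangle + \langle L^n \rangle$, and in particular $\langle R^n \rangle$ dominates $\langle \lambda^n \cdot M \rangle$ pathwise.

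The core step is to upgrade the uniform bound on $K_{LLogL}(\widehat{Z}^n)$ to a uniform bound on $||R^n||_{bmo_2}$; this is precisely the content of the $(\Leftarrow)$ direction of Proposition \ref{exp_prop1} applied to $R = R^n$ and $Y = \widehat{Z}^n$.  Once that bound is in hand, the decomposition of $\langle R^n \rangle$ together with the continuity of $\int (\lambda^n)^2 d\langle M \rangle$ immediately gives
\[
||\lambda^n \cdot M||_{bmo_2}^2 = \underset{\tau}{\es} \  E \left[ \int_\tau^T (\lambda^n)^2 d \langle M \rangle \bigg| \mc{F}_\tau \right] \leq \underset{\tau}{\es} \  E \left[ \langle R^n \rangle_T - \langle R^n \rangle_{\tau-} \bigg| \mc{F}_\tau \right] = ||R^n||_{bmo_2}^2,
\]
so $\underset{n}{\sup} \ ||\lambda^n \cdot M||_{bmo_2} < \infty$.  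Applying the $(\Rightarrow)$ direction of Proposition \ref{exp_prop1} to $-\lambda^n \cdot M$, which is continuous and so automatically has bounded-below jumps, will then deliver $\underset{n}{\sup} \ K_{LLogL}(Z^n) < \infty$ and close the argument.

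The main obstacle I anticipate lies in justifying the $(\Leftarrow)$ direction of Proposition \ref{exp_prop1} for $\widehat{Z}^n$ in the generality of the paper.  While the $\mc{R}_{LLogL}$ hypothesis for $\widehat{Z}^n$ is exactly what is assumed, condition (S) for $\widehat{Z}^n$ amounts to a uniform two-sided bound on the jumps of $L^n$, which is automatic only when the filtration is continuous (so that $L^n$ is continuous too).  In the general discontinuous case some additional work is needed --- either an a priori regularity argument showing that the jumps of the minimal entropy deflator are uniformly controlled, typically via a stopping-time localization followed by a limiting argument, or a direct quantitative control of $||R^n||_{bmo_2}$ from $K_{LLogL}(\widehat{Z}^n)$ alone that bypasses condition (S) altogether.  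This is where the real technical work of the proof sits.
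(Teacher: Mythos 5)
Your proposal is correct and follows essentially the same route as the paper: the same Galtchouk--Kunita--Watanabe decomposition $\widehat{Z}^n = \mc{E}(-\lambda^n\cdot M + L^n)$ with $L^n$ orthogonal to $M$, the same domination $||\lambda^n\cdot M||_{bmo_2} \leq ||\widehat{R}^n||_{bmo_2}$, and the same two-way use of Proposition \ref{exp_prop1} (where you invoke its $(\Rightarrow)$ direction for $-\lambda^n\cdot M$, the paper invokes the increasing function $f$ from Propositions $5$ and $6$ of \cite{MR544804}, which is the same content). The condition-(S) subtlety you flag in passing from $\underset{n}{\sup}\ K_{LLogL}(\widehat{Z}^n)<\infty$ to $\underset{n}{\sup}\ ||\widehat{R}^n||_{bmo_2}<\infty$ is a genuine point, but the paper's own proof is no more explicit there --- it simply asserts that the uniform $bmo_2$ bound exists ``by hypothesis'' --- so this is a shared gap rather than a divergence between your argument and theirs.
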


\begin{proof}
 The ``$\Rightarrow$" direction is trivial, given Lemma \ref{exp_lemma1}.  We therefore address the ``$\Leftarrow$" condition.  Write $\widehat{Z}^n = \mc{E}(\widehat{R}^n) = \mc{E}(-\lambda^n \cdot M + \widehat{L}^n)$, where $\widehat{L}^n$ is a local martingale orthogonal to $M$.  Thus, $\langle \widehat{R}^n \rangle = \langle -\lambda \cdot M \rangle + \langle \widehat{L}^n \rangle$.  As a consequence,
\[
||-\lambda \cdot M||_{bmo_2} \leq ||\widehat{R}^n||_{bmo_2}.
\]

According to the proof of Lemma \ref{lemmabmoholder}, found in Propositions $5$ and $6$ of \cite{MR544804}, there exists an increasing function $f:\mb{R}_+ \ra \mb{R}_+$ such that for a continuous martingale $M$, $||M||_{bmo_2} \leq x$ implies that $K_{L Log L}(\mc{E}(M)) \leq f(x)$.  Therefore, for each $n$, $K_{L Log L}(Z^n) \leq f(||\lambda^n \cdot M||_{bmo_2}) \leq f(||\widehat{R}^n||_{bmo_2})$.  Taking suprema over $n$, we have 
\[
\underset{n}{sup} \  K_{L Log L}(Z^n) \leq \underset{n}{sup} \  f(||\widehat{R}^n||_{bmo_2}) \triangleq R^* < \infty,
\]
with the finite constant $R^*$ existing by hypothesis.

\end{proof}

\subsection{On Assumption \ref{regass1}}

Here, we illustrate the necessity of the $V$-compactness hypothesis with a few examples.

\begin{lemma}\label{lemmacoincide}  Suppose that $Z^n_T \ra Z^\infty_T$ in probability and that $\left \{Z_T^n : n \in \mb{N} \cup \{\infty \} \right \}$ is $V$-compact.  Additionally, suppose that $Z^\infty_T = \widehat{Z}^\infty_T$, i.e. the terminal values of the minimal martingale measure and minimal entropy martingale measure coincide.  Then $\underset{n \ra \infty}{\lim} \ v^n(y) = v^\infty(y)$.  Hence, $\underset{n \ra \infty}{\lim} \ u^n(x) = u^\infty(x)$.
\end{lemma}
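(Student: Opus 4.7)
The plan is to establish $v^n(y) \ra v^\infty(y)$ for each $y > 0$ and then to deduce convergence of $u^n$ by convex duality. Both halves of the $v$-convergence argument exploit the identity $Z^\infty_T = \widehat{Z}^\infty_T$ in a decisive way.

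For the upper bound $\limsup_n v^n(y) \leq v^\infty(y)$, I would use $Z^n$ itself as a suboptimal dual candidate. Since $\{V(Z^n_T)\}$ is uniformly integrable and $V(x)/x \ra \infty$ as $x \ra \infty$, the collection $\{Z^n_T\}$ is also uniformly integrable; together with the fact that $Z^n$ is a nonnegative local martingale with $Z^n_0 = 1$, this guarantees that $Z^n$ is a true martingale and hence an element of $\mc{M}^n$. Therefore $v^n(y) \leq E[V(y Z^n_T)]$. The algebraic identity
\[
V(y Z^n_T) = y V(Z^n_T) + y (\log y) Z^n_T,
\]
combined with the uniform integrability of $\{V(Z^n_T)\}$ and $\{Z^n_T\}$, shows that $\{V(y Z^n_T)\}$ is uniformly integrable for every $y > 0$. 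Together with $Z^n_T \ra Z^\infty_T$ in probability, this yields $E[V(y Z^n_T)] \ra E[V(y Z^\infty_T)]$, and the hypothesis $Z^\infty_T = \widehat{Z}^\infty_T$ identifies the limit as $v^\infty(y)$.

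For the lower bound $\liminf_n v^n(y) \geq v^\infty(y)$, I would invoke the convex-dual relationship $v^n(y) = \sup_x (u^n(x) - xy)$ together with the lower semi-continuity supplied by Lemma \ref{lsclemma}, namely $\liminf_n u^n(x) \geq u^\infty(x)$ for every $x$. The standard interchange $\liminf_n \sup_x \geq \sup_x \liminf_n$ then gives
\[
\liminf_n v^n(y) \geq \sup_x \bigl( \liminf_n u^n(x) - xy \bigr) \geq \sup_x \bigl(u^\infty(x) - xy\bigr) = v^\infty(y).
\]

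For the ``hence'' statement, I would apply the bidual formula $u^n(x) = \inf_{y>0}(v^n(y) + xy)$ available in the setting of \cite{MR1865021}. For each fixed $y > 0$, $u^n(x) \leq v^n(y) + xy$, so the already established $v^n(y) \ra v^\infty(y)$ yields $\limsup_n u^n(x) \leq v^\infty(y) + xy$; taking the infimum over $y > 0$ gives $\limsup_n u^n(x) \leq u^\infty(x)$. Combined with the matching lower bound from Lemma \ref{lsclemma}, this produces $u^n(x) \ra u^\infty(x)$. The main obstacle is the uniform-integrability step of the upper bound: once one sees that $V$-compactness and the logarithmic growth of $V$ propagate to $\{V(y Z^n_T)\}$ and that $Z^n \in \mc{M}^n$ is automatic under the same hypothesis, the remainder is a clean application of convex duality.
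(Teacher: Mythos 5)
Your proof follows the paper's own route almost exactly: the upper bound comes from using $Z^n$ as a suboptimal dual element, so that $v^n(y) \leq E[V(yZ^n_T)]$, and from passing to the limit in this expectation using $V$-compactness and the hypothesis $Z^\infty_T = \widehat{Z}^\infty_T$; the lower bound and the ``hence'' statement come from Lemma \ref{lsclemma} together with the conjugacy of $u^n$ and $v^n$. The detail you add for the convergence $E[V(yZ^n_T)] \ra E[V(yZ^\infty_T)]$ --- the identity $V(yz) = yV(z) + (y\log y)z$, which transfers uniform integrability from $\{V(Z^n_T)\}$ and $\{Z^n_T\}$ to $\{V(yZ^n_T)\}$, combined with the lower bound $V \geq -1$ --- is correct and is exactly what the paper leaves implicit in the phrase ``by hypothesis.''

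The one step that does not hold up is your justification that $Z^n \in \mc{M}^n$. Uniform integrability of the family of terminal values $\{Z^n_T : n \in \mb{N}\}$ is a statement across markets; it says nothing about whether any individual $Z^n$ is a true martingale. For a fixed $n$ the singleton $\{Z^n_T\}$ is trivially uniformly integrable, yet a strictly positive continuous local martingale with $Z^n_0 = 1$ can perfectly well be a strict local martingale, in which case $E[Z^n_T] < 1$ and $Z^n_T$ is not the density of a probability measure; finiteness of $E[V(Z^n_T)]$ does not rule this out either. What is actually needed is $E[Z^n_T] = 1$, equivalently that $Z^n$ is of class (D), i.e., uniform integrability of $\{Z^n_\tau : \tau \in \mc{T}\}$ for each fixed $n$ --- a different family from the one you control. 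The paper does not prove this step either: it treats $Z^n$ as the density of the minimal martingale \emph{measure} as part of the standing setup, so the inequality $v^n(y) \leq E[V(yZ^n_T)]$ is taken for granted. Your instinct to justify it is good, but the argument you give is a non sequitur; either assume martingality of the $Z^n$ as the paper implicitly does, or supply a genuine sufficient condition for it.
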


\begin{proof} Thanks to Lemma \ref{lsclemma}, it suffices to show that $\underset{n \ra \infty}{\limsup} \  v^n(y) \leq v^\infty(y)$.  By hypothesis, $E[V(yZ^n_T)] \ra E[V(yZ^\infty_T)] = v^\infty(y)$ as $n \ra \infty$.  But $v^n(y) \leq E[V(yZ^n_T)]$.  Therefore, $\underset{n \ra \infty}{\limsup} \ v^n(y) \leq \underset{n \ra \infty}{\lim} \ E[V(yZ^n_T)] = v^\infty(y)$.  The last claim in the lemma, that $u^n \ra u^\infty$, follows from the duality between $v^n$ and $u^n$, see Proposition $3.9$ of \cite{MR2438002}.
\end{proof}

\begin{corollary}\label{completecor}  Suppose that $Z^n_T \ra Z^\infty_T$ in probability and that the limiting market is complete.  Then $ \underset{n \ra \infty}{\lim} \ u^n(x) = u^\infty(x)$ if and only if $\{Z^n_T : n \in \mb{N} \}$ is $V$-compact.
\end{corollary}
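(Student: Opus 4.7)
My plan is to establish the two directions of the biconditional separately: the $(\Leftarrow)$ direction via a direct appeal to Lemma \ref{lemmacoincide}, and the $(\Rightarrow)$ direction via convex duality between the primal and dual value functions.

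For $(\Leftarrow)$, in any complete market the equivalent local martingale measure is unique, so the minimal martingale measure and the minimal entropy martingale measure must coincide, giving $Z^\infty_T = \widehat{Z}^\infty_T$. The only further hypothesis of Lemma \ref{lemmacoincide} that requires checking is $V$-compactness of the enlarged family $\{Z^n_T : n \in \mb{N} \cup \{\infty\}\}$. Since $V$ is bounded below, uniform integrability of $\{V(Z^n_T)\}_{n \in \mb{N}}$ together with $Z^n_T \to Z^\infty_T$ in probability and Fatou's lemma (applied to the shifted nonnegative sequence $V(Z^n_T) + e^{-1}$ along an a.s.\ convergent subsequence) yields $E[V(Z^\infty_T)] < \infty$, so adjoining the limit does not destroy uniform integrability. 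Lemma \ref{lemmacoincide} then delivers $u^n(x) \to u^\infty(x)$.

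For $(\Rightarrow)$, assume $u^n(x) \to u^\infty(x)$. By the duality of Proposition $3.9$ of \cite{MR2438002}, this yields $v^n(y) \to v^\infty(y)$ pointwise in $y > 0$. Completeness of the limit market gives $v^\infty(y) = E[V(y Z^\infty_T)]$. The key structural observation is that completeness of $S^\infty$ in the ambient filtration $(\mc{F}_t)$ forces that filtration to be (essentially) the one generated by $M$, so that every single-asset market $S^n$ driven by $M$ is itself complete with unique ELMM density $Z^n$; consequently $v^n(y) = E[V(y Z^n_T)]$. Therefore $E[V(y Z^n_T)] \to E[V(y Z^\infty_T)] < \infty$, and since $V(y Z^n_T) + e^{-1} \geq 0$ and $Z^n_T \to Z^\infty_T$ in probability, Scheff\'{e}'s lemma upgrades this to $L^1$ convergence of $V(y Z^n_T) + e^{-1}$, which is exactly the $V$-compactness of $\{Z^n_T : n \in \mb{N}\}$. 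I expect the main subtlety to be the completeness-propagation step: without it, the inequality $v^n(y) \leq E[V(y Z^n_T)]$ may be strict (because $\widehat{Z}^n$ could give a much lower dual value than $Z^n$), and the necessity argument would break down.
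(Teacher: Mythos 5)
Your proof is correct and follows essentially the same route as the paper: the $(\Leftarrow)$ direction is the paper's own argument (uniqueness of the ELMM forces $Z^\infty_T = \widehat{Z}^\infty_T$, then apply Lemma \ref{lemmacoincide}), with the worthwhile extra check that $E[V(Z^\infty_T)] < \infty$ so that the lemma's hypothesis on the enlarged family $\{Z^n_T : n \in \mb{N} \cup \{\infty\}\}$ actually holds, and your $(\Rightarrow)$ direction reconstructs the duality-plus-Scheff\'{e} argument that the paper simply outsources to Proposition $2.13$ of \cite{MR2438002}, correctly isolating the completeness-propagation step (completeness of $S^\infty$ gives $M$ the predictable representation property under $P$, so each $S^n$ has at most one ELMM and $v^n(y) = E[V(yZ^n_T)]$) as the point where the necessity argument would otherwise fail, since without it one only has $v^n(y) \leq E[V(yZ^n_T)]$. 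The only blemish is cosmetic: $V(y) = y\log y - y \geq -1$, so the nonnegativity shift in your Fatou/Scheff\'{e} steps should be by $1$ rather than $e^{-1}$.
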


\begin{proof}  For the ``if" direction, note that in a complete market there is only one equivalent martingale measure, and hence trivially the minimal martingale measure and minimal entropy martingale must agree.  Therefore, by Lemma \ref{lemmacoincide}, $\underset{n \ra \infty}{\lim} \ u^n(x) = u^\infty(x)$.  The ``only if" direction is identical to the proof of Proposition $2.13$ of \cite{MR2438002}.  
\end{proof}

\begin{remark} We also note that there are examples of incomplete markets where the minimal martingale and minimal entropy martingale agree; in these cases it is also clear that $V$-compactness is necessary and sufficient.  This is the case in a market when one tries to hedge an option written on a non-tradeable asset using a geometric Brownian motion correlated with that asset; see e.g. Section 4 of \cite{jonsir}.
\end{remark}

%\section{Conclusion}
%We have demonstrated that $u^n(x) \ra u^\infty(x)$ when $Z_T^n \ra Z_T^\infty$ in probability together with a slight strengthening of the $V$-compactness %condition.  These conditions are weaker than any in the literature used to show continuity.  In \cite{MR2438002} in the setting of utility functions defined %on $(0,\infty)$, it is shown that the $V$-compactness condition is sufficient for attaining continuity, where $V$ is the convex dual of the general utility %function $U$.  An obvious question is whether $V$-compactness is also sufficient for $U(x) = -\exp(-x)$ or general utility functions on $\mb{R}$.  The proof %from \cite{MR2438002} does not generalize, at least in an obvious way, to this setting.  Crucial to our analysis was the fact that the $bmo$-type hypothesis %placed structure on the whole time interval $[0,T]$, and not just on the time $T$-value.  The utility maximization problem only explicitly takes place at %time $T$, but for utility functions defined on $\mb{R}$, the subtler definition of admissible strategies pulls intermediate values of wealth processes into %the equation.  

\appendix

\section{Continuity and Uniform Approximation}\label{exp_app1}
In this appendix, we address the first claim made in Remark \ref{lemma5remark}.  Its proof requires a bit of preparatory work.

\begin{proposition}\label{unifco}  $u^n \ra u^\infty$ if and only if $u^{(n,k)} \ra u^n$ as $k \ra \infty$, uniformly over $n$.
\end{proposition}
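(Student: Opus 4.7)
The plan is to combine two ingredients available throughout the paper: first, from Step $1$ of Theorem $2.2$ of \cite{MR1865021}, $u^{(n,k)}(x)$ is nondecreasing in $k$ and $u^{(n,k)}(x) \uparrow u^n(x)$ for every fixed $n$ (including $n = \infty$); second, from Lemma \ref{exp_lemma3} (which uses the standing $V$-compactness hypothesis together with $Z^n_T \ra Z^\infty_T$ in probability), for each fixed $k$ we have $u^{(n,k)}(x) \ra u^{(\infty,k)}(x)$ as $n \ra \infty$. This is the standard iterated-limits configuration, and the two directions can be handled by a Moore-Osgood triangle inequality and a Dini-type subsequence argument, respectively.

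For the ``$\Leftarrow$'' direction I would argue as follows. Given $\epsilon > 0$, first choose $K$ so that $|u^{(n,K)}(x) - u^n(x)| < \epsilon/3$ uniformly in $n$ (the hypothesis applies to $n=\infty$ as well, since it applies to every $n$). For this $K$, apply Lemma \ref{exp_lemma3} to pick $N$ with $|u^{(n,K)}(x) - u^{(\infty,K)}(x)| < \epsilon/3$ for $n \geq N$. Adding the three pieces via the triangle inequality gives $|u^n(x) - u^\infty(x)| < \epsilon$, which is the desired convergence.

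For the ``$\Rightarrow$'' direction, assume $u^n(x) \ra u^\infty(x)$ and suppose, towards contradiction, that the approximation is not uniform in $n$. Then there exist $\epsilon > 0$ and sequences $n_j, k_j$ with $k_j \ra \infty$ and $u^{n_j}(x) - u^{(n_j,k_j)}(x) > \epsilon$ for all $j$. If $(n_j)$ were bounded, some value $n_0$ would repeat infinitely often, and the monotone convergence $u^{(n_0, k)} \uparrow u^{n_0}$ would contradict the inequality along that subsubsequence; so I may assume $n_j \ra \infty$. For any fixed $k \in \mb{N}$, eventually $k_j \geq k$, so monotonicity in $k$ gives $u^{(n_j,k_j)}(x) \geq u^{(n_j,k)}(x)$; taking $\liminf_j$ and using Lemma \ref{exp_lemma3} at level $k$ yields $\liminf_j u^{(n_j,k_j)}(x) \geq u^{(\infty,k)}(x)$. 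Letting $k \ra \infty$ and using $u^{(\infty,k)}(x) \uparrow u^\infty(x)$ gives $\liminf_j u^{(n_j,k_j)}(x) \geq u^\infty(x)$. Combined with $u^{n_j}(x) \ra u^\infty(x)$, this forces $\limsup_j \bigl( u^{n_j}(x) - u^{(n_j,k_j)}(x) \bigr) \leq 0$, contradicting the $\epsilon$-bound. The main obstacle lies in this ``$\Rightarrow$'' direction: since $u^n$ is itself a moving target, Dini's theorem does not apply directly, and the uniformity must be extracted by simultaneously controlling the diagonal $(n_j, k_j)$ using the per-$k$ convergence of Lemma \ref{exp_lemma3} together with monotonicity in $k$.
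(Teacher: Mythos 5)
Your proof is correct and follows essentially the same route as the paper: the ``$\Leftarrow$'' direction is the Moore--Osgood triangle-inequality argument already used for Theorem \ref{mainthm}, and your ``$\Rightarrow$'' direction establishes by hand exactly what the paper obtains by invoking Dini's theorem on the one-point compactification $\mb{N}^*$ of $\mb{N}$ (monotonicity in $k$, continuity of $n \mapsto u^{(n,k)}$ via Lemma \ref{exp_lemma3}, and continuity of $n \mapsto u^n$ as the hypothesis). The only difference is presentational: you unwind Dini through a diagonal-subsequence contradiction rather than citing it.
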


\begin{proof}  The $``\Leftarrow"$ implication was the content of Theorem \ref{mainthm}.  For the other direction, let $\mb{N}^*$ be the space $\{1,2,\ldots,\infty\}$, whose topology is the one point compactification of $\mb{N}$ with the discrete topology; the open sets of $\mb{N}^*$ are the finite subsets of $\mb{N}$ and cofinite subsets containing $\infty$.  This space is compact.  

For each $k \in \mb{N}$, the map $n \mapsto u^{(n,k)}(0)$ is continuous by Lemma \ref{exp_lemma3}.  By construction, $u^{(n,k)} \leq u^{(n,k+1)}$ for all $n,k$, and $u^{(n,k)} \ra u^n$ as $k \ra \infty$ for all $n$.  Therefore, supposing that $n \mapsto u^n(0)$ is continuous, we apply Dini's Theorem to get the desired result.
\end{proof}

\begin{lemma}\label{lemmaa.1} Suppose that $u^n \ra u^\infty$.  Then $\widehat{X}^n_T \ra \widehat{X}^\infty_T$ in probability.  Furthermore, $U(\widehat{X}^n_T) \ra U(\widehat{X}^\infty_T)$ in $L^1$.
\end{lemma}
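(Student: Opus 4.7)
My plan is to mirror the strategy used in the proof of Theorem~\ref{fin.mainthm2}, with Proposition~\ref{unifco} substituted for Proposition~\ref{lastprop} as the source of the uniform-in-$n$ approximation. The main steps are as follows.

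First, I would invoke Proposition~\ref{unifco}: the hypothesis $u^n \ra u^\infty$ is precisely what makes $u^{(n,k)}(0) \ra u^n(0)$ as $k \ra \infty$ uniformly over $n$. Coupled with Step 7 of the proof of Theorem 2.2 of \cite{MR1865021}, which yields $U(\widehat{X}^{n,k}_T) \ra U(\widehat{X}^n_T)$ in $L^1$ for each fixed $n$, and with Scheff\'e's lemma applied to the non-positive random variables $U(\widehat{X}^{n,k}_T)$, one deduces that this $L^1$ convergence is in fact uniform in $n$. In particular, the convergence in probability $U(\widehat{X}^{n,k}_T) \ra U(\widehat{X}^n_T)$ is uniform in $n$ as well.

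Second, for each fixed $k$, Lemma~\ref{exp_lemma3} gives $\widehat{X}^{n,k}_T \ra \widehat{X}^{\infty,k}_T$ in probability as $n \ra \infty$ via the half-line stability theory of \cite{MR2438002}. A diagonal argument in the spirit of Lemma 3.10 of \cite{MR2438002}---combining the uniform-in-$n$ convergence in $k$ from the previous step with the pointwise-in-$k$ convergence in $n$ just established---then yields $U(\widehat{X}^n_T) \ra U(\widehat{X}^\infty_T)$ in probability.

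Third, I would upgrade to $L^1$: the equalities $E[U(\widehat{X}^n_T)] = u^n(0) \ra u^\infty(0) = E[U(\widehat{X}^\infty_T)]$ paired with the non-positivity of $U$ allow Scheff\'e's lemma to convert the in-probability convergence to $L^1$. Finally, since $U(x) = -e^{-x}$ admits a continuous inverse on $(-\infty,0)$ and $U(\widehat{X}^\infty_T) < 0$ almost surely, convergence in probability of $U(\widehat{X}^n_T)$ automatically carries over to convergence in probability of $\widehat{X}^n_T$ itself.

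The main obstacle, I expect, will be in the second step: the use of the half-line machinery of \cite{MR2438002} to obtain $\widehat{X}^{n,k}_T \ra \widehat{X}^{\infty,k}_T$ in probability presumes the $V$-compactness and $Z^n_T \ra Z^\infty_T$ hypotheses, which must be taken to be in force as standing assumptions of the appendix. A secondary technical point is applying Scheff\'e's lemma carefully in the first step to handle signs and uniform integrability for the uniform-in-$n$ $L^1$ statement, since the monotone $k$-convergence is what drives that upgrade.
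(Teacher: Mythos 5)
Your argument is correct in substance, but it takes a genuinely different route from the paper's. The paper disposes of the first claim in one line by declaring the proof ``identical to Lemma~3.10 of \cite{MR2438002}'' --- i.e.\ it imports the dual-side convexity/optimality argument of the half-line setting directly --- and then gets the $L^1$ statement from Scheff\'e. You instead reconstruct the convergence from the paper's own primal approximation machinery, effectively rerunning the proof of Theorem~\ref{fin.mainthm2} with Proposition~\ref{unifco} supplying the uniform-in-$n$ approximation in place of Proposition~\ref{lastprop}; this is legitimate and has the virtue of making explicit exactly where the standing hypotheses of the appendix ($V$-compactness and $Z^n_T \ra Z^\infty_T$ in probability, needed already for Proposition~\ref{unifco} via Lemma~\ref{exp_lemma3}) enter --- a caveat you rightly flag. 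Two small points to tighten: Lemma~\ref{exp_lemma3} as stated yields only $u^{(n,k)} \ra u^{(\infty,k)}$, so for the fixed-$k$ convergence $\widehat{X}^{(n,k)}_T \ra \widehat{X}^{(\infty,k)}_T$ in probability you must invoke the companion terminal-wealth result of \cite{MR2438002} (its Lemma~3.10), exactly as the paper does in the proof of Theorem~\ref{fin.mainthm2}; and the uniform-in-$n$ convergence of $E\left[U\left(\widehat{X}^{(n,k)}_T\right)\right]$ is most cleanly extracted from the sandwich $u^{(n,k)}(0) \leq E\left[U\left(\widehat{X}^{(n,k)}_T\right)\right] \leq u^n(0)$ before applying Scheff\'e. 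Your final step is actually cleaner than the paper's: passing from $U(\widehat{X}^n_T) \ra U(\widehat{X}^\infty_T)$ in probability to $\widehat{X}^n_T \ra \widehat{X}^\infty_T$ via the continuous mapping theorem (using that $U^{-1}$ is continuous on $(-\infty,0)$ and the limit lies there a.s.) avoids the boundedness-in-probability detour that Theorem~\ref{fin.mainthm2} goes through.
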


\begin{proof} The proof of the first claim is identical to Lemma $3.10$ of \cite{MR2438002}, which establishes the result in the positive wealth case.  The second claim follows from Scheffe's Lemma.
\end{proof}
Let $d(\cdot,\cdot)$ be a metric whose topology is associated with the one corresponding to convergence in probability, i.e.
\[
d(X_n, X) \ra 0 \text{ if and only if } P(|X_n - X| > \epsilon) \ra 0 \text{ for all } \epsilon > 0.
\]

Recall that $\widehat{X}^{(n,k)}$ is the optimal wealth process in market $n$ satisfying the constraint $\widehat{X}^{(n,k)} > -k$.

\begin{lemma}\label{lemmaa.2} Suppose that $u^n \ra u^\infty$.  Then 
\[
\underset{k \ra \infty}{\lim} \ \underset{n}{\sup} \ d \left(\widehat{X}^{(n,k)}_T,\widehat{X}^n_T \right)=0.
\]
\end{lemma}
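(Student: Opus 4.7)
The plan is to exploit the strict concavity of the exponential utility via a convex combination argument, converting the closeness of the utilities $u^{(n,k)}(0)$ and $u^n(0)$ into a pointwise closeness of the associated optimizers.

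First, I would observe that since both $\widehat{X}^{(n,k)}$ and $\widehat{X}^n$ lie in $\mc{A}^n$, and the admissibility class is closed under convex combinations (the defining martingale property under each finite-entropy $\mb{Q} \in \mc{M}^n$ is linear), the wealth process $\frac{1}{2}(\widehat{X}^{(n,k)} + \widehat{X}^n)$ is itself in $\mc{A}^n$.  A direct algebraic manipulation using $U(x)=-e^{-x}$ gives the pointwise identity
\[
U\left(\tfrac{a+b}{2}\right) - \tfrac{1}{2}\bigl(U(a)+U(b)\bigr) \;=\; \tfrac{1}{2}\bigl(e^{-a/2} - e^{-b/2}\bigr)^2,
\]
which is an exponential version of the standard ``mass of strict concavity'' estimate.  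Setting $a = \widehat{X}^{(n,k)}_T$, $b = \widehat{X}^n_T$ and taking expectations, the admissibility of the midpoint yields
\[
E\left[\bigl(e^{-\widehat{X}^{(n,k)}_T/2} - e^{-\widehat{X}^n_T/2}\bigr)^2\right] \;\leq\; 2u^n(0) - E[U(\widehat{X}^{(n,k)}_T)] - E[U(\widehat{X}^n_T)] \;=\; u^n(0) - u^{(n,k)}(0).
\]

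Next, I would invoke Proposition~\ref{unifco}: the hypothesis $u^n \ra u^\infty$ implies $u^{(n,k)} \ra u^n$ uniformly in $n$ as $k \ra \infty$.  Combined with the previous display, this gives
\[
\underset{k \ra \infty}{\lim} \ \underset{n}{\sup} \ E\left[\bigl(e^{-\widehat{X}^{(n,k)}_T/2} - e^{-\widehat{X}^n_T/2}\bigr)^2\right] \;=\; 0,
\]
so $e^{-\widehat{X}^{(n,k)}_T/2} - e^{-\widehat{X}^n_T/2} \ra 0$ in $L^2$, hence in probability, uniformly in $n$.

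Finally, to translate this back to convergence of $\widehat{X}^{(n,k)}_T - \widehat{X}^n_T$ in probability uniformly in $n$, I would use the fact that $x \mapsto -2\log x$ is only \emph{locally} uniformly continuous on $(0,\infty)$; this is where tightness of the denominator family enters.  By Lemma~\ref{lemmaa.1}, $\widehat{X}^n_T \ra \widehat{X}^\infty_T$ in probability, and any such convergent sequence is tight, so for any $\epsilon>0$ there exists $M$ with $\sup_n P(|\widehat{X}^n_T|>M) < \epsilon/2$.  On the complementary event, $e^{-\widehat{X}^n_T/2}$ lies in the compact interval $[e^{-M/2},e^{M/2}]$, and a uniform continuity argument for $-2\log(\cdot)$ on a slightly larger compact set shows that $|e^{-\widehat{X}^{(n,k)}_T/2} - e^{-\widehat{X}^n_T/2}| < \delta$ forces $|\widehat{X}^{(n,k)}_T - \widehat{X}^n_T| < \epsilon$.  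A standard union bound then completes the proof.

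The cleanest pieces are steps one through three, which are purely algebraic once one writes down the concavity defect; the main delicate point is the last step, since the map $x \mapsto e^{-x/2}$ fails to be uniformly continuous globally, and so tightness of $\{\widehat{X}^n_T\}_n$ (provided by Lemma~\ref{lemmaa.1}) is essential to recover convergence of the wealths themselves from convergence of their exponentials.
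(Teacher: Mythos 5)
Your proof is correct, but it takes a genuinely different route from the paper's. The paper argues on the dual side: it invokes Step~2 of the proof of Theorem~2.2 in \cite{MR1865021} to get $d(\widetilde{Y}^{(n,k)}_T,\widehat{Y}^n_T)\ra 0$ for fixed $n$, observes that the rate is controlled by how fast $v^{(n,k)}\ra v^n$, upgrades this to uniformity in $n$ via Proposition~\ref{unifco} and the strict-convexity argument of Lemma~3.6 of \cite{MR1722287}, and finally transfers the conclusion to the primal optimizers through the first-order condition $U^{(k)\prime}(\widehat{X}^{(n,k)}_T)=u^{(n,k)\prime}(0)\widetilde{Y}^{(n,k)}_T$. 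You instead stay entirely on the primal side and exploit the explicit concavity defect of the exponential, $2U(\tfrac{a+b}{2})-U(a)-U(b)=(e^{-a/2}-e^{-b/2})^2$, together with admissibility of the midpoint strategy, to obtain the quantitative bound $E[(e^{-\widehat{X}^{(n,k)}_T/2}-e^{-\widehat{X}^n_T/2})^2]\le u^n(0)-u^{(n,k)}(0)$; Proposition~\ref{unifco} then supplies the uniformity in $n$, exactly as in the paper. (One small correction: the final ``$=$'' in your display should be ``$\le$'', since $E[U(\widehat{X}^{(n,k)}_T)]\ge E[U^{(k)}(\widehat{X}^{(n,k)}_T)]=u^{(n,k)}(0)$ rather than equality; the inequality goes in the direction you need, so nothing breaks.) Your argument buys a self-contained, quantitative $L^2$ rate in place of a ``careful reading'' of an external proof, and your last step --- using tightness of $\{\widehat{X}^n_T\}_n$ from Lemma~\ref{lemmaa.1} to invert $x\mapsto e^{-x/2}$, which is not globally uniformly continuous --- is actually more explicit than the paper's terse ``and the lemma follows,'' which faces the same inversion issue when passing from $\widetilde{Y}^{(n,k)}_T$ back to $\widehat{X}^{(n,k)}_T$. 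The price is that your computation is special to $U(x)=-e^{-x}$, whereas the dual route generalizes to other utilities.
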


\begin{proof}
On p. 708, Step $2$ of \cite{MR1865021}, it is established that, for fixed $n$, $d \left(\widetilde{Y}^{(n,k)}_T,\widehat{Y}^n_T \right) \ra 0$ as $k \ra \infty$; recall that $\widehat{Y}^n$ is the minimal dual variable arising from utility maximization with $U:\mb{R} \ra \mb{R}$ in the $n^{th}$ market, and $\widetilde{Y}^{(n,k)}$ is the minimal dual variable arising from utility maximization in the $n^{th}$ market with $\widetilde{U}^{(k)}:\mb{R}_+ \ra \mb{R}$, as defined in Section $4$.  For details, we refer the reader to \cite{MR1865021}.  A careful reading of this proof yields the fact that the rate of this convergence, for each $n$, is governed by the rate at which $v^{(n,k)}$ converges to $v^n$.  The hypothesis that $u^n \ra u$ is equivalent, by Proposition \ref{unifco}, to the uniform convergence of $u^{(n,k)}$ to $u^n$ as $k \ra \infty$, over all $n$.  By a standard duality argument, this is equivalent to $v^{(n,k)}$ converging to $v^n$ as $k \ra \infty$, uniformly over all $n$.  
Applying a standard argument based on optimality and strict convexity (see Lemma $3.6$ of \cite{MR1722287}), it therefore follows that
\[
\underset{k \ra \infty}{\lim} \ \underset{n}{\sup} \ d \left(\widetilde{Y}^{(n,k)}_T,\widehat{Y}^n_T \right)=0.
\]

By duality, we have $U^{(k)'}(\widehat{X}^{(n,k)}_T) = u^{(n,k)'}(0)\widetilde{Y}^{(n,k)}_T$, and the lemma follows.
\end{proof}

\begin{corollary}\label{cora.3} Suppose that $u^n \ra u^\infty$.  Then 
\[
\underset{k \ra \infty}{\lim} \ \underset{n}{\sup} \ ||U(\widehat{X}^{(n,k)}_T)- U(\widehat{X}^n_T)||_{L^1}=0.
\]
\end{corollary}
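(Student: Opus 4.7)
The plan is to reduce the $L^1$ claim to a manipulation using the sign of $U$ together with Lemma~\ref{lemmaa.2} and Lemma~\ref{lemmaa.1}. Throughout, write $f_{n,k} \triangleq U(\widehat{X}^{(n,k)}_T)$ and $g_n \triangleq U(\widehat{X}^n_T)$, both of which are nonpositive since $U < 0$, and satisfy $E[f_{n,k}] = u^{(n,k)}(0) \leq u^n(0) = E[g_n]$.

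The first step is the decomposition
\[
\|f_{n,k} - g_n\|_{L^1} \;=\; 2\, E\!\left[(f_{n,k} - g_n)^+\right] \,+\, \bigl(u^n(0) - u^{(n,k)}(0)\bigr),
\]
which follows from $|a-b| = 2(a-b)^+ - (a-b)$. The second summand tends to zero uniformly in $n$ as $k \to \infty$, since the hypothesis $u^n \to u^\infty$ and Proposition~\ref{unifco} give exactly $\sup_n |u^{(n,k)}(0) - u^n(0)| \to 0$.

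For the first summand, the key observation is the pointwise bound $(f_{n,k} - g_n)^+ \leq -g_n = |g_n|$, valid because $f_{n,k} \leq 0$. From Lemma~\ref{lemmaa.1}, $g_n \to g_\infty$ in $L^1$, so the family $\{|g_n|\}$ is uniformly integrable; in particular it dominates $\{(f_{n,k}-g_n)^+\}_{n,k}$ uniformly. Lemma~\ref{lemmaa.1} also yields $\widehat{X}^n_T \to \widehat{X}^\infty_T$ in probability, which in particular makes $\{\widehat{X}^n_T : n \in \mb{N}\}$ tight. Combining tightness with Lemma~\ref{lemmaa.2}, for any $\epsilon > 0$ I would choose a compact interval $K$ capturing $\widehat{X}^n_T$ up to probability $\epsilon/2$ uniformly in $n$, then invoke uniform continuity of $U$ on a slight enlargement of $K$ to transfer the uniform-in-$n$ convergence of $\widehat{X}^{(n,k)}_T - \widehat{X}^n_T$ in probability to uniform-in-$n$ convergence of $f_{n,k} - g_n$ in probability.

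Once these two ingredients are in hand, a standard $\epsilon$-$\delta$ truncation finishes the job: for any $\epsilon>0$, pick $M$ with $\sup_n E[|g_n| 1_{|g_n|>M}] < \epsilon$ via uniform integrability; then choose $\eta < \epsilon$ and use the uniform convergence in probability to pick $k_0$ with $\sup_n P(|f_{n,k}-g_n|>\eta) < \epsilon/M$ for $k \geq k_0$; this gives
\[
E[(f_{n,k}-g_n)^+] \;\leq\; \eta \,+\, E[|g_n| 1_{|g_n|>M}] \,+\, M\,P(|f_{n,k}-g_n|>\eta) \;<\; 3\epsilon
\]
for all $n$ and $k \geq k_0$. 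The main obstacle is the transfer step through the nonuniformly continuous map $U$: without the tightness of $\{\widehat{X}^n_T\}$ provided by Lemma~\ref{lemmaa.1}, a perturbation of $\widehat{X}^n_T$ at very negative wealth levels could explode under $U$ and destroy the uniformity, so the interplay of the tightness with Lemma~\ref{lemmaa.2} is the only delicate point in the proof.
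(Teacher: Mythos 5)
Your argument is correct and follows essentially the paper's own route: the paper's one-line proof invokes Lemma \ref{lemmaa.2}, Proposition \ref{unifco}, and Scheff\'{e}'s Lemma, and your decomposition $|a-b| = 2(a-b)^+ - (a-b)$ together with the domination $(f_{n,k}-g_n)^+ \leq |g_n|$ (with uniform integrability of $\{|g_n|\}$ from Lemma \ref{lemmaa.1} and the tightness/local-Lipschitz step to push Lemma \ref{lemmaa.2} through $U$) is precisely the uniform-over-$n$ version of Scheff\'{e} that the paper leaves implicit. One small imprecision: $E\left[U\left(\widehat{X}^{(n,k)}_T\right)\right]$ need not equal $u^{(n,k)}(0)$ exactly, since $U^{(k)} < U$ on $(-k-1,-k)$, but it is squeezed between $u^{(n,k)}(0)$ and $u^n(0)$, so your second summand still tends to zero uniformly in $n$ by Proposition \ref{unifco}.
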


\begin{proof}
The result follows by applying Lemma \ref{lemmaa.2} and Proposition \ref{unifco}, along with Scheffe's Lemma.
\end{proof}

\begin{corollary}\label{cora.4} Suppose $u^n \ra u^\infty$.  Then the set $\{U(\widehat{X}^{(n,k)}_T) : n,k \}$ is uniformly integrable.
\end{corollary}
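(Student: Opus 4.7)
The plan is to verify the Dunford--Pettis characterization of uniform integrability: $L^1$-boundedness together with uniform absolute continuity. Both ingredients will come from combining Corollary \ref{cora.3} with the $L^1$-convergence supplied by Lemma \ref{lemmaa.1}.

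For $L^1$-boundedness, Lemma \ref{lemmaa.1} identifies $\{U(\widehat{X}^n_T)\}_{n \in \mb{N}}$ as an $L^1$-convergent sequence; in particular it is $L^1$-bounded and uniformly integrable. A triangle inequality in $L^1$ combined with Corollary \ref{cora.3} (which supplies $\sup_n \|U(\widehat{X}^{(n,k)}_T) - U(\widehat{X}^n_T)\|_{L^1} \ra 0$ as $k \ra \infty$) immediately yields that the double-indexed family $\{U(\widehat{X}^{(n,k)}_T) : n,k \in \mb{N}\}$ is bounded in $L^1$.

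For uniform absolute continuity I would fix $\epsilon > 0$ and split the range of $k$ at a threshold $k_0$ chosen via Corollary \ref{cora.3}: pick $k_0$ so that for every $k \geq k_0$ and every $n$, $\|U(\widehat{X}^{(n,k)}_T) - U(\widehat{X}^n_T)\|_{L^1} < \epsilon/2$. Uniform integrability of $\{U(\widehat{X}^n_T)\}_n$ then furnishes $\delta > 0$ with $P(A) < \delta \Rightarrow \sup_n E[|U(\widehat{X}^n_T)| 1_A] < \epsilon/2$. A one-line triangle argument,
\[
E[|U(\widehat{X}^{(n,k)}_T)| 1_A] \leq E[|U(\widehat{X}^n_T)| 1_A] + \|U(\widehat{X}^{(n,k)}_T) - U(\widehat{X}^n_T)\|_{L^1},
\]
then produces the required bound by $\epsilon$ for all $k \geq k_0$ and all $n$.

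The remaining finitely many levels $k < k_0$ pose no difficulty. By construction, the optimal wealth $\widehat{X}^{(n,k)}$ is confined to the domain of $U^{(k)}$, so $|U(\widehat{X}^{(n,k)}_T)|$ is bounded pointwise by a constant that depends only on $k$, uniformly in $n$. Shrinking $\delta$ as needed to absorb this finite collection completes uniform absolute continuity. I do not foresee any serious technical obstacle; the only care required is the simultaneous bookkeeping of the two ranges of $k$ within a single $\delta$, and the real work has already been done in Corollary \ref{cora.3}, which ``transfers'' the uniform integrability of the optimal-wealth family to its truncations.
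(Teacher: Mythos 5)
Your proof is correct and follows the same route as the paper, whose own proof of Corollary \ref{cora.4} is just the one-line statement that it ``follows by applying Lemma \ref{lemmaa.1} and Corollary \ref{cora.3}''; you have simply written out the standard triangle-inequality bookkeeping (large $k$ via the uniform $L^1$ approximation plus uniform integrability of the $L^1$-convergent family $\{U(\widehat{X}^n_T)\}_n$, small $k$ via the pointwise bound $|U(\widehat{X}^{(n,k)}_T)| \leq e^{k+1}$ coming from $\widehat{X}^{(n,k)} > -k-1$). No gaps.
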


\begin{proof} The result follows by applying Lemma \ref{lemmaa.1} and Corollary \ref{cora.3}.
\end{proof}

\begin{lemma}\label{lemmaa.5} Suppose that $u^n \ra u^\infty$.  Then
\begin{equation}\label{lemmaa.5eq1}
\underset{k \ra \infty}{\lim} \ \underset{n}{\sup} \ d \left((\widehat{X}^{(n,k)} - \widehat{X}^n)^*,0 \right) = 0.
\end{equation}
\end{lemma}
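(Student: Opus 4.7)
The plan is to lift the uniform $L^1$-convergence $U(\widehat{X}^{(n,k)}_T) \to U(\widehat{X}^n_T)$ supplied by Corollary~\ref{cora.3} to the desired convergence of pathwise suprema. The main tool is a Doob-type maximal inequality under the minimal entropy martingale measure $\widehat{\mathbb{Q}}^n$ (with density $\widehat{Z}^n_T$), followed by a uniform-in-$n$ change of measure back to $P$. Throughout set $D^{(n,k)} := \widehat{X}^{(n,k)} - \widehat{X}^n$.

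First, under $\widehat{\mathbb{Q}}^n$, $\widehat{X}^n$ is a true martingale by Theorem~$2.2$ of \cite{MR1865021}, and $\widehat{X}^{(n,k)} \in \mc{A}^n_b$ is a supermartingale since it is a stochastic integral against $S^n$ bounded below by a constant. Hence $D^{(n,k)}$ is a $\widehat{\mathbb{Q}}^n$-supermartingale starting from zero. Applying the supermartingale maximal inequality to $D^{(n,k)}$ and Doob's submartingale inequality to $-D^{(n,k)}$ yields
\[
\widehat{\mathbb{Q}}^n\bigl(|D^{(n,k)}|^* \ge \lambda\bigr) \le \frac{2}{\lambda}\,E^{\widehat{\mathbb{Q}}^n}\bigl[(\widehat{X}^n_T - \widehat{X}^{(n,k)}_T)^+\bigr], \qquad \lambda > 0.
\]

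To bound the right-hand side, I would use the first-order condition $\widehat{Z}^n_T = c_n e^{-\widehat{X}^n_T}$ with $c_n = -u^n(0)$, both recorded in the proof of Proposition~\ref{prop3}. Combined with the convexity estimate $e^{-y} - e^{-x} \ge e^{-x}(x-y)$ for $x \ge y$, this gives the pointwise bound $\widehat{Z}^n_T (\widehat{X}^n_T - \widehat{X}^{(n,k)}_T)^+ \le c_n\bigl(U(\widehat{X}^n_T) - U(\widehat{X}^{(n,k)}_T)\bigr)^+$. Since $u^n(0) \to u^\infty(0) \in (-\infty, 0)$ implies $\sup_n c_n < \infty$, Corollary~\ref{cora.3} then yields $\sup_n E^{\widehat{\mathbb{Q}}^n}[(\widehat{X}^n_T - \widehat{X}^{(n,k)}_T)^+] \to 0$ as $k \to \infty$.

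The final and most delicate step is to transfer this $\widehat{\mathbb{Q}}^n$-probability estimate to a $P$-probability estimate uniformly in $n$; the densities $\widehat{Z}^n_T$ need not be uniformly bounded away from zero, so the change of measure requires care. I would proceed via the elementary bound $P(A) \le \eta^{-1}\widehat{\mathbb{Q}}^n(A) + P(\widehat{Z}^n_T \le \eta)$. The tightness of $\{\widehat{X}^n_T\}$ supplied by Lemma~\ref{lemmaa.1}, combined with $\widehat{Z}^n_T = c_n e^{-\widehat{X}^n_T}$ and $\inf_n c_n > 0$, shows that $\{\widehat{Z}^n_T\}$ is bounded away from zero in probability uniformly in $n$. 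Given $\delta, \lambda > 0$, one first chooses $\eta$ so that $\sup_n P(\widehat{Z}^n_T \le \eta) < \delta/2$, then $k$ large enough that the preceding estimates make the first term less than $\delta/2$ uniformly in $n$, yielding $\sup_n P(|D^{(n,k)}|^* \ge \lambda) < \delta$ and completing the proof.
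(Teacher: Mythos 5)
Your proof is correct, but it takes a genuinely different route from the paper's. The paper argues by contradiction: if $(\widehat{X}^{(n_m,k_m)} - \widehat{X}^{n_m})^* > \alpha$ with probability greater than $\alpha$ along a subsequence, it concatenates $\widehat{X}^{(n_m,k_m)}$ up to the first crossing time with the subsequent increments of $\widehat{X}^{n_m}$, producing an admissible strategy that, via Corollary \ref{cora.3} and a lower bound on $U'$ over a compact set carrying most of the mass, strictly beats the optimizer for large $m$ --- contradicting optimality. You instead argue quantitatively under $\widehat{\mb{Q}}^n$: the difference $D^{(n,k)}$ is a $\widehat{\mb{Q}}^n$-supermartingale null at zero (martingale property of $\widehat{X}^n$ plus the Ansel--Stricker supermartingale property of elements of $\mc{A}^n_b$), so the two one-sided maximal inequalities bound $\widehat{\mb{Q}}^n\left(|D^{(n,k)}|^* \geq \lambda\right)$ by $\tfrac{2}{\lambda}E^{\widehat{\mb{Q}}^n}\left[(\widehat{X}^n_T - \widehat{X}^{(n,k)}_T)^+\right]$; the first-order condition $\widehat{Z}^n_T = c_n e^{-\widehat{X}^n_T}$ together with convexity of $x \mapsto e^{-x}$ converts that expectation into the $L^1(P)$-distance of the utilities, which Corollary \ref{cora.3} sends to zero uniformly in $n$; and the return to $P$ is handled by the tightness of $\{\widehat{X}^n_T\}$ from Lemma \ref{lemmaa.1}, which keeps $\widehat{Z}^n_T$ uniformly bounded away from zero in probability. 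Both arguments are sound and both control the two-sided maximal deviation; the paper's uses only optimality and the martingale property of admissible wealths under finite-entropy measures, whereas yours trades the concatenation construction and case analysis for an explicit estimate (a rate in terms of $\sup_n ||U(\widehat{X}^{(n,k)}_T)-U(\widehat{X}^n_T)||_{L^1}$, up to the change of measure), at the cost of invoking the dual characterization of $\widehat{Z}^n$. All the ingredients you cite are available at this point in the appendix.
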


\begin{proof} Suppose that \eqref{lemmaa.5eq1} does not hold.  Then, there exists a sequence $(n_m,k_m)_{m \geq 1}$ and $\alpha > 0$ such that
\[
P \left( (\widehat{X}^{(n_m,k_m)} - \widehat{X}^{n_m})^* > \alpha \right) > \alpha
\]
for each $m$.  Let $\tau_m = \inf \{ t \geq 0 : \widehat{X}^{(n_m,k_m)}_t \geq \widehat{X}^{n_m}_t + \alpha \} \wedge T$, and let $\widetilde{\tau}_m = \inf \{ t \geq 0 : \widehat{X}^{(n_m,k_m)}_t \leq \widehat{X}^{n_m}_t - \alpha \} \wedge T$.  It must be the case that either $P(\tau_m < T) > \frac{\alpha}{2}$ or $P(\widetilde{\tau}_m < T) > \frac{\alpha}{2}$.  The treatment of each contingency is similar, and so without loss of generality, we assume that $P(\tau_m < T) > \frac{\alpha}{2}$.   Consider the concatenated wealth process $\widetilde{X}^{n_m}_t \triangleq \widehat{X}^{(n_m,k_m)}_{t \wedge \tau_m} + (\widehat{X}^{n_m}_{t \vee \tau_m} - \widehat{X}^{n_m}_{\tau_m})$.  For any $\mb{Q} \in \mc{M}^n$ with finite entropy, $\widehat{X}^{(n_m, k_m)}$ and $\widehat{X}^{n_m}$ are $\mb{Q}$-martingales, since they are admissible wealth processes in the sense of Definition \ref{admdef}.  Since the concatenation of martingales yields a martingale, $\widetilde{X}^{n_m}$ is a $\mb{Q}$-martingale for any $\mb{Q} \in \mc{M}^n$ with finite entropy, so this concatenated strategy is still admissible.  On the set $\{\tau_m < T\}$, $\widetilde{X}^{n_m}_T \geq \alpha + \widehat{X}^{n_m}_T$, and on the set $\{\tau_m < T\}^c$, $\widetilde{X}^{n_m}_T = \widehat{X}^{(n_m, k_m)}_T$.  As in the proof of Theorem \ref{fin.mainthm2}, for any $\epsilon > 0$, there is a compact subset $K = K(\epsilon)$ of $\mb{R}$ such that 
\begin{equation}\label{lasteq5}
\max \left \{ P(\widehat{X}^{n_m}_T - \alpha \not \in K), P(\widehat{X}^{(n_m, k_m)}_T - \alpha \not \in K) \right \}< \epsilon
\end{equation}
for all $m$, and $U'(x) \geq c = c(\epsilon)$ for $x \in K$.  We will fix some $\epsilon < \frac{\alpha}{2}$.

 Thus,

{\small{
\begin{eqnarray*}
\lefteqn{E \left[ U \left(\widetilde{X}^{n_m}_T \right) \right]} \\
&& \geq E \left[ 1_{\{\tau_m < T\}} U \left(\widehat{X}^{n_m}_T + \alpha \right) \right] + E \left[ 1_{\{ \tau_m = T \}}U \left( \widehat{X}^{(n_m,k_m)}_T \right) \right] \\
&& \geq E \left[ 1_{\{\tau_m < T\}} U' \left(\widehat{X}^{n_m}_T + \alpha \right) \cdot \alpha \right] + E \left[ 1_{\{\tau_m < T\}} U \left(\widehat{X}^{n_m}_T\right) \right] + E \left[ 1_{\{ \tau_m = T \}}U \left( \widehat{X}^{(n_m,k_m)}_T \right) \right].
\end{eqnarray*}
}}
By Corollary \ref{cora.3}, we have 
\[
E \left[ 1_{\{\tau_m < T\}} U \left(\widehat{X}^{n_m}_T\right) \right] + E \left[ 1_{\{ \tau_m = T \}}U \left( \widehat{X}^{(n_m,k_m)}_T \right) \right] \ra E \left[ U \left(\widehat{X}^{n_m}_T\right) \right],
\]
as $m \ra \infty$.

From \eqref{lasteq5}, we know that $U' \left(\widehat{X}^{n_m}_T + \alpha \right) \geq c$ up to a set of measure $\epsilon$.  We then have
\[
\underset{m \ra \infty}{\liminf} \ \left(E \left[ U \left(\widetilde{X}^{n_m}_T \right) \right] -  E \left[ U \left( \widehat{X}^{n_m}_T \right) \right] \right)\geq \left(\frac{\alpha}{2} - \epsilon \right)c \alpha > 0.
\]
This, however, contradicts the optimality of $\widehat{X}^{n_m}_T$ when $m$ is sufficiently large.
\end{proof}

Now we can prove the main result of this section.

\begin{proposition}\label{propa.1} Suppose that $u^n \ra u^\infty$.  Then for each $i>0$, the set $\{(\widehat{X}^{(n,i)})^* : n \in \mb{N} \}$ is bounded in probability.
\end{proposition}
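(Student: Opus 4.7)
The plan is to combine Lemma~\ref{lemmaa.5}, which delivers uniform-in-$n$ closeness of the bounded optimizers $\widehat{X}^{(n,k)}$ to the true optimizers $\widehat{X}^n$, with the pathwise continuity of $\widehat{X}^n$. Continuity holds because $M$ is a continuous local martingale, so every admissible wealth process has the form $H\cdot S^n = H\cdot M + \int H\lambda^n\,d\langle M\rangle$, which is continuous. Since $\widehat{X}^n_0 = 0 < i$, continuity forces $\widehat{X}^{(n,i)}_t \leq i$ for every $t \in [0,T]$, so $(\widehat{X}^{(n,i)})^* = \max\bigl(i,\, -\inf_t \widehat{X}^{(n,i)}_t\bigr)$. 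It therefore suffices to show that for every $\epsilon > 0$ there exists $K > i$ with $\sup_n P\bigl(\widehat{\tau}^{(n,-K)} \leq \widehat{\tau}^{(n,i)}\bigr) < \epsilon$.

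Fix $\epsilon > 0$ and use Lemma~\ref{lemmaa.5} to pick $k_0 = k_0(\epsilon)$ such that $\sup_n P\bigl((\widehat{X}^{(n,k_0)} - \widehat{X}^n)^* > 1\bigr) < \epsilon$, where here $\widehat{X}^{(n,k_0)}$ is the bounded optimizer from the body of Appendix~A, which by its construction via the shifted utility $\widetilde{U}^{(k_0)}$ satisfies $\widehat{X}^{(n,k_0)}_t > -(k_0+1)$ pathwise. I then take any $K$ with $K > \max(i,\, k_0+2)$. On the event $\{\widehat{\tau}^{(n,-K)} \leq \widehat{\tau}^{(n,i)}\}$, continuity of $\widehat{X}^n$ produces a stopping time $\sigma = \widehat{\tau}^{(n,-K)} \leq T$ at which $\widehat{X}^n_\sigma = -K$, while still $\widehat{X}^{(n,k_0)}_\sigma > -(k_0+1)$. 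Therefore $\widehat{X}^{(n,k_0)}_\sigma - \widehat{X}^n_\sigma > K - k_0 - 1 > 1$, so the event sits inside $\{(\widehat{X}^{(n,k_0)} - \widehat{X}^n)^* > 1\}$, whose probability is below $\epsilon$ uniformly in $n$. This delivers the required $K$ and completes the argument.

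The main thing to keep straight is the double use of the superscript-pair notation: in Lemma~\ref{lemmaa.5} the $k$ in $\widehat{X}^{(n,k)}$ stands for the lower-bounded optimizer associated with the truncated utility $U^{(k)}$ of Section~\ref{exp_approx}, whereas in Proposition~\ref{propa.1} the $i$ in $\widehat{X}^{(n,i)}$ stands for the process stopped at $\widehat{\tau}^{(n,i)}$. No further obstacle arises, since the substantive content, namely the uniform closeness of the bounded and true optimizers, has already been packaged into Lemma~\ref{lemmaa.5}; the remaining step is simply the observation that a deep downward excursion of $\widehat{X}^n$ below $-(k_0+1)$ would drag it apart from $\widehat{X}^{(n,k_0)}$ by more than $1$, contradicting the chosen uniform bound.
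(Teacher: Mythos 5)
Your argument is correct and takes essentially the same route as the paper's (much terser) proof: the paper likewise combines the built-in pathwise lower bound on the truncated optimizers $\widehat{X}^{(n,k)}$ with Lemma~\ref{lemmaa.5} to control $\inf_{0\leq t\leq T}\widehat{X}^n_t$ in probability uniformly in $n$, the upper bound $\widehat{X}^{(n,i)}\leq i$ being immediate from continuity of the wealth processes and the definition of $\widehat{\tau}^{(n,i)}$. Your write-up merely makes explicit the choice of $k_0$ and $K$ and the event containment that the paper leaves implicit.
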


\begin{proof}  It is true by construction that for each $k$, $\left \{ \underset{0 \leq t \leq T}{\inf} \widehat{X}^{(n,k)}_t : n \in \mb{N} \right \}$ is bounded in probability, since $\widehat{X}^{(n,k)} > - k$.  To conclude, it only remains to apply Lemma \ref{lemmaa.5}.

\end{proof}

\section{Uniformly Integrable Wealth Processes: A Brief Counterexample}\label{exp_app2}

The next proposition is based directly from an example of \cite{Schachermayer00howpotential}, which can be easily modified to fit the setting of this paper.  It addresses the second claim made in Remark \ref{lemma5remark}.

\begin{proposition}\label{propa.2} There exists a single market for which the optimal wealth process $(\widehat{X}_t)_{0 \leq t \leq T}$ does not have $\{\exp(-\widehat{X}_\tau) : \tau \in \mc{T} \}$ uniformly integrable.
\end{proposition}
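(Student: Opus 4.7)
The plan is to construct, modeled on Schachermayer's example, a single continuous market on $[0,T]$ whose optimal exponential wealth process $\widehat{X}$ has unbounded downside when viewed along a suitable sequence of stopping times, even though the terminal value $\widehat{X}_T$ is well-behaved enough for the standard utility maximization problem to be well posed. Concretely, I would pick an increasing sequence $t_n \uparrow T$ and build $M$ as a concatenation of essentially independent continuous local martingale ``segments'' $M^{(n)}$ supported on the intervals $[t_{n-1},t_n]$, each carrying a drift $\lambda^{(n)}$ designed so that on round $n$ the agent is offered a favorable wager with a small probability $p_n$ of a large loss of magnitude $a_n$. The parameters $(p_n, a_n)$ would be tuned so that $Z = \mc{E}(-\lambda \cdot M)$ is a true martingale, the sum of utility contributions from each round converges (so $E[U(\widehat{X}_T)] > -\infty$ and the problem is well posed), and yet $p_n e^{a_n} \to \infty$.

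Next, I would identify the optimal wealth process. Because the rounds live on disjoint time intervals and are independent, and because exponential utility is translation-invariant in wealth, the global optimization problem decouples into a sequence of independent per-round problems. The optimal $\widehat{X}$ is therefore the concatenation of the round-by-round optimizers, and on the ``bad'' event $B_n$ of round $n$ we have $\widehat{X}_{t_n} \leq -a_n + c_n$, where $c_n$ is a bounded correction capturing the cumulative gains from the prior rounds, uniformly controlled by construction.

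Finally, I would take $\tau_n = t_n 1_{B_n} + T 1_{B_n^c}$ and estimate
\[
E\bigl[\exp(-\widehat{X}_{\tau_n})\bigr] \geq p_n \, e^{a_n - c_n} \longrightarrow \infty,
\]
which contradicts uniform integrability of the family $\{\exp(-\widehat{X}_\tau) : \tau \in \mc{T}\}$. The main obstacle will be the joint calibration of $p_n$, $a_n$, and the per-round magnitudes: one must simultaneously keep the minimal martingale measure a genuine martingale, keep the terminal utility finite so that the optimization is well posed and the round-by-round strategy is admissible in the sense of Definition \ref{admdef}, verify that this decoupled strategy is actually the global maximizer (which follows from the additive separability of the log-optimal payoff across independent rounds for exponential utility), and yet make the conditional downside at time $t_n$ on $B_n$ explode in the exponential sense. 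This is precisely the balance struck by Schachermayer's original construction, which transfers to the present setting after recasting it as a continuous market $S = M + \int \lambda \, d\langle M \rangle$ driven by a time-changed Brownian motion with the prescribed drift $\lambda$.
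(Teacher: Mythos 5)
Your approach is essentially the paper's: both rest on Schachermayer's example of a market whose optimal exponential wealth process satisfies $E[\exp(-\widehat{X}_t)] \to \infty$ as $t \uparrow T$, except that the paper simply cites that example (and the divergence proved there) rather than reconstructing it round by round. Note also that your stopping times $\tau_n = t_n 1_{B_n} + T 1_{B_n^c}$ are unnecessary: uniform integrability of $\{\exp(-\widehat{X}_\tau) : \tau \in \mc{T}\}$ already forces $L^1$-boundedness over the deterministic times $t_n \uparrow T$, so the divergence of $E[\exp(-\widehat{X}_{t_n})]$ alone yields the contradiction, which is exactly how the paper concludes.
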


\begin{proof}  Consider the example introduced on p. 13 of \cite{Schachermayer00howpotential}.  In that market, it is shown on p. 19 that the optimal wealth process $\widehat{X}$ satisfies $\underset{t \uparrow T}{\lim} \ E [-\exp(-\widehat{X}_t)] = - \infty$.  This clearly is not possible if  $\{\exp(-\widehat{X}_\tau) : \tau \in \mc{T} \}$ is uniformly integrable.
\end{proof}martingale measure and minimal entropy martingale measure %coincide.  Then $\underset{n \ra \infty}{\lim} \ v^n(y) = v^\infty(y)$.  Hence, $\underset{n \ra \infty}{\lim} \ u^n(x) = u^\infty(x)$.

\bibliographystyle{plain}
\bibliography{BMO2ref}
\end{document}